\newtheorem{lemma}{Lemma}
\newtheorem{theorem}{Theorem}
\newtheorem{fact}{Fact}
\newtheorem{corollary}{Corollary}
\newtheorem{definition}{Definition}
\newtheorem{claim}{Claim}
\newcommand{\cO}{{\mathcal O}}
\newcommand{\cM}{{\mathcal M}}
\newcommand{\cF}{{\mathcal F}}
\newcommand{\safba}{{Strongly-Adaptive $f$-Bounded}}
\newcommand{\walba}{{Weakly-Adaptive Linearly-Bounded}}
\newcommand{\GT}{\textsc{Groups-Together}}
\newcommand{\DA}{\textit{Do-All}}
\newcommand{\Grubtech}{\textsc{GrubTEch}}
\newcommand{\Gilet}{\textsc{GILET}}
\newcommand{\Robal}{\textsc{ROBAL}}
\newcommand{\remove}[1]{}
\newcommand{\jm}[1]{#1}
\newcommand{\dk}[1]{#1}
\newcommand{\mj}[1]{#1}
\newcommand{\jmii}[1]{#1}
\titlespacing{\paragraph}{%
  0pt}{
  0.1\baselineskip}{
  1em}
\begin{document}

\title{Ordered and Delayed Adversaries\\ 
and How to Work against Them on a Shared Channel}

\author{Marek Klonowski\footnotemark[1] \and 
Dariusz R. Kowalski\footnotemark[2] \and 
Jarosław Mirek\footnotemark[2]}

\footnotetext[1]{
			Faculty of Fundamental Problems of Technology,
			Wrocław University of Technology, \\
			Wybrzeże Wyspiańskiego 27, 50-370 Wrocław, Poland.\\
Email: \texttt{Marek.Klonowski@pwr.edu.pl}
}

\footnotetext[2]{
                	Department of Computer Science,
                	University of Liverpool,\\
                	Ashton Building, Ashton Street,
               		Liverpool L69 3BX, UK.\\
Email: \texttt{\{D.Kowalski,J.Mirek\}@liverpool.ac.uk}
}
\footnotetext{Partially supported by Polish National Science Center 
grant 2015/17/B/ST6/01897.}
\date{}

\maketitle

\begin{abstract}
\dk{An execution of a distributed algorithm is often seen}  
as a game between
the algorithm and a conceptual adversary
causing specific distractions \jm{to the computation}.
In this work we define a class of \textit{ordered \dk{adaptive} adversaries}, which 
cause distractions \dk{--- in particular crashes --- online} according to some partial order 
\dk{of the participating stations}, which is fixed by the adversary
before the execution. 
\dk{We distinguish: 
\textit{Linearly-Ordered adversary}, which is restricted by some pre-defined
{\em linear} order of (potentially) crashing stations;
\textit{Anti-Chain-Ordered adversary}, 
previously known as the \textit{Weakly-Adaptive adversary},
which is restricted by some pre-defined {\em set} of crash-prone stations
(it can be seen as an ordered adversary with the order being an anti-chain,
i.e., a collection of incomparable elements,
consisting of these stations);
\textit{$k$-Thick-Ordered adversary} restricted by
{\em partial orders} of 
stations with a maximum anti-chain of size $k$.}
\dk{We initiate a study of} 
how they affect performance of algorithms.
For this purpose, we focus on the well-known Do-All problem of performing $t$ 
tasks 
\dk{by $p$ synchronous crash-prone stations communicating on a shared channel}.
The channel restricts communication by the fact that no message is delivered
to the \jm{operational} stations if more than one station transmits
at the same time.
%
\\
The question addressed in this work is how the ordered adversaries 
controlling crashes of stations influence work performance,
\dk{defined as the total number of available processor 
steps during the whole execution and introduced by Kanellakis and Shvartsman
in \cite{KS} in the context of Write-All algorithms.}
%
%
The first \dk{presented} algorithm solves the Do-All problem with work $\cO(t+p \sqrt{t}\log p)$
against the Linearly-Ordered adversary. 
Surprisingly, the upper bound on performance of this algorithm does not 
depend on the number of crashes $f$ and is close to the absolute lower bound
$\Omega(t+p\sqrt{t})$ proved in~\cite{CKL}.
Another algorithm is developed against the Weakly-Adaptive adversary.
Work done by this algorithm is $\mathcal{O}(t + p\sqrt{t} + p\min\left\{p/(p-f),t\right\}\log p )$,
which is close to the lower bound $\Omega(t + p\sqrt{t} + p\min\left\{p/(p-f),t\right\})$
proved in~\cite{CKL} \dk{and answers the open questions posed there.}
We generalize this result to the class of $k$-Thick-Ordered adversaries, 
in which case the work of the algorithm is bounded by 
$\mathcal{O}(t + p\sqrt{t} + p\min\left\{p/(p-f),k,t\right\}\log p )$.
We complement this result by \dk{proving} the almost matching lower bound
$\Omega(t + p\sqrt{t} + p\min\left\{p/(p-f),k,t\right\})$.
\\
\dk{Independently from}
the results for the ordered adversaries, we consider a class
of \textit{delayed adaptive adversaries}, 
which could see random choices with
some delay. 
We present an algorithm that works efficiently against the \textit{$1$-RD} 
adversary, which could see random choices of stations with one round delay,
achieving close to optimal $\cO(t+p \sqrt{t}\log^{2} p)$ work complexity. 
This shows that restricting 
\dk{the adversary by not allowing it to react on random decisions
immediately makes it significantly weaker, in the sense that there is an algorithm
achieving (almost) optimal work performance.}
%
%

\remove{
This leads to several randomized algorithms that we created.

To analyze performance of our algorithms we adopt the 
most popular complexity measure in the literature to this problem, which is work, defined as the number of the available processor steps
during the whole computation; we also comment on
the other two popular measures such as time and 
transmission energy.

We consider only \textit{reliable} algorithms that perform all the tasks  
as long as at least one station remains operational.

; we also comment on
the other two popular measures such as time and 
transmission energy.

Finally we give some remarks on the optimal solutions for the beeping model, what makes the problem solved for such setting.}

\noindent
{\bf Keywords:} Performing tasks, Do-All, Shared channel, Multiple-access channel, 
Ordered adversaries, Delayed adversaries, Crash failures, 
Distributed algorithms, Randomized Algorithms, Work complexity, Time complexity,
Transmission energy complexity. 
%
\end{abstract}

\setcounter{page}{0}
\thispagestyle{empty}

\pagebreak

\section{Introduction}
\label{introduction}

We consider the problem of performing $t$ similar and independent tasks 
in a distributed system prone to processor crashes. 
This problem, called \DA, was introduced by 
Dwork et al.~\cite{DHW} 
in the context of a message-passing system.
Over the years the \DA\ problem became a pillar of distributed computing
and has been studied widely from different perspectives \cite{GSbook2,GSbook}.

The distributed system studied in our paper is based on communication
over a shared channel, also called a multiple-access channel, and was first 
studied in the context of \DA\ problem by Chlebus et al.~\cite{CKL}. 
%
\dk{The communication channel  and the $p$ crash-prone processors, also called stations, connected to it
are synchronous.}  
\jm{They have access to a global clock,} 
which defines the same rounds for all \jm{operational} stations.
A message sent by a station at a round is received by all \dk{operational
(i.e., not yet crashed)} stations
only if it is the only transmitter in this round; we call such a transmission
successful.
Otherwise, unless stated differently,
\remove{
\footnote{%
There are two places in the paper referring to a channel with slightly enriched
feedback. One such place is when we recall algorithms
working in the setting with collision detection, i.e., when each station
receives additional channel feedback whether there was more than one transmitter
at the round, and show how to emulate such algorithms on a simple channel 
without such capability. Another part is at the end of this work and provides
some facts about the beeping channel, i.e., a channel when the only 
information from the channel is whether there was at least one transmitter at the round.
}
}
we assume that no station receives 
any meaningful feedback from the channel medium, except an acknowledgment 
of its successful transmission;
\dk{this setting is called \textit{without collision detection}, oppose to the
setting \textit{with collision detection} in which stations can distinguish between
no transmission and simultaneous transmission of at least two stations (we will
refer to the latter in few places of this work).}
%


Stations are prone to crash failures. 
Allowable patterns of failures are determined by abstract adversarial models.
Historically, the main distinction was between 
{\em adaptive} and {\em oblivious} adversaries;
the former can make decisions \dk{about failures} during the computation while the latter \dk{has to make all decisions prior the computation}.
Another characteristic of an adversary is its {\em size-boundedness}, 
or more specifically {\em $f$-boundedness}, if it may fail at most $f$~stations, for a parameter $0\le f< p$; a {\em linearly-bounded} adversary is simply a
$c\cdot p$-bounded adversary, for some constant $0<c<1$.
We introduce the notion of the ordered adaptive adversary, 
or simply ordered adversary,
which can crash stations \dk{online but} according to some preselected\footnote{%
\dk{Throughout the paper, by ``preselected'' we mean selected prior the execution
of a given algorithm,
as will be clarified later in the model description in Section~\ref{model}.}}
 order (unknown to
the algorithm) \dk{from a given class of partial orders, e.g., linear orders
(i.e., all elements are comparable),
anti-chains (i.e., no two elements are comparable), $k$-thick partial orders
(i.e., at most $k$ elements are incomparable).}
On the other hand, a strongly-adaptive adversary is not restricted
by any constraint other than being $f$-bounded for some $f<p$.
\remove{
A size-bounded adversary is weakly adaptive, if it needs to select a
subset of stations that might be failed prior to a start of an execution,
otherwise it is {\em strongly adaptive}.
}

Adversaries described by a partial order are interesting on their own right. 
To the best of our knowledge, such general adversaries
were not considered in literature so far and hence \dk{offer} novel \dk{framework} for 
\dk{evaluating performance} and 
better understanding of reliable distributed algorithms.
\dk{For instance, in hierarchical architectures, such as clouds or networks, a crash at 
an upper level may result in cutting off processors at lower levels, what could be seen
as a crash of lower levels from the system perspective.
Linear orders of crashes could be motivated by the fact that each station
has its own duration, unknown to the algorithm, and a crash of some station
means that all with smaller duration should crash as well.
Independent, e.g., located far away, systems could be seen as a set
of linearly ordered chains of crashes, each chain for a different independent
part of the system.
Furthermore, the study in this paper indicates that 
different partial orders restricting the adversary may require different
algorithm design and yield different complexity formulas.}

\dk{Another form of restricting adversarial power is to delay the effect of its actions
by a number of time steps --- we call such adversaries round-delayed. 
Such adaptive adversaries are motivated by various reactive attacking systems,
especially in analyzing security aspects. We show that this parameter can
influence performance of algorithms, independently of the partial-order restrictions
on the adversary.}

\dk{Due to the specific nature of the \DA\ problem,}
the most \jm{accurate} measure considered in the literature
is {\em work}, accounting the total number of available processor steps
in the computation. \dk{It was introduced by Kanellakis and Shvartsman in the context
of the related Write-All problem \cite{KS}.}
\jm{We assume that algorithms 
are {\em reliable}} in the sense that they must 
perform all the tasks for any pattern of crashes such that
at least one station remains operational in an execution.
Chlebus et al.~\cite{CKL} showed that $\Omega(t+p\sqrt{t})$ work is inevitable 
for any reliable algorithm, even randomized, even for the channel with collision detection 
(i.e., when \jm{operational} stations can recognize no transmission from at least two transmitting
stations in a round),
and even if no failure occurs.
This is the absolute lower bound on the work complexity of the \DA\ problem on a shared channel.
\dk{It is known that this bound can be achieved even by a deterministic
algorithm for channels with \textit{enhanced feedback}, such as collision detection,
cf.,~\cite{CKL}, and therefore such enhanced-feedback channels are no longer
interesting 
from perspective of reliable task performance.}
%
\textit{Our goal is to check how different classes of adversaries, especially those 
constrained by a partial order of crashes, influence work performance
of \DA\ algorithms on a simple shared channel with acknowledgments only.}

\remove{
Our focus is the multiple access channel without collision detection. 
Furthermore
we focused on the gap for randomized solutions left in \cite{CKL}. This open question
led us to several solutions and a hierarchy of adversaries that we introduce.
}

\remove{
Firstly we developed an algorithm working against the Linearly-Ordered adversary,
that has to declare $ f $ out of $ p $ crash-prone stations and additionally a permutation
setting the order in which those stations may be crashed. This solution 
has $ \mathcal{O}(t+ p\sqrt{t}\log(p)) $ expected work complexity, hence it is very close
to optimum.

Another algorithm we introduce is the one that, to some extent, answers the question stated
in the seminal paper. Namely, the solution is only logarithmically far from the lower bound stated there.

We also analyze the same algorithm for the $ k-Chain-Ordered $ adversary whose faulty subset of stations
is divided into $ k $ chains and this is the constraint of crashes that may occur. This is followed by a lower
bound that we show for the problem. Combining these solutions leads us to a conclusion
that algorithms' complexities depend on the partial order of the adversary.

Finally, we introduce the notion of delayed adversaries and consider
how the decision-delay of the opponent influences the complexity bounds of algorithms.
Precisely, we examine the Strongly-Adaptive 1-Round-Delayed adversary confronted with
a solution giving work complexity of $ \mathcal{O}(t + p\sqrt{t}\log^{2}(p)) $.
}

\remove{OLD
For a channel with collision detection, this lower bound can be attained 
by a deterministic algorithm against any adversary.
The optimal bound on work for a weaker channel without collision detection involves the number~$f$ of failures.
We give a deterministic algorithm for this channel which performs work
$\cO(t+p \sqrt{t} +  p\cdot \min\{f,t\})$ against $f$-bounded adversary.
This work complexity is shown to be optimal when upper bound~$f$ on the number of failures is the only restriction on the adversary.

Next we consider the question what is the optimum amount of work 
needed for the channel without collision detection against weaker adversaries.
We show that a randomized algorithm can achieve 
the expected minimal work $\cO(t+p\sqrt{t})$ against 
certain weakly-adaptive size-bounded adversaries.
The number of faults for this to occur needs to be a constant
fraction of the number of all the stations.
A conclusion is that randomization helps if collision detection is not
available and the adversary is sufficiently restricted.
Finally, we show that if the number~$t$ of tasks satisfies $t=o(p^2)$, then a
weakly-adaptive $f$-bounded adversary can force any algorithm for the channel without collision detection to perform asymptotically more than the minimal 
work $\Omega(t+p\sqrt{t})$, provided that $f=p\cdot (1- o(1/\sqrt{t}))$.   
}

\subsection{Previous work}

The \DA\ problem  was introduced by Dwork, Halpern and Waarts~\cite{DHW}
in the context of a message-passing model with processor crashes.

Chlebus, Kowalski and Lingas \jm{(CKL)}~\cite{CKL} were the first who considered 
\DA\ in a multiple-access channel.
Apart from the absolute lower bound for work complexity, discussed earlier,
they also showed a deterministic algorithm matching this performance
in case of channel \textit{with} collision detection.
Regarding the channel without collision detection, they developed 
a deterministic solution 
\remove{and a Strongly-Adaptive adversary,} 
that is optimal for such weak channel with respect to the lower bound they proved
$\Omega(t + p\sqrt{t} + p\min\left\{f,t\right\})$. The lower bound
holds also for randomized algorithms against the strongly adaptive adversary,
that is, the adversary who can see random choices and react online,
which shows that randomization does not help against a strongly adaptive adversary.

\remove{Their results also consisted of a deterministic protocol for the Weakly-Adaptive adversary and a channel
without collision detection.} 
Furthermore, their paper contains a randomized solution that is efficient
against a weakly adaptive adversary who can fail only a constant fraction of stations.
A weakly adaptive adversary is such that it needs to select $f$ crash-prone 
processors in advance, based only on the knowledge of algorithm but without
any knowledge of random bits; then, during the execution, it can fail only
processors from that set. 
This algorithm matches the absolute lower bound on work. 
If the adversary is not linearly bounded, that is, $f<p$ could be arbitrary,
they only proved a lower bound of
$ \Omega(t + p\sqrt{t} + p\min\left\{\frac{p}{p-f},t\right\}) $.

Clementi, Monti and Silvestri~\cite{CMS} investigated \DA\ in the 
communication model of a multiple-access channel without collision detection.
They studied \textit{$F$-reliable protocols}, which are correct if the number 
of crashes is at most~$F$, for a parameter~$F<p$.
They obtained tight bounds on the time and work of $F$-reliable deterministic protocols.
In particular,  the bound on work shown in~\cite{CMS} is $\Theta(t+F\cdot\min \{t, F\})$.
In this paper, we consider protocols that are correct for \textit{any number} of crashes smaller than $p$, which is the same as $(p-1)$-reliability.
Moreover, the complexity bounds of our algorithms, for the channel without collision detection, are parametrized by the number $f$ of crashes that 
\textit{actually occur} in an execution. \jm{Results shown in \cite{CMS} also referred to the time perspective with a lower bound on time complexity}
equal $ \Omega\left(\frac{t}{p-F} + \min\left\{\frac{tF}{p}, F + \sqrt{t} \right\}\right)$. However the protocols 
make explicit use of the knowledge of $F$.
\mj{In this paper we give some remarks on time and energy complexity but, opposed to results in \cite{CMS}, those statements are correct for an
arbitrary $f$, which does not need to be known by the system (see details in Section \ref{timeenergy}).
Observe that, opposed to the worst case time complexity, the considered work complexity could be seen as an average processors time multiplied by the number of processors.}

\subsection{Related work}

\paragraph{Do-All problem.}

After the seminal work by Dwork, Halpern and Waarts~\cite{DHW},
the \DA\ problem was studied in a number of follow-up papers~\cite{CDS,CGKS,CK,DMY,GMY}
in the context of a message-passing model, in which every node 
can send a message to any subset of nodes in one round.
%
Dwork et al.~\cite{DHW} analyzed task-oriented work, 
in which each performance of a task contributes a unit to complexity, and the communication complexity defined as the number of point-to-point messages.
%
De Prisco, Mayer and Yung~\cite{DMY} were the first to use the available
processor steps~\cite{KS} as the measure of work for solutions of \DA.
They developed an algorithm which has work $\cO(t+(f+1)p)$ 
and message complexity $\cO((f+1)p)$. 
Galil, Mayer and Yung~\cite{GMY} improved the message complexity to
$\cO(f p^\varepsilon + \min\{f+1, \log p\}p)$,
for any positive $\varepsilon$, while maintaining the same work complexity.
This was achieved as a by-product of their  investigation of the Byzantine agreement with crash failures, for which they found a message-efficient solution.
Chlebus, De Prisco and Shvartsman~\cite{CDS} studied failure models
allowing restarts. 
\remove{
Restarted processors could contribute to the task-oriented work, but the 
cost of integrating them into the system, in terms of the available 
processor steps and communication, might well surpass the benefits.
The solution presented in~\cite{CDS} achieves the work performance
$\cO((t + p\log p + f)\cdot \min\{\log p,\log f\})$, and its message complexity
is $\cO(t+p\log p+ f p)$, against suitably defined adversaries that may
introduce $f$ failures and restarts.
This algorithm is an extension of one that is tolerant of stop-failures 
and which has work complexity 
$\cO((t + p\log p/\log\log p)\log f)$ and communication complexity
$\cO(t + p\log p/\log\log p + fp)$. 
}
Chlebus and Kowalski~\cite{CK} studied the \DA\ problem when occurrences of failures are controlled by the weakly-adaptive linearly-bounded adversary.
They developed a randomized algorithm with the expected effort (\jm{effort = work + number of messages}) $\cO(p\log^*p)$, 
in the case $p=t$, which is asymptotically
smaller than the lower bound $\Omega(p\log p/\log\log p)$ on work 
of any deterministic algorithm.
Chlebus, G\k asieniec,  Kowalski and Shvartsman~\cite{CGKS} developed a
deterministic algorithm with effort $\cO(t+p^a)$, for some
specific constant~$a$, where $1<a<2$, against the
unbounded adversary, which is the first algorithm with the property that both work and communication are $o(t+p^2)$ against this adversary.
They also gave an algorithm achieving both work and communication
$\cO(t+p\log^2 p)$ against a strongly-adaptive linearly-bounded adversary.
All the previously known deterministic algorithms had either work or
communication performance $\Omega(t+p^2)$ when as many as a linear fraction 
of processing units could be failed by a strongly-adaptive adversary. 
Georgiou, Kowalski and Shvartsman~\cite{GKS} developed an algorithm
with work $\cO(t+p^{1+\varepsilon})$, for any fixed
constant~$\varepsilon$, by an approach based on gossiping.
Kowalski and Shvartsman in~\cite{KS03} studied \DA\ in an asynchronous 
message-passing mode when executions are restricted such that 
every message delay is at most~$d$.
They showed lower bound $\Omega(t+pd\log_d p)$ on the expected work. 
They developed several algorithms, among them a deterministic one with 
work $\cO((t+pd)\log p)$. 
For further developments we refer the reader to the book
by Georgiou and Shvartsman~\cite{GSbook}.

\paragraph{Related problems on a shared channel.}

Most of work in this model focused on communication problems, 
see the surveys~\cite{Chl,Gal}. 
Among the most popular protocols for resolving contention on the channel are 
Aloha \cite{Abr} and exponential backoff \cite{MB}.
The two most related research problems are as follows.

The {\em selection problem\/} is about how to have an input message broadcast 
successfully if only some among the stations hold input messages while the other do not. 
\jm{It is somehow closely related to the leader election problem.}
Willard~\cite{Wil} developed protocols solving this problem in the expected time
$\cO(\log\log n)$ in the channel with collision detection.
Kushilevitz and Mansour~\cite{KM} showed a lower bound $\Omega(\log n)$
for this problem in case of a lack of collision detection, which explains
the exponential gap between this model and the one with collision detection.
Martel~\cite{Mar} studied the related problem of finding maximum within the 
values stored by a group of stations.

Jurdzi\'nski, Kuty\l owski and Zatopia\'nski~\cite{JKZ} considered, the leader election problem for the channel without collision detection,
giving a deterministic algorithm with sub-logarithmic energy cost. They also proved \jm{log}-logarithmic lower bound for the problem. 

The {\em contention resolution\/} problem is about a subset of some~$k$ among all $n$~stations which have messages. All these messages need to be transmitted successfully on the channel as quickly as possible.
Koml\'{o}s and Greenberg~\cite{KG} proposed a deterministic solution allowing to achieve this in time $\cO(k+k\log(n/k))$, where ~$n$ and~$k$ are known.
Kowalski~\cite{Kow} gave an explicit solution of complexity~$\cO(k\text{ polylog } n)$, while the lower bound $\Omega(k(\log n)/(\log k))$ was shown by Greenberg and Winograd~\cite{GW}.
The work by Chlebus, Go\l\k ab and Kowalski~\cite{ChleGK} regarded broadcasting spanning forests on a multiple-access channel, with locally stored edges of an input graph.

Significant part of recent results on the communication model considered in the literature is focused on jamming-resistant protocols 
motivated by applications in single-hop wireless networks. To the best of our knowledge this \jm{line} of research  was initiated 
in  \cite{SHaInit} by  Awerbuch, Richa and Scheideler, wherein authors introduced a model of adversary capable of 
jamming up to $(1-\epsilon)$ of the time steps (slots). The following papers~\cite{SH1a, SH2a} by Richa, Scheideler, Schmid and Zhang
proposed several algorithms that can reinforce the communication even for a very strong, adaptive adversary. 
For the same model Klonowski and  Paj\k{a}k in~\cite{SPAAmk} proposed an optimal leader election protocol, using a different algorithmic approach.  

The similar model of a jamming adversary was considered by  Bender, Fineman,  Gibert and Young in~\cite{SETH1}. The authors consider a modified, robust exponential backoff protocol
that requires $O(\log^2 n + T)$ attempts to the channel if there are at most  $T$ jammed slots. Motivated by saving energy, the authors try to find 
maximal global throughput while reducing device costs expressed by the number of attendants in the channel.

\remove{
A very powerful modification of the backoff protocol was proposed by  Bender, Kopelowitz, Pettie and Young~\cite{STOC16Ben}. The authors presented a protocol offering constant 
throughput and only  $O(\log(\log^{*}n))$ expected attempts before getting access to the channel. One of the consequences of this protocol is an ultra-fast  leader election protocol.
}

Finally, there are several recent results on finding approximations of the network. In~\cite{NaszeSA}  Brandes, Kardas, Klonowski, Paj\k{a}k and Wattenhofer proposed an 
algorithm for the network of $n$ stations that returns  $(1+\varepsilon)$-approximation of $n$ with probability  at least $1-1/f$. 
This procedure takes $O(\log\log n+\log f/\varepsilon^2)$ time slots. This result was also proved to be time-optimal in~\cite{NaszeSA}. 
In~\cite{Binbin13}  Chen, Zhou and Yu demonstrated a size approximation protocol for seemingly different model (namely RFID system) that needs 
 $\Omega(\frac{1}{\epsilon^2 \log{1/\epsilon}}  + \log\log{n})$ slots for  $\epsilon \in [1/\sqrt{n},0.5]$ and negligible probability of failure. In fact, this result 
can be instantly translated into the MAC model.

\subsection{Our results}

\remove{
Results in the seminal paper \cite{CKL} were mainly concentrated around deterministic solutions. Nevertheless the authors introduced an efficient algorithm solving Do-All against
the Linearly-Bounded adversary. Additionally they showed a lower bound giving an idea that randomization may be useful against the Weakly-Adaptive adversary. Consequently
that was the seed-reason for this paper.
}

We introduce a hierarchy of adaptive adversaries
and study their impact on the complexity of performing jobs on a shared channel.
The most important parameter of this hierarchy is
the partial order. It \jm{describes adversarial crashes, hence
we call such adversaries ordered.}
The other parameters are: the number of crashes $f$ 
(we call them size-bounded adversaries) and
a delay $c$ in \jm{the effect of the adversary's decisions}.
We call them $c$-Round-Delayed or $c$-RD.

\jm{Since the adversaries that we introduce originate from partial order relations, then appropriate notions and definitions translate straightforwardly.
The relation of our particular interest while considering partially ordered adversaries is the precedence relation.
Precisely, if some station $ v $ precedes station $ w $ in the partial order of the adversary, then we say that $ v $ and $ w $ are comparable. This also means that
station $ v $ must be crashed by the adversary before station $ w $. 
Consequently a subset of stations where every pair of stations is comparable is called a chain. On the other hand a subset of stations where no two different stations are comparable is called an anti-chain.

It is convenient to think about the partial order of the adversary from a Hasse diagram perspective. The notion of chains and anti-chains seems to be intuitive when graphically presented, e.g., a chain
is a pattern of consecutive crashes that may occur while an anti-chain gives the adversary freedom to crash in any order due to non-comparability of elements/stations.
}

We show that adversaries constrained by an order of short width i.e., with short
maximal anti-chain or 1-RD adversaries have very little power, 
results in performance similar to the one enforced by oblivious adversaries or
linearly-ordered adversaries, cf.,~\cite{CKL}.
More specifically, we develop algorithms \Robal\ and \Gilet,
which achieve work performance close to the absolute
lower bound $\Omega(t + p\sqrt{t})$ against ``narrow-ordered'' and 1-RD
adversaries, respectively.

In case of ordered adversaries restricted by orders of arbitrary width $k\le f$,
we present algorithm \Grubtech\ that guarantees 
work $\mathcal{O}(t + p\sqrt{t} + p\min\left\{\frac{p}{p-f},t,k\right\}\log p)$
against ordered adversaries restricted by orders of width $k$, and show that it is efficient by proving a lower bound for a broad class of partial orders.
This also extends the result for a weakly-adaptive linearly-bounded
adversary from~\cite{CKL} to any number of crashes $f<p$, 
as weakly-adaptive adversary is a special case
of ordered adversary restricted by a single anti-chain.
Our results together with \cite{CKL} prove a separation
between classes of adversaries. The easiest to \jm{play against}, apart of
the oblivious ones, are the following adaptive adversaries: 
1-RD adversaries, ordered adversaries restricted by short-width orders, and
linearly bounded adversaries. 
More demanding are ordered adversaries restricted by order of width $k$,
for larger values of $k$, and $f$-bounded adversaries for $f$ close to $p$.
The most \jm{demanding} are strongly-adaptive adversaries, \jm{as their decisions and the way they act are least restricted.}
See Table~\ref{tab1} for detailed results and comparisons.

The hierarchy of the considered adversaries is illustrated on Figure~\ref{fig:hierarchy}.
It depends on three main factors. Additionally,
we introduce several solutions for the specified settings. Consequently our contribution is a complement to adversarial scenarios presented in literature together with a taxonomy describing
the dependencies between different adversaries.

First of all we have the vertical axis which describes adversary features, that is how restricted his decisions are. We have the Strongly-Adaptive adversary in the origin, 
who may decide on-line which stations will be crashed. \jm{Above the Strongly-Adaptive adversary} is the Weakly-Adaptive adversary, who is slightly weaker and before the algorithm execution 
has to declare the subset of stations that will be prone to crashes. \dk{Next we have the $k$-Chain-Ordered adversary and its more general version 
$k$-Thick-Ordered adversary (Chain-O-Adaptive in Figure~\ref{fig:hierarchy} for consistency). 
Apart from declaring the faulty subset, the former adversary is restricted by partial orders being collections of disjoint $k$ chains, 
while the latter --- by all partial orders of thickness $k$ (and so $k$-chains as well).
Finally, there is the Linearly-Ordered adversary (Linearly-O-Adaptive in Figure~\ref{fig:hierarchy}) that we introduce in this paper --- its order is described by a linear pattern of processor crashes.}

The horizontal axis describes another \jm{feature of the adversary}, that we introduce in our paper, i.e., the Round-Delay of adversary decisions. 
Similarly, the configuration for the problem is hardest in the origin and a 0-RD adversary is the strongest against which we may execute an algorithm. 
An interesting particularity is that if the Strongly-Adaptive adversary's \jm{decisions are delayed by at least one round}, then we may design a solution whose work complexity is independent of the number crashes.

The axis orthogonal to those already considered, describes the channel feedback. In the origin we have a multiple-access channel without collision detection, then there is the beeping channel, followed by
MAC with collision detection.

We may see that the most difficult setting is in the origin, \jm{while the further from the origin, the easier the problem}. The boxes in Figure \ref{fig:hierarchy} represent the algorithms and their work complexities
in certain configurations of the model i.e. features described above.
The bold boxes denote algorithms from this paper and the remaining ones are from CKL \cite{CKL}. Factors marked red denote the ``distance'' from the lower bounds, understood as 
how far the algorithms are from optimum.

\begin{figure}[htb]
 \begin{center}
 \includegraphics[scale=0.485]{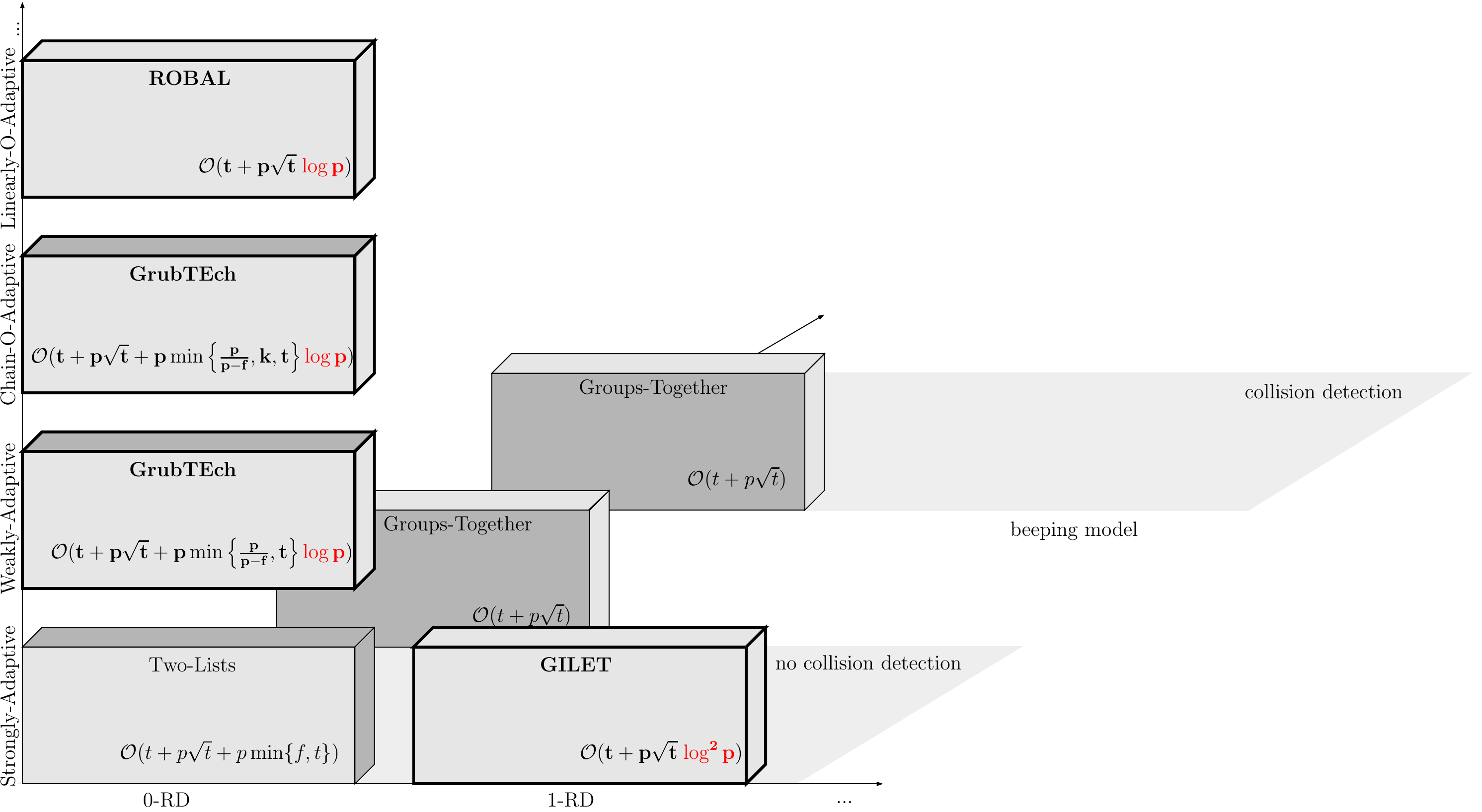}
 \caption{The hierarchy of adversaries.}
 \label{fig:hierarchy}
 \end{center}
\end{figure}

\begin{table}[htbp]
  \scalebox{0.92}{
  \begin{tabular}{l||l|l|l|l|l|l}

    \scriptsize{\textbf{algorithm}} &  \scriptsize{\textbf{channel}} & \scriptsize{\textbf{adv.}} & \scriptsize{\textbf{work}} & \scriptsize{\textbf{ref.}} & \scriptsize{\textbf{lower bound}} & \scriptsize{\textbf{ref.}} \\ \hline\hline
    \scriptsize{Two-Lists} & \scriptsize{no-CD} & \scriptsize{SA} & \tiny{$ \mathcal{O}(t + p\sqrt{t} + p\min\{f, t\}) $} & \scriptsize{\cite{CKL} Thm 1} & \tiny{$ \Omega(t + p\sqrt{t} + p\min\{f, t\}) $} & \scriptsize{\cite{CKL} Thm 2} \\ \hline
    \scriptsize{Groups-Together} & \scriptsize{CD} & \scriptsize{SA} & \tiny{$ \mathcal{O}(t + p\sqrt{t}) $} & \scriptsize{\cite{CKL} Thm 3} & \tiny{$ \Omega(t + p\sqrt{t}) $} & \scriptsize{\cite{CKL} Lem 2} \\ \hline
    \scriptsize{Mix-Rand} & \scriptsize{no-CD} & \scriptsize{WALB} & \tiny{$ \mathcal{O}(t + p\sqrt{t}) $} & \scriptsize{\cite{CKL} Thm 5} & \tiny{$ \Omega(t + p\sqrt{t}) $} & \scriptsize{\cite{CKL} Lem 2} \\ \hline
    \scriptsize{ROBAL} & \scriptsize{no-CD} & \scriptsize{LOA} & \tiny{$ \mathcal{O}(t + p\sqrt{t}\log p) $} & \scriptsize{Sec. \ref{robal}} & \tiny{$ \Omega(t + p\sqrt{t}) $} & \scriptsize{\cite{CKL} Lem 2} \\ \hline
    \scriptsize{GrubTEch} & \scriptsize{no-CD} & \scriptsize{WA} & \tiny{$ \mathcal{O}(t + p\sqrt{t} + p\min\left\{\frac{p}{p-f},t\right\}\log p) $} & \scriptsize{Sec. \ref{grubtech}} & \tiny{$ \Omega(t + p\sqrt{t} + p\min\left\{\frac{p}{p-f},t\right\} ) $} & \scriptsize{\cite{CKL} Thm 6} \\ \hline
    \scriptsize{GrubTEch} & \scriptsize{no-CD} & \scriptsize{COA} & \tiny{$ \mathcal{O}(t + p\sqrt{t} + p\min\left\{\frac{p}{p-f},t,k\right\}\log p) $} & \scriptsize{Sec. \ref{grubtechpartord}} & \tiny{$ \Omega(t + p\sqrt{t} + p\min\left\{\frac{p}{p-f},k,t\right\} ) $} & \scriptsize{Sec. \ref{grubtechpartord}} \\ \hline
    \scriptsize{GILET} & \scriptsize{no-CD} & \scriptsize{1-RD} & \tiny{$ \mathcal{O}(t + p\sqrt{t}\log^{2}p) $} & \scriptsize{Sec. \ref{gilet}} & \tiny{$ \Omega(t + p\sqrt{t}) $} & \scriptsize{\cite{CKL} Lem 2} \\ \hline
  \end{tabular}
  }
  \caption{Summary of main results; first three were introduced in CKL \cite{CKL}, the other are presented in this paper. CD stands for collision detection model feature.
	SA stands for Strongly-Adaptive adversary, WA (WALB) stands for Weakly-Adaptive
	(Linearly-Bounded) adversary, COA stands for Chain-Ordered adversary,
	LOA stands for Linearly-Ordered adversary, and 1-RD stands for 1-Round-Delay adversary.}
  \label{tab1}
  \end{table}

\remove{
\begin{table}[htbp]
  \scalebox{0.92}{
  \begin{tabular}{l|l|l|l|l|l}

    \scriptsize{\textbf{algorithm}} &  \scriptsize{\textbf{channel}} & \scriptsize{\textbf{adv.}} & \scriptsize{\textbf{work complexity}} & \scriptsize{\textbf{lower bound}} & \scriptsize{\textbf{ref.}} \\\hline
    \scriptsize{Two-Lists} & \scriptsize{MAC no-CD} & \scriptsize{SA} & \tiny{$ \mathcal{O}(t + p\sqrt{t} + p\min\{f, t\}) $} & \tiny{$ \Omega(t + p\sqrt{t} + p\min\{f, t\}) $} & \scriptsize{\cite{CKL} Theorem 1} \\ \hline
    \scriptsize{Groups-Together} & \scriptsize{MAC CD} & \scriptsize{SA} & \tiny{$ \mathcal{O}(t + p\sqrt{t}) $} & \tiny{$ \Omega(t + p\sqrt{t}) $} & \scriptsize{\cite{CKL} Theorem 3} \\ \hline
    \scriptsize{Groups-Together} & \scriptsize{Beeping model} & \scriptsize{SA} & \tiny{$ \mathcal{O}(t + p\sqrt{t}) $} & \tiny{$ \mathcal{O}(t + p\sqrt{t}) $} & \scriptsize{-} \\ \hline
    \scriptsize{GrubTEch} & \scriptsize{MAC no-CD} & \scriptsize{WA} & \tiny{$ \mathcal{O}(t + p\sqrt{t} + p\min\left\{\frac{p}{p-f},t\right\}\log p) $} & \tiny{$ \mathcal{O}(t + p\sqrt{t} + p\min\left\{\frac{p}{p-f},t\right\} ) $} & \scriptsize{-}\\ \hline
    \scriptsize{GILET} & \scriptsize{MAC no-CD} & \scriptsize{1-RD} & \tiny{$ \mathcal{O}(t + p\sqrt{t}\log^{2}p) $} & \tiny{$ \mathcal{O}(t + p\sqrt{t}) $} & \scriptsize{-} \\ \hline
    \scriptsize{ROBAL} & \scriptsize{MAC no-CD} & \scriptsize{OA} & \tiny{$ \mathcal{O}(t + p\sqrt{t}\log p) $} & \tiny{$ \mathcal{O}(t + p\sqrt{t}) $} & \scriptsize{-} \\ \hline
  \end{tabular}
  }
  \caption{Summary of results, two first come from CKL \cite{CKL}, the rest is presented in this paper.}
  \label{tab1}
  \end{table}
}

A subset of the results from this paper forms a hierarchy of partially ordered adversaries. The first solution we introduced was designed to work against a linearly-ordered adversary, whose pattern
of crashes is described by a linear order. The upper bound of this algorithm does not depend on the number of crashes and is just logarithmically far from the minimal work complexity in the assumed model.
The second algorithm serves for the case when the adversary's partial order of stations forms a maximum length anti-chain. Nevertheless we also analyze this solution against an in-between situation
when the partial order is consists of $ k $ chains of an arbitrary length, yet the sum of their lengths is $ f $.

In order to conclude the content of this paper, we would like to emphasize that building on solutions from CKL \cite{CKL} we introduce different algorithms and specific adversarial scenarios,
for more complex setups that, to some extent, filled the gaps for randomized algorithms solving Do-All in the most challenging adversarial scenarios and communication channels providing least feedback.
Due to the basic nature of the considered communication model ---
a shared channel with acknowledgments only --- our solutions
are also implementable and efficient in various different types of communication 
models with contention and failures.

\subsection{Document structure}

We describe the model of the problem, communication channel details, different adversary scenarios and the complexity measure in Section \ref{model}. Section \ref{algtools} is dedicated 
to the \textsc{Two-Lists} and \textsc{Groups-Together} procedures from \cite{CKL} that ares used 
(sometimes after small modifications) as a toolbox in
our solutions.
In Section \ref{robal} we present a randomized algorithm \Robal\ solving \DA\ in presence 
of a Linearly-Ordered adversary. In the following section (Section \ref{grubtech}) there is a work-efficient algorithm \Grubtech\ that simulates a kind of fault-tolerant collision detection on a channel without such feature. 
This is followed by Section \ref{grubtechpartord}, where we adjust this solution for a k-Chain-Ordered adversary. Finally, we have Section \ref{gilet} that contains a solution for the $1$-RD adversary (algorithm \Gilet) and Section \ref{beeping} is dedicated to the transition of~\GT~to the beeping model. \mj{We conclude with a short summary in Section \ref{conclusions}, which also contains some general remarks about time and energy complexity of our algorithms. }

\section{Shared channel model and the Do-All problem}
\label{model}

The Do-All problem has been introduced by Dwork et al \cite{DHW} and was considered further in numerous papers \cite{CPS, CGKS, CK, DMY, GMY} under different assumptions
regarding the model. In this section we formulate the model we consider, which is based on the one considered in \cite{CKL}.

In general, the Do-All problem is modeled as follows: a distributed system of computationally constrained devices is expected to perform a number of tasks \cite{GSbook2}.
We will call those devices \textit{processors} or simply \textit{stations}. The main efficiency measure that we use is \textit{work}, i.e., the total number of processor steps available
for computations \cite{KS}.

\subsection{Stations}

In our model we assume having $ p $ stations, with unique identifiers from the set $ \{1,\dots, p\} $. The distributed system of those devices is synchronized with a global clock,
and time is divided into synchronous time slots, called \textit{rounds}. All the stations start simultaneously at a certain moment. Furthermore every station may halt voluntarily.
\remove{
A halted station 
does not contribute to the work done, in contrast to an operational station,
but contributes to the work performance measure. 
Precisely, every operational station generates a unit of work per round, that has to be included while computing
an algorithm's complexity.
}
In this paper, by $ n < p $ we will denote the number of operational, i.e. not crashed, stations.

\subsection{Communication}

The communication channel for processors is a multiple-access channel \cite{Chl, Gal}, where a transmitted message reaches every operational device.
All our solutions work on a channel \textit{without collision detection}, hence
when more than one message is transmitted at a certain round, then the devices hear a signal indistinguishable from the background noise.
In our model we assume, that the number of bits possible to broadcast in a single transmission is bounded by $ \mathcal{O}(\log p) $, however all our algorithms broadcast merely $ \mathcal{O}(1) $ bits,
hence we omit the analysis of the message complexity.

\dk{In few places of this work we refer to the alternative channel setting
\textit{with collision detection}, in which there are three types of feedback from 
the channel:
\begin{itemize}
 \item \textbf{Silence} --- no station transmits, and only a background noise is heard;
 \item \textbf{Single} --- exactly one station transmits a legible information;
 \item \textbf{Collision} --- an illegible signal is heard (yet different from \textit{Silence}), when more than one station transmits simultaneously.
\end{itemize}
Section~\ref{grotog} provides more details.
We note here that the setting with collision detection is referred to only
in the context of transforming algorithmic tools and lower bounds
to the more challenging setting without collision detection primarily studied
in this work.}

It is worth emphasizing that the communication channel in our model
can be made resistant to non-synchronized processor clocks
\dk{without increasing asymptotic performance,} using methods developed previously \cite{HSbook}:
\dk{when processor clocks are not synchronized, then we could replace each round 
by two rounds of appropriate lengths to compensate possible lags.}

\subsection{Adversaries} \label{adversaries}

Processors may be crashed by the adversary. One of the factors that describes the adversary is its power $ f $. It represents the total number of failures that 
may be 
enforced.
We assume that $ 0 \leq f \leq p-1 $, so always at least one station remains operational until an algorithm terminates. Stations that were crashed neither restart nor contribute to work.
\jm{Another feature of the adversary is whether it is adaptive or not. Following the definition from \cite{GSbook2}, an \textit{adaptive adversary} is the one that has complete knowledge of the computation that it is
affecting, and it makes instant dynamic decisions on how to affect the computation. A 
\textit{non-adaptive adversary}, sometimes called \textit{oblivious}, has to determine a sequence of events it will cause before the start of the computation.
In this paper we focus on adaptive adversaries.}

\dk{In the previous work, the following adversarial models were considered, c.f.,~\cite{CKL,GSbook}:}

\noindent \textbullet \: \textbf{Strongly-Adaptive $f$-Bounded}: the only restriction of this adversary is that the total number of failures may not exceed $ f $. In particular all possible failures may happen
 simultaneously.
\newline \textbullet \: \textbf{Weakly-Adaptive $f$-Bounded}: the adversary has to declare a subset of $ f $ stations prone to crashes before the algorithm execution. \jm{Yet, it may arbitrarily perform crashes on
the declared subset.}
\newline \textbullet \: \textbf{Unbounded}: that is Strongly-Adaptive $ (p-1) $-Bounded.
\newline \textbullet \: \textbf{Linearly-Bounded}: an adversary of power $ f $, where $ f = cp $, for some $ 0 < c < 1 $.

We \dk{now} introduce new \dk{adversarial models}
that complement the existing 
\dk{ones from the literature.} 

\subsubsection{The Ordered $f$-Bounded adversary}

\dk{ Formally, the \textit{Ordered $f$-Bounded adversary} has to declare, prior the execution, a subset of at most $ f $ out of $ p $ stations that will be prone to crashes. Then, before starting the execution, the adversary has 
to impose a partial order on the selected stations, taken from a given
family of partial orders. 
This family restricts the power of the adversary ---
the wider it is the more power the adversary possesses. Moreover,
as we will show in this work, the structure of available partial orders 
influences asymptotic performance of algorithms and the complexity
of the Do-All problem under the presence of the adversary restricted
by these partial orders. 

The adversary may enforce a failure independently from time slots (even $ f $ at the same round),
but with respect to the order. 
This means that a pre-selected crash-prone station can be crashed in a time slot
if and only if 
all stations preceding it in the order has been already crashed by the end of 
the time slot.

In this work we focus on the following three types of partial orders.} 

\paragraph{The Linearly-Ordered $f$-Bounded adversary.}

Formally, the \textit{Linearly-Ordered $f$-Bounded} 
has to choose a \jm{sequence
$ \pi = \pi(1)\dots\pi(f) $} designating the order on the selected set of $f$
stations in which the failures will occur, \jm{where $\pi(i) $ represents the id of the $i$th fault-prone station in the order.}
This means that station $ \pi(i) $ may be crashed if and only if stations $ \pi(j) $ are already crashed, for all $ j < i $. 
\jm{In what follows, the notion of sequence $ \pi $ is consistent with a linear partial order.}

\paragraph{The $k$-Chain-Ordered and $k$-Thick-Ordered $f$-Bounded adversary.}

The \textit{$k$-Chain-Ordered $f$-Bounded} adversary 
\dk{has to arrange the pre-selected $ f $ stations into} a partial order consisting of $ k $ \jm{disjoint} chains of arbitrary length that represent in what order these stations may be crashed.
In what follows there are $ k $ chains; 
we denote $ l_{j} $ as the length of chain $ j $, and we assume that the sum of lengths
of all chains equals $ f $.
While considering $k$-Chain-Ordered adversaries we will also define additional notions, useful in the analysis of certain results.
We say that a partial order is a {\em $k$-chain-based partial order}
if it consists of $k$ disjoint chains such that: 
\begin{itemize}
\item
no two of them have a common successor, and 
\item
the total length of the chains is a constant fraction of all elements in the order.
\end{itemize}
Furthermore, by the \textit{thickness} of a partial order $P$ we understand the maximal size of an anti-chain in~$P$.
\dk{An adversary restricted by a wider class of partial orders of thickness $k$
is called \textit{$k$-Thick-Ordered}.}

\paragraph{The Anti-Chain-Ordered $f$-Bounded adversary.}
\dk{This adversary is restricted by a partial order which is the anti-chain of
$f$ elements, i.e., all $f$ crash-prone stations are incomparable, thus could
be crashed in any order.
This adversary is the same as the Weakly-Adaptive $f$-Bounded adversary and
the $f$-Thick-Ordered ones.}

\subsubsection{The c-RD f-Bounded adversary}

The \textit{c-RD} adversary decisions take effect with a $ c $ round delay. This means that if we consider time divided into slots (rounds), then if the adversary decides to interfere with
the system (crash a processor) then this will \dk{inevitably} happen after $ c $ rounds. \dk{In particular, this means that the subsequent execution and random bits do not influence the decision and its effect ---
the decision is final and once made by the adversary cannot be changed during the delay.} 
We still consider $f$-boundedness of the adversary, but apart from that 
it may decide arbitrarily, without declaring which stations will be prone to crashes before the algorithm's execution.

A special case of the $c$-RD adversary is a \textit{$0$-RD} and a \textit{$1$-RD} adversary model. The definition of the former case is consistent with the Strongly-Adaptive adversary.
The latter case may give an answer to the question regarding the matter of how delay influences the difficulty of the problem for a strong adversary.

\subsection{Complexity measure}
\label{subsec-work}

The complexity measure that is mainly used in our analysis is \textit{work}, as mentioned before. It is the number of available processor steps for computations. This means that each operational station
that did not halt contributes a unit of work even if it is idling. \mj{Since we use work complexity measure extensively, we adopt the following definition from \cite{GSbook}.}

\mj{
\begin{definition} \label{workdef} (\cite{GSbook}, Definition 2.2)
Let $ A $ be an algorithm that solves a problem of size $ t $ with $ p $ processors, under adversary $ \mathcal{A} $. \jmii{Let $ \mathcal{E}(A, \mathcal{A}) $ 
denote the set of all executions of algorithm $ A $, under adversary $\mathcal{A}$.} For execution $ \xi \in \mathcal{E}(A, \mathcal{A}) $, let time $ \tau(\xi)$ be the
time (according to the external clock), by which A solves the problem. By $p_{i}(\xi)$ let us denote the number of processors completing a local computation step (e.g., a machine instruction)
at time $i$ of the execution, according to some external global clock (not available to the processors) . Then \textit{work complexity} $ S $ of algorithm $ A $ is:
$$ S = S_{\mathcal{A}}(t, p) = \max_{\xi \in \mathcal{E}(A, \mathcal{A})}\left\{ \sum_{i = 1}^{\tau(\xi)} p_{i}(\xi) \right\}. $$
\end{definition}
}

\mj{For randomized algorithms that we are dealing with in this paper, we assess the \textit{expected work} $ S^{E}_{\mathcal{A}}(t, p) $, which is defined as the maximum over all executions $ \xi \in \mathcal{E}(A, \mathcal{A}) $ of the expectation of the sum 
	$\sum_{i = 1}^{\tau(\xi)} p_{i}(\xi)$ from Definition \ref{workdef}.}

\mj{In order to illustrate work complexity measure of a single execution of an algorithm} let us assume that an execution starts when all the stations
begin simultaneously in some fixed round $ r_{0} $. Let $ r_{v} $ be the round when station $ v $ halts or is crashed. Then its work contribution is equal $ r_{v} - r_{0} $. 
In what follows, the algorithm complexity is the sum of such expressions over all stations, i.e.: $ \sum_{0 \leq v \leq p}(r_{v} - r_{0}) $.

\remove{
\noindent 
\mj{At the end of this paper in Section \ref{timeenergy} we also elaborate on time and (transmission) energy complexity measures of the designed algorithms.
Time complexity for Do-All protocols on a shared channel is a notion adopted from \cite{CMS}, while the energy complexity reflects a similar measure used for the Do-All problem in the message-passing model,
i.e., the number of messages sent, despite their bit size in the entire computation.
We define them as follows.}

\begin{definition}
 Time complexity of an algorithm execution is the number of steps until all the processors either have halted or have failed.
\end{definition}

\begin{definition}
 Energy complexity of an algorithm execution is the number of transmissions until all the processors either have halted or have failed.
\end{definition}

}

\subsection{Tasks and reliability}

We expect that processors will perform all $ t $ tasks as a result of executing an algorithm. \jm{Tasks are initially known to processors.}
We assume that tasks are \textit{similar} (that is each task requires the same number of rounds to be done),
\textit{independent} (they can be performed in any order) and \textit{idempotent} (every task may be performed many times, even concurrently by different processors without affecting the outcome of its computation).
We assume that one round is sufficient to perform a single task.

\subsection{Do-All formal definition}
Having explained the assumptions for tasks, we may now state the formal definition of the Do-All problem after \cite{GSbook2}:

\medskip
\mj{\textit{Do-All: Given a set of $ t $ tasks, perform all tasks using $ p $ processors, under adversary $\mathcal{A}$.}}

\medskip
\noindent
\mj{In our considerations adversary $ \mathcal{A} $ from the definition above is one of the adversaries described in Section \ref{adversaries}}.

We assume that all our algorithms need to be reliable. A \textit{reliable} algorithm satisfies the following conditions in any execution:
\textit{all the tasks are eventually performed, if at least one station remains non-faulty} \textbf{and} \textit{each station eventually halts, unless it has crashed}.

\jm{The Do-All problem may be considered completed or solved.
It is considered {\em completed} when all tasks are performed, but their outcomes are not necessarily
known by all operational stations. The problem is considered {\em solved} if in addition all operational processors are aware of the tasks' outcomes.
In this paper we do not assume that stations need to know tasks' outcomes, yet algorithm \Robal\ is designed in such a way that it may solve the problem, while all other algorithms complete it (performing tasks
is confirmed by a collision signal that does not contain any meaningful message).}

\section{Useful algorithmic tools}
\label{algtools}

\subsection{Algorithm Two-Lists}

In this subsection we describe a deterministic \textit{Two-Lists} algorithm from \cite{CKL} which is used in our solutions as a sub-procedure.
It was proved that this algorithm is asymptotically optimal for the Weakly-Adaptive adversary on a channel without collision detection, and its work complexity is 
$ \mathcal{O}(t + p\sqrt{t} + p\:\min\{f, t\}) $. The characteristic feature of Two-Lists is that its complexity is linear for some setups of $ p$ and $ t $ parameters,
describing the number of processors and tasks, respectively (for details, see Fact \ref{fact13}).

\begin{algorithm}
{
{set pointer $ \texttt{Task\_To\_Do}_{v} $ on list $ \texttt{TASKS} $ to the initial position of the range $ v $\;}
{set pointer $ \texttt{Transmit} $ to the first item on list $ \texttt{STATIONS} $\;}
{\Repeat{(pointer $ \texttt{Transmit} $ points to the first entry on list $\texttt{STATIONS} $) or (all tasks in list $ \texttt{TASKS} $ have been covered in the epoch)}
{

{\tcp{Round 1:}}
{perform the first task on list $ \texttt{TASKS} $, starting from the one pointed to by $ \texttt{Task\_To\_Do}_{v} $, that is in list $\texttt{OUTSTANDING}_{v} $}
{move the performed task from list $\texttt{OUTSTANDING}_{v} $ to list $\texttt{DONE}_{v} $\;}
{advance pointer $ \texttt{Task\_To\_Do}_{v} $ by one position on list $ \texttt{TASKS} $\;}
{\tcp{Round 2:}}
{\If{$ \texttt{Transmit} $ points to $ v $}{broadcast one bit\;} }
{attempt to receive a message\;}
{\tcp{Round 3:}}
{\If{a broadcast was heard in the preceding round}{
{\For{each item $ x $ on list $\texttt{DONE}_{Transmit} $}{
{\If{$ x $ is on list $ \texttt{OUTSTANDING}_{v} $}{move $ x $ from $\texttt{OUTSTANDING}_{v} $ to $\texttt{DONE}_{v} $\;}}
{\If{$ x $ is on list $ \texttt{TASKS} $}{remove $ x $ from $ \texttt{TASKS} $\;}}
}}
{\If{list $ \texttt{TASKS} $ is empty}{halt\;}}
{advance pointer $ \texttt{Transmit} $ by one position on list $ \texttt{STATIONS} $\;}
}}\Else{remove the station pointed to by $ \texttt{Transmit} $ from $ \texttt{STATIONS} $\;}

}}}
\caption{\textsc{Epoch-Two-Lists}, code for station $ v $; from \cite{CKL}}
\label{algorytm01}
\end{algorithm}

\begin{algorithm}
{
{- initialize $\texttt{STATIONS}$ to a sorted list of all $p$ names of stations\;}
{- initialize both $\texttt{TASKS}$ and $\texttt{OUTSTANDING}_{v} $ to sorted list of all $ t $ names of tasks\;}
{- initialize $\texttt{DONE}_{v} $ to an empty list of tasks\;}
{- \Repeat{halted}{\textsc{Epoch-Two-Lists}\;}}
}
\caption{\textsc{Two-Lists}, code for station $ v $; from \cite{CKL}}
\label{algorytm02}
\end{algorithm}

\subsubsection{Basic facts and notation}

\textsc{Two-Lists} was designed for a channel without collision detection. That is why simultaneous transmissions were excluded therein. It has been realized by a cyclic schedule of broadcasts
(round-robin). This means that stations maintain a transmission schedule and broadcast one by one, accordingly. Because of such design every message transmitted via the channel
is legible for all operational stations.

Another important fact about \textsc{Two-Lists} is that stations maintain the list of tasks, what enables them to distinguish which tasks are they responsible for. Both the tasks list and the transmission
schedule are maintained as common knowledge. The result of such an approach is that stations may transmit messages of a minimal length, just to confirm that they are still operational
and performed their assigned tasks.

Additionally the transmission schedule and tasks list is stored locally on each station, but the way how stations communicate allows to think of those lists as common for all operational stations.

\textsc{Two-Lists} is structured as a loop (see Algorithm \ref{algorytm01}). Each iteration of the loop is called an \textit{epoch}. Every \textit{epoch} begins with a transmission schedule and tasks being
assigned to processors. During the execution some tasks are performed and if a station transmits such fact, it is confirmed by removing those certain tasks from list \texttt{TASKS}.
However due to adversary activity some stations may be crashed, what is recognized as silence heard on the channel in a round that a station was scheduled to transmit. Stations recognized
as crashed are also removed from the transmission schedule. Eventually a new \textit{epoch} begins with updated lists.

\textit{Epochs} are also structured as loops (see Algorithm \ref{algorytm02}). Each iteration is now called a \textit{phase}, that consists of three consecutive rounds in which station $ v $:
 \begin{enumerate}
  \item Performs the first unaccomplished task that was assigned to $ v $;
  \item $ v $ broadcasts one bit, confirming the performance of tasks that were assigned to $ v $, if it was $ v $'s turn to broadcast. Otherwise $ v $ listens to the channel and attempts to receive
  a message
  \item Depending on whether a message was heard $ v $ updates its information about stations and tasks.
 \end{enumerate}

An epoch consists of a number of phases, that is described by the actual number of operational stations or outstanding tasks.
In each epoch there is a repeating pattern of phases that consists of the following three rounds: (1) each operational station performs one task. Next (2) a transmission round takes place,
where at most one station broadcasts a message, and the rest of the stations attempt to receive it. The process is ended (3) by an updating round, where stations reconstruct their knowledge about
operational stations and outstanding tasks.

\subsubsection{The significance of lists}

In the previous section we mentioned the concept of \textit{knowledge} about stations and tasks, that processors maintain. It was described somehow abstractly, so now we will explain it in detail.
Furthermore we will provide information on how the stations are scheduled to transmit and how do they know which tasks should they perform.

It is not accidental that the algorithm was named \textsc{Two-Lists} as the most important pieces of information about the system are actually maintained on two lists. The first is list \texttt{STATIONS}.
It represents operational (at the beginning of an epoch) processors and sets the order in which stations should transmit in consecutive phases. That list is operated by pointer
\texttt{Transmit}, that is incremented after every phase. It points exactly one station in a single iteration, what prevents collisions on the channel. Hence when some station did not
broadcast we may recognize that it was crashed and eliminate from \texttt{STATIONS}, setting the pointer to the following device.

\begin{figure}[htb]
 \begin{center}
 \includegraphics[scale=0.7]{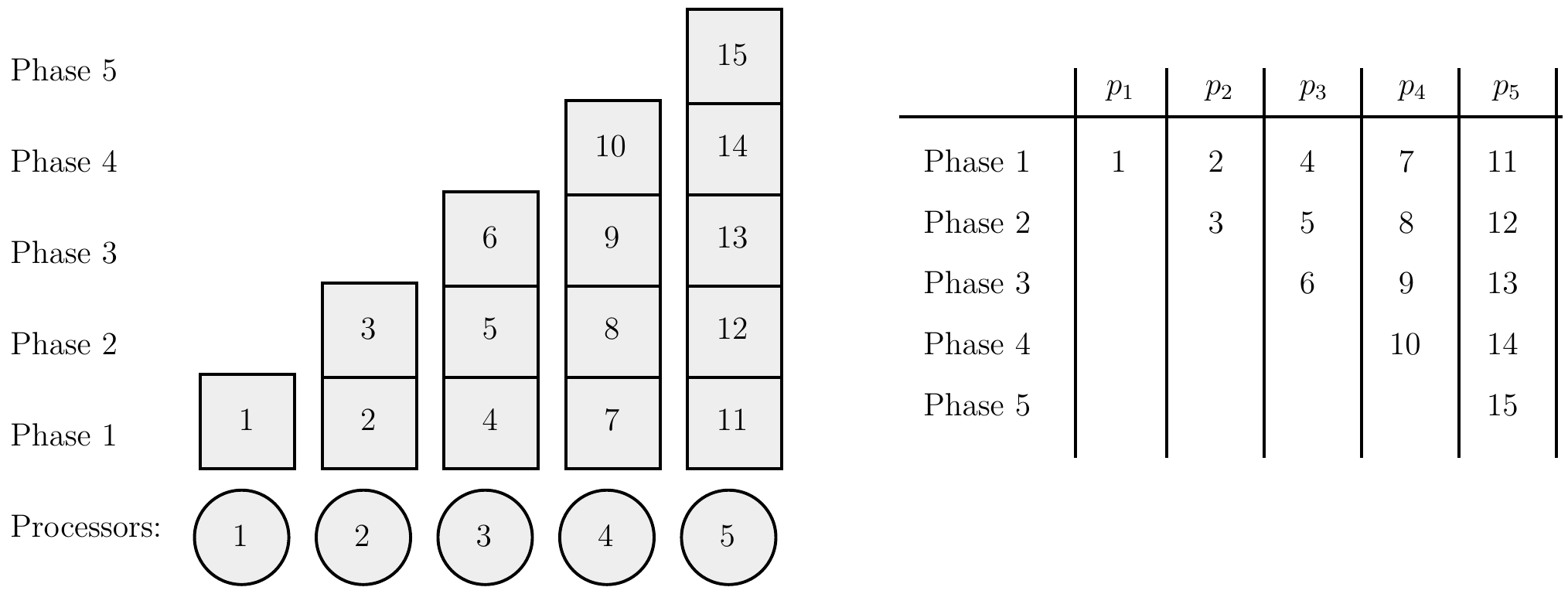}
 \caption{Tasks assignment in \textsc{Two-Lists}.}
 \label{fig:assignment}
 \end{center}
\end{figure}

The second list is \texttt{TASKS}. It contains outstanding tasks, and the associated pointer is \\\texttt{Task\_To\_Do$_{v}$}, separate for each station. Task assignment is organized in the following way
\jm{(see Figure \ref{fig:assignment} for a visualized example).}
Let us present processors from list \texttt{STATIONS} as a sequence $\langle v_{i} \rangle_{1 \leq i \leq n}$, where $ n = \texttt{|STATIONS|} $ is the number of operational stations at the beginning of the epoch.
Each station is responsible for some segment of list \texttt{TASKS} and all segments sum to the whole list. The length of a segment for station $ v_{i} $ equals $ i $ in a single epoch.
A single task may belong to more than
one segment at a time, unless the number of tasks is accordingly greater than the number of stations.

It is noticeable that lists \texttt{STATIONS} and \texttt{TASKS} are treated as common to all the devices, because of maintaining common knowledge.
However, in fact every station has a private copy of those lists and operates with appropriate pointers.

Finally, there are additional two lists maintained by each station. The first one is list $\texttt{OUTSTANDING}_{v} $ and it contains the segment of tasks that station $ v $ has assigned to perform in an epoch.
The second is list $\texttt{DONE}_{v} $ and it contains tasks already performed by station $ v $. These two additional lists are auxiliary and their main purpose is to structure algorithms in a clear and readable way.

\subsubsection{Sparse vs dense epochs}

The last important element of \textsc{Two-Lists} description, that explains some subtleties are definitions of \textit{dense} and \textit{sparse} epochs.

\begin{definition}
 Let $ n = \texttt{|STATIONS|} $ denote the number of operational stations at the beginning of the epoch. If  $ n(n + 1)/2 \geq \texttt{|TASKS|} $ then we say that an epoch is dense.
 Otherwise we say that an epoch is sparse.
 \label{def01}
\end{definition}

The expression $ n(n + 1)/2 = 1 + 2 + \cdots + n $ from the definition above determines how many tasks may be performed in a single epoch. If all the broadcasts in \textsc{Two-Lists} are successful,
then this is the number of performed (and confirmed) tasks.

In general that is why if we consider a dense epoch, then it is possible that some task $ i $ was assigned more than once to different stations. A dense epoch may end when the list of tasks will become empty.
However for sparse epochs the ending condition is consistent with the fact that every station had a possibility to transmit, and pointer \texttt{Transmit} passed all the devices on list \texttt{STATIONS}.

We end this section with results from \cite{CKL} stating that \textsc{Two-Lists} is asymptotically work optimal, for the channel without collision detection and against the Strongly-Adaptive adversary.

\begin{fact} \label{theorem01}
 (\cite{CKL}, Theorem 1) Algorithm Two-Lists solves Do-All with work $\mathcal{O}(t + p\sqrt{t} + p\min\{f, t\}) $ against the f-Bounded adversary, for any $ 0 \leq f < p $.
\end{fact}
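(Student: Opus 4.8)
The plan is to argue reliability first and then bound work by an epoch‑by‑epoch amortized accounting, charging the work of each epoch either to genuine task progress (giving the $t$ and $p\sqrt t$ terms) or to a small per‑crash cost (giving the $p\min\{f,t\}$ term). For reliability it suffices to observe that, since the broadcast schedule inside an epoch is a collision‑free round‑robin, every surviving station sees exactly the same lists $\texttt{STATIONS}$ and $\texttt{TASKS}$ at the start of every epoch, and that in any epoch in which the first few scheduled stations survive, their segments cover and confirm all still‑outstanding tasks; since at most $f<p$ crashes can ever occur and every ``unproductive'' epoch permanently removes at least one station from $\texttt{STATIONS}$, after finitely many epochs all epochs are productive, so $\texttt{TASKS}$ eventually empties and all survivors halt. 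This only needs to be established so that the execution decomposes into finitely many epochs.

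For the work bound, split the total over epochs and classify each epoch by Definition~\ref{def01}. Call an epoch \emph{clean} if no crash occurs during it. A clean \emph{sparse} epoch ($n(n+1)/2<|\texttt{TASKS}|$) runs the full round‑robin for $n$ phases, every broadcast succeeds, and so it confirms and removes $1+2+\cdots+n=\Theta(n^2)$ outstanding tasks while spending $\Theta(n^2)$ work; since these removals are disjoint across sparse epochs, the total work of all clean sparse epochs is $O(t)$. A clean \emph{dense} epoch ($n(n+1)/2\ge|\texttt{TASKS}|$) is the crucial one: the segments have sizes $1,2,\dots,n$, so once the pointer has passed the first $k=\Theta(\sqrt{|\texttt{TASKS}|})$ stations their segments already cover all outstanding tasks (because $k(k+1)/2\ge|\texttt{TASKS}|$); hence such an epoch lasts only $O(\sqrt{t})$ phases, finishes the whole instance, and costs $n\cdot O(\sqrt t)=O(p\sqrt t)$ work. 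Thus the clean part of the execution contributes $O(t+p\sqrt t)$.

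It remains to bound the work of epochs in which crashes occur. A crash is detected as a single round of silence, after which the pointer stays put and the next station is tried immediately, so the number of ``silent'' phases charged to an epoch with $c$ crashes is at most $c$, and the epoch lasts at most $n+c$ phases, i.e.\ costs $O(n(n+c))$ work. I would use the dichotomy: in every epoch, either a constant fraction of the outstanding tasks is confirmed (charge that $O(n^2)$ work to the $\Theta(n^2)$ task completions, absorbed into the $O(t)$ above), or at least, say, $n/2$ of the epoch's stations crash, in which case its $O(n^2)$ work is $O(n)$ per crash, hence $O(p)$ per crash; summing the latter over the at most $f$ crashes that can ever occur gives $O(pf)$, and the same quantity is $O(pt)$ as well because in the sparse regime the station count never exceeds $O(\sqrt t)$ (otherwise the epoch would be dense) and the single dense epoch is short, which yields the $\min\{f,t\}$ cap. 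Adding the three contributions gives work $O(t+p\sqrt t+p\min\{f,t\})$, and $p\cdot\min\{f,t\}$ coincides with $pf$ whenever it matters, since for $f<\sqrt t$ the term $pf$ is already dominated by $p\sqrt t$.

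The main obstacle is precisely the last paragraph: one must show that a crash costs only $O(p)$ work rather than a whole disrupted epoch's worth $\Theta(n^2)=\Theta(p^2)$, and at the same time obtain the $\min\{f,t\}$ cap. The clean route is the dichotomy lemma above — "in every epoch either a constant fraction of the outstanding tasks is confirmed or a constant fraction of the operational stations crashes" — together with a careful treatment of the dense regime as a terminating boundary case. Verifying this dichotomy against the strongly‑adaptive $f$‑bounded adversary requires the bookkeeping of how mid‑epoch crashes interact with the "all tasks covered'' termination condition and with the shrinking of $\texttt{STATIONS}$, in particular confirming that even a heavily disrupted dense epoch still terminates within $O(\sqrt{|\texttt{TASKS}|}+c)$ phases because the surviving segments, taken in order of their positions, still accumulate total length at least $|\texttt{TASKS}|$ after $\Theta(\sqrt{|\texttt{TASKS}|})$ successful broadcasts. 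That is where the real work of the proof lies; everything else is routine summation.
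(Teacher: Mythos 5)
First, a point of reference: the paper itself gives no proof of this statement --- it is imported verbatim as Theorem~1 of \cite{CKL} and used as a black box --- so there is nothing in-paper to match your argument against, and your attempt has to stand on its own. Its architecture is the right one (reliability from the collision-free round-robin and the shared lists; work split into a part charged to task completions, giving $O(t+p\sqrt t)$, and a part charged to crashes at $O(p)$ apiece), and your sparse-epoch analysis is sound: a sparse epoch schedules exactly its $n$ stations, the successful positions form a set $S\subseteq\{1,\dots,n\}$ confirming $\sum_{i\in S}i\ge (n-s)(n-s+1)/2$ pairwise-disjoint tasks when $s$ phases are silent, so either $\Omega(n^2)$ tasks are confirmed or $s>n/2$ and the $O(n^2)$ work is $O(n)=O(p)$ per removed (hence crashed) station.

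The genuine gap is in the dense regime, and it is not merely the boundary case you defer. Your dichotomy --- ``either a constant fraction of the outstanding tasks is confirmed or at least $n/2$ of the epoch's stations crash'' --- is false for a dense epoch: there the pointer visits only the first $k^*=\Theta(\sqrt{t_e})$ positions (where $t_e=|\texttt{TASKS}|$ at the epoch's start), which may be far fewer than $n$, so the adversary can silence a constant fraction of those $k^*$ slots by crashing only $\Theta(\sqrt{t_e})\ll n/2$ stations, leaving $\Omega(t_e)$ tasks unconfirmed while the epoch still costs $\Theta(n\sqrt{t_e})$ work; neither branch of your dichotomy fires. This also falsifies your premise of ``the single dense epoch'': a disrupted dense epoch ends with tasks outstanding and is followed by further (typically dense) epochs, potentially $\Theta(f)$ of them, so without an extra argument the dense epochs alone threaten a bound of order $p\sqrt t\cdot f$. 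The repair is to calibrate the crash threshold to the number of \emph{scheduled} phases rather than to $n$: in a dense epoch either at least half of the outstanding tasks are confirmed --- in which case the values $t_e$ over such epochs decay geometrically and their costs $O(n\sqrt{t_e})$ telescope to $O(p\sqrt t)$ --- or $\Omega(\sqrt{t_e})$ of the $k^*$ scheduled stations are silenced, in which case $O(n\sqrt{t_e})$ work is again $O(n)=O(p)$ per crash. The $\min\{f,t\}$ cap then needs to be extracted from this corrected accounting (using $n=O(\sqrt{t_e})$ in sparse epochs and the geometric decay in dense ones); the one-sentence justification you give for that cap currently rests on the same single-dense-epoch assumption and does not go through as written.
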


\begin{fact}
 (\cite{CKL}, Theorem 2) The f-Bounded adversary, for $ 0 \leq f < p $, can force any reliable, possibly randomized, algorithm for the channel without collision detection to perform work 
 $ \Omega(t + p\sqrt{t} + p\min\{f, t\}) $.
\end{fact}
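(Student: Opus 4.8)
The plan is to prove the three terms separately, since $\Omega(t+p\sqrt t+p\min\{f,t\})=\Omega(\max\{t,\,p\sqrt t,\,p\min\{f,t\}\})$. The term $\Omega(t)$ is immediate: each of the $t$ tasks is performed at least once and each performance costs a unit of work. The term $\Omega(t+p\sqrt t)$ is the absolute lower bound recalled in the introduction (Lemma~2 of~\cite{CKL}), which holds for every reliable, possibly randomized algorithm, even with collision detection and even if no failure occurs; I would simply invoke it. So the whole job is to prove $\Omega(p\min\{f,t\})$, and this is the only place where $f$-boundedness of the adversary actually enters.

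Put $m=\min\{f,t\}$. I would run the following strongly-adaptive adversary (which, being strongly adaptive, sees the random bits and hence knows exactly which station transmits in each round). In each of the first $m$ rounds: if exactly one operational station is scheduled to transmit in that round — the only situation in which a message is actually delivered on a channel without collision detection — crash that station before its message is heard; otherwise do nothing. If $c$ is the number of crashes injected this way, then $c\le m\le f$, so the adversary is legal. The crucial consequence is that \emph{no message is delivered during rounds $1,\dots,m$}: the channel is always either silent or carries an illegible collision, neither of which conveys any positive information in the model without collision detection. Hence, for every $r\le m$, the only tasks an operational station can be \emph{certain} are done are the $\le r$ tasks it has itself performed; the absence of confirmations it might have expected can only lower its confidence, never raise it. Since a reliable algorithm lets a station halt only once all $t$ tasks are genuinely done, this gives: if $t>m$ (so $m=f$), no station can know that all $t$ tasks are done before round $m+1$, hence none halts earlier; if $t\le m$ (so $m=t$), a station still needs $t=m$ rounds just to perform all $t$ tasks by itself. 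Either way, every station that is never crashed remains operational for at least $m$ rounds, contributing $\Omega(m)$ work.

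It remains to sum up. The $p-c$ never-crashed stations contribute $\Omega((p-c)m)$ in total. The crashed stations are $c$ distinct stations crashed in $c$ distinct rounds (a crashed station never transmits again), so the $j$-th of them is crashed no earlier than round $j$ and has already contributed $\ge j-1$ units; together they contribute $\ge\binom{c}{2}$. Thus the work is $\Omega\big((p-c)m+\binom{c}{2}\big)$; as a function of $c\in[0,m]$ this is nonincreasing, so it is at least its value at $c=m$, namely $(p-m)m+\binom{m}{2}$, and a one-line computation using $m=\min\{f,t\}\le f<p$ bounds this below by $\tfrac12 pm=\Omega(p\min\{f,t\})$. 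The construction applies to every fixing of the random bits of a possibly randomized algorithm, hence also to expected work; combined with $\Omega(t+p\sqrt t)$ this yields the claim.

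The step I expect to be the main obstacle is the claim that banning all successful transmissions genuinely prevents any station from halting early. One must argue carefully that the only thing a station can read off the silence/collision pattern is negative information — at best, that some scheduled station has crashed — and that such information can never certify that the outstanding tasks have been performed; this is exactly the place where allowing a richer feedback such as collision detection would break the argument (and indeed the $\Omega(p\min\{f,t\})$ term is known to fail with collision detection). A secondary, routine point is the bookkeeping when the algorithm does not attempt a clean transmission in some of the first $m$ rounds: the adversary then spends fewer crashes and leaves correspondingly more survivors, and the same arithmetic goes through.
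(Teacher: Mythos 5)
The paper does not actually prove this statement: it is imported verbatim as a fact from \cite{CKL} (Theorem~2), so there is no in-paper proof to compare against. That said, your decomposition into the three terms, the adversary strategy (crash every would-be lone transmitter during the first $m=\min\{f,t\}$ rounds, which is legal since at most one such station exists per round), and the closing arithmetic over $c$ crashed and $p-c$ surviving stations are the standard route and are sound as far as they go.

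The genuine gap is precisely the step you flag, and the fix you sketch for it does not close it. You argue that a station that has heard nothing ``cannot be certain'' all tasks are done and therefore cannot halt --- but certain with respect to which continuations? If ``reliable'' meant correct only against $f$-bounded crash patterns, your claim that no station halts before round $m$ would be \emph{false}: for $t\le f$ and $f+1\ll p$, the algorithm in which stations $1,\dots,f+1$ each unconditionally perform all $t$ tasks while the remaining $p-f-1$ stations halt in round $1$ is reliable in that weaker sense, never uses the channel, and has work $(f+1)t+(p-f-1)=o(p\min\{f,t\})$. What rescues the theorem is the convention, stated in the paper's model section, that reliability means $(p-1)$-reliability: correctness for \emph{every} crash pattern leaving at least one operational station, while $f$ only bounds the crashes the adversary actually spends in the execution whose work is being measured. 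Your proof must invoke this explicitly, and the ``only negative information'' heuristic has to be replaced by a concrete indistinguishability construction: if $v$ halts at round $j\le m$ with some task $\beta$ neither performed by $v$ nor confirmed by a successful transmission, build $\cE_2$ from $\cE_1$ by crashing every station that performs $\beta$ just before it would do so and every other station except $v$ at round $j$; since $\beta$ was unconfirmed, the channel history through round $j$ is unchanged, so $v$ still halts, $\beta$ is never performed, and $v$ survives --- contradicting $(p-1)$-reliability. This is the same device the paper itself uses in its beeping-model lower bound (Lemma~\ref{lem31}). With that lemma in place, the remainder of your argument goes through.
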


\begin{fact}
 (\cite{CKL}, Corollary 1) Algorithm Two-Lists is optimal in asymptotic work efficiency, among randomized reliable algorithms for the channel without collision detection, against the adaptive adversary
 who may crash all but one station.
\end{fact}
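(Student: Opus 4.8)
The plan is to obtain this statement as an immediate corollary of Fact~\ref{theorem01} together with the matching lower bound of \cite{CKL} (Theorem 2), both specialized to $f = p-1$. Indeed, the adaptive adversary ``who may crash all but one station'' is precisely the $(p-1)$-bounded adversary (the Unbounded adversary of Section~\ref{adversaries}), so it suffices to check that the upper and lower bounds for that parameter regime coincide asymptotically.

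First I would invoke Fact~\ref{theorem01} with $f = p-1$: algorithm \textsc{Two-Lists} solves \DA\ with work $\mathcal{O}(t + p\sqrt{t} + p\min\{p-1,t\})$ against this adversary. Since \textsc{Two-Lists} is deterministic, this is in particular an upper bound on the worst-case work that a randomized reliable algorithm needs against the same adversary (a deterministic algorithm being a degenerate randomized one). Next I would invoke the lower bound of \cite{CKL} (Theorem 2), again with $f = p-1$: every reliable, possibly randomized, algorithm for the channel without collision detection can be forced to perform work $\Omega(t + p\sqrt{t} + p\min\{p-1,t\})$. Crucially, that lower bound is already stated for randomized algorithms and for the strongly-adaptive adversary, so it applies here verbatim.

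Finally I would close the argument by noting $\min\{p-1,t\} = \Theta(\min\{p,t\})$ for all $p \ge 2$ (the case $p = 1$ being vacuous, as then no crash is permitted and both bounds reduce to $\Theta(t)$), so the upper and lower bounds match up to a constant factor. Consequently no randomized reliable algorithm can asymptotically outperform \textsc{Two-Lists} against an adversary that may crash all but one station, which is exactly the claimed asymptotic work optimality. There is no genuine new difficulty here: the only points requiring care are purely bookkeeping, namely that the cited lower bound holds for randomized algorithms and that ``crashing all but one station'' is the $(p-1)$-bounded case, both of which are already established.
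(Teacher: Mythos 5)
Your derivation is correct and is essentially the intended one: the paper imports this statement from \cite{CKL} without reproving it, and there it follows exactly as you argue, by instantiating the upper bound of Fact~\ref{theorem01} and the randomized lower bound (Theorem~2 of \cite{CKL}) at $f=p-1$ and observing that the two expressions coincide. Your closing remark about $\min\{p-1,t\}=\Theta(\min\{p,t\})$ is not even needed, since both bounds are already stated with the same parameter $f$ and hence match verbatim.
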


\subsection{Algorithm Groups-Together} \label{grotog}
\jm{
Beside \textsc{Two-Lists}, which serves as a sub-procedure for our considerations, some of our algorithms are built on another algorithm from CKL \cite{CKL} --- \textsc{Groups-Together}.
The design of both algorithms is similar, yet \textsc{Groups-Together} was introduced for a channel with collision detection. In what follows, we will describe the technicalities and main differences in this 
subsection.

\noindent Let us recall that a shared channel with collision detection provides three types of signals:

\begin{itemize}
 \item \textbf{Silence} --- no station transmits, and only a background noise is heard;
 \item \textbf{Single} --- exactly one station transmits a legible information;
 \item \textbf{Collision} --- an illegible signal is heard (yet different from \textit{Silence}), when more than one station transmits simultaneously.
\end{itemize}

Simultaneous transmissions are excluded in \textsc{Two-Lists}, as they do not provide any valuable information when collision detection is not available.
In such case a simultaneous transmission of multiple stations results in a silence signal,
\dk{and does not provide any meaningful information.}

Because \textsc{Groups-Together} is specifically designed to work on a channel with collision detection, \dk{then the feedback from collision signals is extensively used.} 
The main difference is that instead of list \texttt{STATIONS}, it maintains list \texttt{GROUPS} --- and indeed, in \textsc{Groups-Together} the stations are arranged into \dk{disjoint} groups. Assigning stations to groups is as follows.
Let $ n $ be the smallest number such that $ n(n+1)/2 > |\texttt{TASKS}| $ holds.
Stations have their unique identifiers from set $ \{1,\dots, p\} $. Let $ g_{i} $ denote some group $ i $, where $ g_{i} $ contains the stations that identifiers are congruent modulo $ i $. For this reason, any two
groups from \texttt{GROUPS} differ in size by at most 1. Consequently, the initial partition results in having $ \min\{\sqrt{t}, p\} $ groups.

Tasks assignment is the same as in \textsc{Two-Lists}, with the difference that now the algorithm operates on groups instead of single stations.
In what follows, all the stations within a single group have the same tasks assigned and hence work together on exactly the same tasks. 
The round-robin schedule of consecutive broadcasts from \textsc{Two-Lists} also applies to \textsc{Groups-Together}, yet now points out particular groups 
instead of single stations. Consequently, if a group broadcasts simultaneously and there is a collision signal (or \dk{a single transmission}) heard on the channel, this means that the tasks
that the group was responsible for have been actually performed and may be removed from list \texttt{TASKS}. However, if silence is heard, then we are sure that all the stations from the group have been crashed.

Apart from the differences described above \textsc{Groups-Together} is the same \textsc{Two-Lists}. It is structured as a loop, which one iteration is called an epoch. An epoch is also structured as a repeat
loop which one iteration is called a phase. Phases contain three rounds, one of which is for transmission. If no transmission occurs in a phase, we call it \textit{silent}. Otherwise it is called \textit{noisy}.
The notions of dense and sparse epochs remain the same as in the \textsc{Two-Lists} analysis.

We finish this section with useful results from CKL~\cite{CKL} about \textsc{Groups-Together}.

\begin{fact}
 (\cite{CKL}, Lemma 4) Algorithm \textsc{Groups-Together} is reliable.
 \label{lemma21}
\end{fact}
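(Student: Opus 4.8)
The plan is to verify directly the two defining properties of a reliable algorithm for \textsc{Groups-Together}: (i) \emph{safety} --- whenever a station halts, all $t$ tasks have genuinely been performed; and (ii) \emph{termination} --- every operational station eventually halts. Both should follow from one structural feature inherited from \textsc{Two-Lists}: within an epoch the groups broadcast one at a time in a fixed round-robin order, and the group occupying the $i$-th slot of \texttt{GROUPS} is assigned a segment of exactly $i$ outstanding tasks, so by the time its broadcast slot comes up --- the $i$-th noisy/silent transmission of the epoch --- each of its still-operational members has already executed the task-performing round $i$ times and hence completed the whole segment.

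For safety I would establish the invariant that a task is deleted from \texttt{TASKS} only after it has been performed. A task $x$ is deleted only in the update round of a phase in which a signal (\textbf{Single} or \textbf{Collision}, as opposed to \textbf{Silence}) is heard from the currently scheduled group $g$ and $x$ lies in $g$'s commonly-known segment. A non-silent signal means at least one member $w$ of $g$ is operational at that round; by the timing observation above, $w$ has already performed all of $g$'s segment, in particular $x$. Conversely, if \textbf{Silence} is heard, every member of $g$ has crashed, so dropping $g$ from \texttt{GROUPS} removes no station that could still perform tasks. Since a station halts only upon finding \texttt{TASKS} empty, the invariant shows that at the moment of halting all tasks have been performed; and because all members of a surviving group share the same assignment, no task is ever orphaned while an operational station remains.

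For termination I would use a simple progress measure. As long as at least one station is operational, every epoch begins with at least one non-empty, fully alive group $g_j$. An epoch either (a) is dense and ends because \texttt{TASKS} becomes empty --- in which case all operational stations halt immediately --- or (b) runs its round-robin to completion, in which case $g_j$'s broadcast slot is reached, its segment (of size $\geq 1$) is confirmed, and $|\texttt{TASKS}|$ strictly decreases. Each epoch consists of $O(\min\{\sqrt{t},p\})$ three-round phases, hence is finite, and $|\texttt{TASKS}|$ can decrease at most $t$ times, so after finitely many epochs \texttt{TASKS} is empty and every operational station halts; combined with safety this yields reliability. I expect the main obstacle to be the bookkeeping around dense epochs --- where segments may overlap, where the epoch can terminate before the round-robin finishes, and where one must still certify that the removed tasks are exactly the performed ones --- together with pinning down the off-by-one correspondence between a group's position in \texttt{GROUPS} and the length of its assigned segment, on which the whole ``fully done by broadcast time'' argument rests.
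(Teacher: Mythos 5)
This statement is imported into the paper verbatim as Fact~1, cited from Lemma~4 of~\cite{CKL}; the paper itself reproduces no proof, so there is nothing to match your argument against line by line. Judged on its own merits, your safety/termination decomposition is the standard (and correct) way to establish reliability for this family of algorithms, and it is essentially the CKL argument: the group in slot $i$ has a segment of $i$ tasks and has performed one task per phase by the time phase $i$ arrives, so a non-silent signal certifies its whole segment, while Silence certifies that no operational station is being discarded. Two points deserve explicit care if you write this out in full. First, your termination step needs the observation that a group containing an operational station is \emph{never} removed from \texttt{GROUPS} (its slot always yields Single or Collision), so such a group survives into every subsequent epoch and is re-assigned a nonempty segment; without this the ``at least one fully alive group at the start of each epoch'' premise is unjustified. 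Second, in a dense epoch with overlapping segments, confirming a live group's segment need not remove any \emph{new} tasks; the per-epoch progress claim should be phrased as ``either some task is removed during the epoch, or the live group's segment is still wholly present when its slot arrives and is then removed,'' which in either case gives a strict decrease of $|\texttt{TASKS}|$ per completed epoch. With those two patches the argument is sound and matches what \cite{CKL} proves.
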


\begin{fact}
 (\cite{CKL}, Theorem 3) Algorithm \textsc{Groups-Together} solves Do-All with the minimal work $ \mathcal{O}(t + p\sqrt{t}) $ against the $ f $-Bounded adversary, for any $ f $ such that $ 0 \leq f < p $.
 \label{theorem21}
 \end{fact}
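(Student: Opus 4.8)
The plan is to derive the $\cO(t+p\sqrt t)$ work bound from the epoch/phase structure of \GT, taking its reliability (Fact~\ref{lemma21}) as given. The first ingredient is a crude per-epoch estimate: if an epoch starts with $p_e$ operational stations and $T_e$ outstanding tasks, the number of groups is $\min\{n_e,p_e\}$ where $n_e$ is the smallest integer with $\binom{n_e+1}{2}>T_e$, so $n_e=\Theta(\sqrt{T_e})$; the epoch lasts at most $\min\{n_e,p_e\}\le n_e$ phases of three rounds, hence contributes work $\cO(n_e p_e)=\cO(p_e\sqrt{T_e})$.

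The second ingredient is the structural fact that makes collision detection count: when a group broadcasts, the channel reports \emph{Single} or \emph{Collision}, and either one confirms the whole segment assigned to that group; thus a \emph{silent} phase occurs precisely when every station of the broadcasting group has crashed. Since the $n_e$ segments have sizes $1,2,\dots,n_e$ and sum to $\binom{n_e+1}{2}>T_e$, they cover \texttt{TASKS}; therefore, in the regime $p_e\ge n_e$, an epoch that loses no group confirms \emph{all} outstanding tasks and must be the last one. Hence every non-final epoch in this regime has at least one silent phase, i.e.\ loses at least one full group of size $\lceil p_e/n_e\rceil$. If $m_e$ groups are lost (worst case: the $m_e$ largest segments), then at least $m_e\lceil p_e/n_e\rceil$ stations crash and $T_{e+1}\le \binom{n_e+1}{2}-\binom{n_e-m_e+1}{2}\le m_e n_e$.

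The third ingredient is a two-bucket charging argument over the regime $p_e\ge n_e$. Call an epoch \emph{station-heavy} if $m_e\ge n_e/4$ and \emph{task-heavy} otherwise. A station-heavy epoch crashes at least $(n_e/4)\lceil p_e/n_e\rceil\ge p_e/4$ stations, so the operational count drops by a constant factor; as it never increases there are only $\cO(\log p)$ such epochs, their $p_e$ values form a geometric sequence, and (using $n_e=\cO(\sqrt t)$) their total work is $\sum \cO(n_e p_e)=\cO(\sqrt t)\cdot\cO(p)=\cO(p\sqrt t)$. A task-heavy epoch has $T_{e+1}\le m_e n_e<n_e^2/4\le \tfrac34 T_e$ (using $\binom{n_e}{2}\le T_e$), so $T_e$ drops by a constant factor; there are only $\cO(\log t)$ of them, the $T_e$ decrease geometrically, and their total work is $\sum \cO(p\sqrt{T_e})=\cO(p\sqrt t)$. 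Together with the single final epoch, $\cO(p\sqrt t)$, this covers the execution as long as $p_e\ge n_e$. The complementary regime $p_e<n_e$ forces $p<\Theta(\sqrt t)$ and $T_e=\Omega(p^2)$; there all groups are singletons, \GT\ behaves exactly like \TL\ (a lone transmitter gives \emph{Single}, so collision detection is irrelevant), and Fact~\ref{theorem01} yields work $\cO(t+p\sqrt t+p\min\{f,t\})=\cO(t+p\sqrt t)$ because $p\min\{f,t\}\le p^2\le p\sqrt t$. Splicing the two regimes at the round where the partition first stops being singleton gives the claimed $\cO(t+p\sqrt t)$.

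The step I expect to be the main obstacle is precisely this splicing: one must check that the (doubly-exponentially fast) shrinkage of $T_e$ in the large-group regime and the $\cO(t)$-type accounting (roughly $p_e^2/2$ tasks confirmed per epoch) in the singleton regime can be run in sequence without double-counting work or crashes, and that the adversary cannot force a long sequence of expensive epochs by keeping $p_e$ oscillating near the boundary $p_e\approx n_e$. Making the constant-factor decrements and the geometric sums precise is routine; controlling the interaction of the two regimes is the delicate part, and is presumably why \cite{CKL} phrase the bound as $\cO(t+p\sqrt t)$ rather than $\cO(p\sqrt t)$.
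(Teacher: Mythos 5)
The paper does not actually prove this statement: it is imported verbatim as a fact from \cite{CKL} (Theorem~3 there), so there is no in-paper argument to compare yours against. Judged on its own terms, your reconstruction follows what is essentially the standard CKL-style accounting and its skeleton is sound: the per-epoch bound $\cO(p_e n_e)=\cO(p_e\sqrt{T_e})$, the observation that with collision detection a phase is silent only if the whole group has crashed (so every non-final covering epoch loses a group), the two-potential split into station-heavy and task-heavy epochs with geometrically decreasing $p_e$ and $T_e$ along the respective subsequences, and the reduction of the singleton regime to \TL\ via Fact~\ref{theorem01} with $p\min\{f,t\}\le p_e^2\le p_e\sqrt{T_e}$. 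The constants are slightly off in places (with balanced groups a station-heavy epoch kills about $p_e/8$ rather than $p_e/4$ stations), but nothing that affects the asymptotics.

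The one place where your write-up is genuinely incomplete is the point you yourself flag: the splicing of the two regimes. Since $p_e$ and $T_e$ both only decrease, the comparison $p_e$ versus $n_e\approx\sqrt{T_e}$ can indeed flip back and forth, so you cannot invoke Fact~\ref{theorem01} as a black box for ``the remainder of the execution'' once the singleton regime is first entered. The clean fix is to replace that invocation by a global charging argument that is insensitive to interleaving: (i) dense (covering) epochs are paid for by your two geometric potentials, which only use the global monotonicity of $p_e$ and $T_e$; (ii) in sparse singleton epochs, every noisy phase's work is charged to the tasks it confirms, giving $\cO(t)$ in total because a task is confirmed only once; and (iii) every silent phase in a sparse singleton epoch permanently removes at least one station and costs $\cO(p_e)=\cO(\sqrt{T_e})=\cO(\sqrt t)$ work, giving $\cO(p\sqrt t)$ over the whole execution. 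With that substitution the oscillation you worry about is harmless, and the bound $\cO(t+p\sqrt t)$ follows as you claim.
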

}
\section{ROBAL --- Random Order Balanced Allocation Lists}
\label{robal}

In this section we describe and analyze the algorithm 
for the \DA\
problem in the presence of a Linearly-Ordered adversary
on a channel without collision-detection. Its expected work complexity is $ \mathcal{O}(t + p\sqrt{t}\log(p)) $ and it uses the \textsc{Two-Lists} procedure from \cite{CKL}
(c.f., Section~\ref{algtools}).

\begin{algorithm}
\If{$ p^{2} \leq t $}{execute \textsc{Two-Lists}\;}
\Else{
{initialize $ \texttt{STATIONS} $ to a sorted list of all $ p $ names of stations\;}
\If{$\log_{2}(p) > e^{\frac{\sqrt{t}}{32}}$}
{execute a $t$-phase epoch where every station has all tasks assigned (without transmissions)\;
execute \textsc{Confirm-Work}\;}
\Else{

{$ i = 0 $\;}
\Repeat{$ i = \lceil \log_{2}(p) \rceil $}{

\If{($ \frac{p}{2^{i}} \leq \sqrt{t} $)}
{Execute \textsc{Two-Lists}\;}
\If{\textsc{Mix-And-Test($ i, t, p $)}}{
\Repeat{less than $ \frac{1}{4}\sqrt{t} $ broadcasts are heard}{Execute $ \sqrt{t} $ phases of \textsc{Two-Lists}\;}
}
{increment $ i $ by $ 1 $\;}
}
}
}
\caption{\textsc{ROBAL}, code for station $ v $}
\label{algorithm12}
\end{algorithm}

\begin{algorithm}
{$i := 0$ \;}
\Repeat{a broadcast was heard}{

{$\texttt{coin} := \frac{p}{2^{i}}$ \;}
{toss a coin with the probability $ \texttt{coin}^{-1} $ of heads to come up\;}
\If{heads came up in the previous step}{broadcast $ v $ via the channel and attempt to receive a message\;}
\If{some station $ w $ was heard}{
clear list $\texttt{TASKS}$\;
break\;
}
\Else{
  increment $ i $ by $ 1 $ \;
  \If{ $ i = \lceil \log_{2}(p) \rceil + 1 $ }
  {$ i := 0 $\;}
}
}

\caption{\textsc{Confirm-Work}, code for station $ v $}
\label{algorithm13}
\end{algorithm}

\begin{algorithm}
\KwIn{$i, t, p$}
{$\texttt{coin} := \frac{p}{2^{i}}$\;}
{\For{$\sqrt{t}\log(p)$ times}
{
{\If{$v$ has not been moved to front of list $ \texttt{STATIONS} $ yet}{toss a coin with the probability $ \texttt{coin}^{-1} $ of heads to come up}}
{\If{heads came up in the previous step}{broadcast $ v $ via the channel and attempt to receive a message}}
{\If{some station $ w $ was heard}{
{move station $ w $ to the front of list $ \texttt{STATIONS} $\;}
{decrement $ \texttt{coin}$ by $ 1 $\;}
}}
}}
{\If{at least $\sqrt{t} $ broadcasts were heard}{return \texttt{true}\;}
\Else{return \texttt{false}\;}
}

\caption{\textsc{Mix-And-Test}, code for station $ v $ }
\label{algorithm11}
\end{algorithm}

\jm{\textsc{ROBAL} \mj{(Algorithm \ref{algorithm12})} works in such a way that, initially, it checks whether $ p^{2} > t $, because for such parameters it can execute the \textsc{Two-Lists} 
algorithm which complexity if linear in $ t $ (see Fact
\ref{fact12}). If this is not the case, the main body of the algorithm is executed, yet another specific condition is checked: $ \log_{2}(p) > e^{\frac{\sqrt{t}}{32}} $. If so, it assigns all tasks
to each station. If every station has all the tasks assigned, then after $ t $ phases we may be sure that all the tasks are done, because always at least one station remains operational.
Because of the specific range of the parameters, the redundant work in this case is acceptable (i.e., within the claimed bound) from the point of view of our analysis. 
However we execute procedure \textsc{Confirm-Work} \mj{(Algorithm \ref{algorithm13})} in order to confirm this fact on the channel.

\textsc{Confirm-Work} is a type of leader election procedure. It assigns certain probability of a station to broadcast, in a way expecting that exactly one station will transmit in a number of trials.
Because we cannot be sure what is the actual number of operational stations, the probability is changed multiple times until all the tasks are confirmed.
}

If the specific conditions are discussed above are not satisfied (Algorithm \ref{algorithm12} lines 1-10), 
then \textsc{Mix-And-Text} is executed \mj{(Algorithm \ref{algorithm11})}. It changes the order of stations on list $ \texttt{STATIONS} $. Precisely, stations that performed successful broadcasts are moved to front of that list.
This procedure has two purposes. On one hand changing the order makes the adversary less flexible in crashing stations, as its order is already determined. On the other hand, we may predict with high probability
to which interval $n \in(\frac{p}{2^{i}}, \frac{p}{2^{i-1}}] $ for $ i = 1, \cdots, \lceil\log_{2}(p)\rceil $ does the current number of operational stations belong, what is important from the work analysis
perspective.

\begin{figure}[htb]
 \begin{center}
 \includegraphics[scale=0.6]{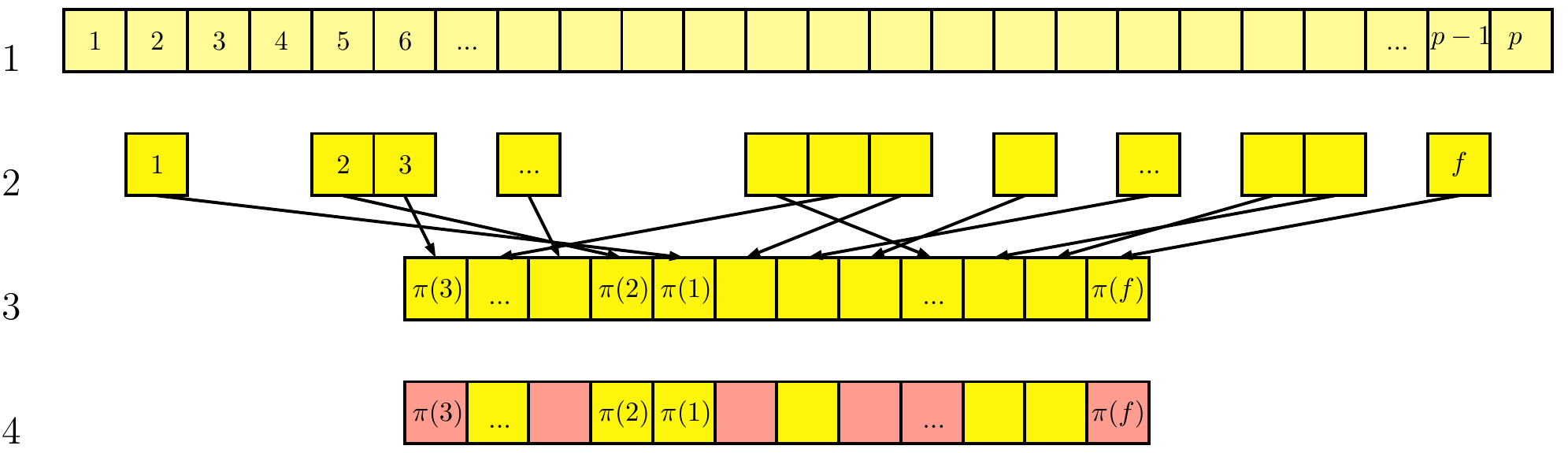}
 \caption{(1) Initially we have $ p $ stations. (2) The adversary chooses $ f $ stations prone to crashes. (3) Then it declares the order according to which the stations will crash. (4) \textsc{Mix-And-Test}
 chooses a number of leaders which are expected to be distributed uniformly among the adversary linear order.}
 \label{fig:doall7}
 \end{center}
\end{figure}

Stations moved to front of list \texttt{STATIONS} are called \textit{leaders}. \textit{Leaders} are chosen in a random process, so we expect that they will be uniformly distributed in the
adversary order between stations that were not chosen as \textit{leaders}. This allows us to assume that a crash of a \textit{leader} is likely to be preceded by several crashes of other stations
(see Figure \ref{fig:doall7}).

Let us consider procedure \textsc{Mix-And-Test} in detail. If $ n $ is the previously predicted number of operational stations, then each of the stations tosses a coin with the probability of success equal
$ 1/n $. In case where none or more than one of the stations broadcasts then silence is heard on the channel, as there is no collision detection. Otherwise, when only one station did successfully broadcast
it is moved to front of list $ \texttt{STATIONS} $ and the procedure starts again with a decremented parameter. However stations that have already been moved to front do not take part
in the following iterations of the procedure.

\jm{
Upon having chosen the leaders, regular work is performed. However, an important feature of our algorithm is that we do not perform full epochs, but only $ \sqrt{t} $ phases of each \textsc{Two-Lists} epoch. 
This allows us to be sure that the total work accrued in each epoch does not exceed $ p\sqrt{t} $. If, at some point, the number of successful broadcasts substantially drops, another \textsc{Mix-And-Test} 
(Algorithm \ref{algorithm11}) procedure is executed and a new set of leaders is chosen.
}

Before the algorithm execution the Linearly-Ordered adversary has to choose $ f $ stations prone to crashes and declare an order that will describe in what order those crashes may happen.
In what follows, when there are unsuccessful broadcasts of \textit{leaders} (crashes) we may be approaching the case when $ n \leq \sqrt{t} $ and we can execute \textsc{Two-Lists} that complexity is
linear in $ t $ for such parameters. Alternatively the adversary spends the majority of its possible crashes and the stations may finish all the tasks without any distractions.

\subsection{Analysis of ROBAL}

\jm{We begin our analysis with a general statement about the reliability of \Robal.}

\begin{lemma}
 Algorithm ROBAL is reliable.
 \label{fact-to-lem11}
\end{lemma}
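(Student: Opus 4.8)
The plan is to verify the two reliability conditions --- that all tasks are eventually performed whenever at least one station stays operational, and that every non-crashed station eventually halts --- by a case analysis that mirrors the branching structure of Algorithm~\ref{algorithm12}. First I would dispose of the easy branch: if $p^2 \le t$, ROBAL simply runs \textsc{Two-Lists}, which is reliable by Fact~\ref{theorem01} (and the associated results of \cite{CKL}), so both conditions hold immediately. This lets us assume $p^2 > t$ for the remainder of the argument.

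Next I would handle the ``degenerate parameter'' branch, $\log_2 p > e^{\sqrt{t}/32}$. Here every station is assigned \emph{all} $t$ tasks and runs a $t$-phase epoch with no transmissions. Since at least one station is operational throughout (the adversary is $f$-bounded with $f<p$), and one round suffices to perform one task, after $t$ phases every outstanding task has been performed by that surviving station; so the problem is completed. What remains is to argue that \textsc{Confirm-Work} terminates and lets every operational station halt. \textsc{Confirm-Work} is a leader-election loop: each station broadcasts with probability $p/2^i$ cycling $i$ through $0,\dots,\lceil\log_2 p\rceil$, and the loop exits for a station precisely when some broadcast is heard. The key observation is that once the true count $n$ of operational stations satisfies $p/2^i \approx n$, a single successful transmission occurs with constant probability in that round, so with probability $1$ some broadcast is eventually heard by all; when it is, every listener clears \texttt{TASKS} and breaks, hence halts. (For the expected-work bound this needs quantification, but for reliability the almost-sure termination argument suffices.)

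The substantive branch is the main loop (lines 11--19): $i$ runs from $0$ to $\lceil\log_2 p\rceil$, and within iteration $i$ the algorithm may run \textsc{Two-Lists} (if $p/2^i \le \sqrt t$), may run \textsc{Mix-And-Test}, and --- if \textsc{Mix-And-Test} returns \texttt{true} --- repeatedly runs $\sqrt t$-phase blocks of \textsc{Two-Lists} until fewer than $\frac14\sqrt t$ broadcasts are heard. For reliability I would argue as follows. The loop is finite in $i$, and the only place that could loop forever is the inner \texttt{Repeat} on $\sqrt t$-phase blocks; but \textsc{Two-Lists} makes monotone progress (tasks are only removed from \texttt{TASKS}, stations only removed from \texttt{STATIONS}), and since \texttt{STATIONS} is finite the number of successful broadcasts in a block must eventually drop below $\frac14\sqrt t$, so each inner \texttt{Repeat} terminates. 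Thus ROBAL always reaches $i = \lceil\log_2 p\rceil$ and terminates. For correctness (all tasks done): note that running $\sqrt t$ phases of \textsc{Two-Lists}, rather than a full epoch, does not harm safety --- \texttt{TASKS}, \texttt{DONE}$_v$ and the pointers are preserved between blocks, so partial epochs simply resume where they left off, and no task is ever falsely marked done (a task is removed from \texttt{TASKS} only after a station broadcasts having performed it). When $i$ reaches the value with $p/2^i \le \sqrt t$, \textsc{Two-Lists} is invoked outright and, being reliable, finishes all remaining tasks and halts every survivor; and if the adversary never drives $n$ that low, it must have exhausted crashes while stations kept performing tasks, so \texttt{TASKS} empties and \textsc{Two-Lists}'s halting rule (``if \texttt{TASKS} empty then halt'') fires. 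Either way both reliability conditions hold.

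The main obstacle is the almost-sure termination of the randomized sub-procedures --- \textsc{Confirm-Work} and \textsc{Mix-And-Test} --- since these are the only non-deterministic components, and one must be careful that the adversary cannot conspire (by crashing the would-be unique transmitter, or by keeping the operational count pinned between the tested thresholds) to prevent a successful broadcast forever; the resolution is that cycling $i$ through all scales $p/2^i$ guarantees that in \emph{some} round the broadcast probability matches the current $n$ up to a constant, giving a constant per-round success probability and hence termination with probability $1$, regardless of adversarial scheduling. All the finite-loop and monotone-progress arguments for the deterministic scaffolding are routine by comparison, leaning directly on the reliability of \textsc{Two-Lists} (Fact~\ref{theorem01}).
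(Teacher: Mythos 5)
Your proof follows essentially the same route as the paper's: a case analysis over the three branches of Algorithm~\ref{algorithm12}, invoking the reliability of \textsc{Two-Lists} in the branches where it is called, the survival of at least one station in the all-tasks-assigned branch, and termination of the outer loop after $\lceil\log_2 p\rceil$ increments of $i$ (plus the extra, welcome detail on termination of \textsc{Confirm-Work}, which the paper glosses over). One caveat: your claim that finiteness of \texttt{STATIONS} forces the broadcast count in a $\sqrt{t}$-phase block to eventually drop below $\frac14\sqrt{t}$ is not correct as stated (in a crash-free execution every phase yields a successful broadcast); the inner repeat terminates for the reason the paper gives and that you yourself supply at the end of the same paragraph --- sustained successful broadcasts confirm tasks, so \texttt{TASKS} empties and the halting rule fires.
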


\begin{proof}

We need to show that all the tasks will be performed as a result of executing the algorithm. First of all, if we fall in to the case when $ \frac{p}{2^{i}} \leq \sqrt{t} $ (or initially $ p \leq \sqrt{t} $) then
\textsc{Two-Lists} is executed, which is reliable as we know from \cite{CKL}.

Secondly, when $\log_{2}(p) > e^{\frac{\sqrt{t}}{32}} $ we assign all the tasks to every station and let the stations work for $ t $ phases. We know that $ f < p $ so at least one station will
perform all the tasks.

Finally, if those conditions do not hold, the algorithm runs an external loop in which variable $ i $ increments after each iteration. If the loop is performed $ \lceil \log_{2}(p) \rceil $ times 
then we run \textsc{Two-Lists}. Variable $ i $ may not be incremented only if the algorithm will enter and stay in the internal loop. However this is possible only after performing all the tasks, because the
internal loop runs for a constant number of times until all tasks are completed.
\end{proof}

\jm{\noindent We now proceed to a statement bounding the worst case work of \textsc{Two-Lists}, which is used as a sub-procedure in \Robal.}

\begin{fact}
\textsc{Two-Lists} always solves the Do-All problem with $ \mathcal{O}(pt) $ work.
 \label{fact12}
\end{fact}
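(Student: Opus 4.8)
The plan is to obtain this worst-case bound as an immediate corollary of the tight work bound for \textsc{Two-Lists} already recalled in Fact~\ref{theorem01}, rather than re-analyzing the epoch/phase structure of the algorithm from scratch. The key observation is that the only adversary-dependent term in that bound, $p\min\{f,t\}$, never exceeds $pt$, so in the worst case the refined formula collapses to $\cO(pt)$.

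Concretely, I would proceed as follows. First, invoke Fact~\ref{theorem01}: for every $f$ with $0\le f<p$, \textsc{Two-Lists} solves \DA\ --- in particular it is reliable and terminates with all tasks performed --- with work $\cO\!\left(t+p\sqrt t+p\min\{f,t\}\right)$ against the $f$-bounded adversary. Since \DA\ is solvable only when at least one station stays operational, every admissible execution is governed by some $f\le p-1$, so this guarantee applies in all cases, i.e., ``always''. Second, bound each summand by $pt$: $t\le pt$ because $p\ge 1$; $p\sqrt t\le pt$ because $\sqrt t\le t$ for $t\ge 1$ (and the case $t=0$ is vacuous); and $p\min\{f,t\}\le pt$ because $\min\{f,t\}\le t$. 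Summing, $t+p\sqrt t+p\min\{f,t\}\le 3pt=\cO(pt)$, which is the claim.

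There is no substantive difficulty here; the subtlety worth flagging is a negative one. A from-scratch re-derivation --- counting $\cO(p)$ work per phase, at most $p$ phases per epoch, and bounding the number of epochs by separately charging ``crash epochs'' (of which there are at most $f<p$) against ``task-productive epochs'' --- overshoots, yielding only something like $\cO(pt\log p + p^3)$, which is asymptotically weaker than $\cO(pt)$. This is precisely why the right move is to lean on the already-established optimal bound of Fact~\ref{theorem01}. The only point of care in the cited-bound argument is scope: Fact~\ref{theorem01} is stated uniformly over all $0\le f<p$, so it also covers the unbounded adversary and, a fortiori, every ordered or delayed adversary considered later in the paper (each of which is $(p-1)$-bounded); this is exactly what is needed when \Robal\ falls back to \textsc{Two-Lists} as a sub-procedure.
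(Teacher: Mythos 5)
Your proof is correct, but it takes a different route from the paper. The paper's justification of Fact~\ref{fact12} is a single direct observation: $\cO(pt)$ work corresponds to the extreme scenario in which every one of the $p$ stations performs every one of the $t$ tasks, and the structure of \textsc{Two-Lists} (each operational station performs one task per phase and halts once \texttt{TASKS} is empty) means no execution can cost more than that. You instead derive the bound as a corollary of the tight guarantee of Fact~\ref{theorem01}, observing that each summand of $t+p\sqrt{t}+p\min\{f,t\}$ is at most $pt$ and that the statement is uniform over all $0\le f<p$, hence covers every admissible execution. Both arguments are sound; yours is the more rigorous of the two, since it makes the logical dependence on an already-proved theorem explicit and handles the ``always'' quantifier cleanly, whereas the paper's one-liner leaves the reader to verify that a station's lifetime is $\cO(t)$ phases. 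What the paper's version buys is independence from Fact~\ref{theorem01}: it is a purely structural upper bound that would survive even if the refined analysis were unavailable. One small correction to your closing remark: a from-scratch derivation need not overshoot to $\cO(pt\log p+p^3)$ --- the paper's own direct accounting (at most $t$ task-performing phases per station, each contributing $\cO(1)$ work per station) lands on $\cO(pt)$ exactly; only the particular epoch-counting scheme you sketch is lossy.
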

$\mathcal{O}(pt) $ work is consistent with a scenario when every station performs every task. Comparing it with how \textsc{Two-Lists} works, justifies the fact.

\jm{We already mentioned that \textsc{ROBAL} was modeled in such a way that, whenever $ \frac{p}{2^{i}}\leq \sqrt{t} $ holds, the \textsc{Two-Lists} algorithm is executed, because
the work complexity of \textsc{Two-Lists} for such parameters is $ \mathcal{O}(t) $. We will prove it in the following fact.}

\begin{fact}
 Let $ n $ be the number of operational processors, and $ t $ be the number of outstanding tasks. Then for $ n \leq \sqrt{t} $ \textsc{Two-Lists} work complexity is $ \mathcal{O}(t) $.
\label{fact13}
 \end{fact}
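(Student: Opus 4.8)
The plan is to bound the work of \textsc{Two-Lists} when $n \leq \sqrt{t}$ by analyzing its epoch structure. First I would observe that with $n \leq \sqrt{t}$ operational stations, Definition~\ref{def01} tells us whether an epoch is dense or sparse: since $n(n+1)/2 \leq t$ whenever $n \leq \sqrt{t}$ (roughly), essentially every epoch with the full complement of stations is \emph{sparse}. In a sparse epoch the terminating condition is that the pointer \texttt{Transmit} has cycled through all stations currently on list \texttt{STATIONS}, so an epoch lasts at most $n \leq \sqrt{t}$ phases (three rounds each), contributing at most $O(n \cdot n) = O(n^2) = O(t)$ work since there are at most $n$ operational stations each doing $O(1)$ work per phase. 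The remaining point is to bound the number of epochs: in each epoch, either at least one task is confirmed and removed from \texttt{TASKS}, or at least one station is detected as crashed and removed from \texttt{STATIONS}.

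Next I would carry out the accounting. In a sparse epoch with $n'$ operational stations at its start, the number of tasks performed and confirmed when all broadcasts succeed is $1 + 2 + \cdots + n' = n'(n'+1)/2$; if the adversary crashes stations to prevent confirmations, then each such crash is charged against the crash budget, and there are at most $f < p$ crashes total, but more usefully at most $n - 1$ crashes among currently operational stations before the algorithm would reach the single surviving station. Thus I would argue: the total number of epochs is $O(\sqrt{t})$. This is because the $t$ tasks get confirmed in chunks — either an epoch confirms $\Omega(n'')$ tasks (where $n''$ is the number of non-crashed-during-the-epoch stations that actually transmitted), or it "wastes" itself detecting crashes, and the crash detections are bounded by $n \leq \sqrt{t}$. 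Summing $O(t)$ work per epoch over $O(\sqrt{t})$ epochs naively gives $O(t^{3/2})$, so I would need to be more careful: in fact I should bound total work as $\sum_{\text{epochs}} O((n'_{\text{epoch}})^2)$ where $n'_{\text{epoch}}$ decreases over time, or better, bound total work directly as $O(t)$ by noting that the number of \emph{phases} across all epochs is $O(t/n + \sqrt{t}) = O(\sqrt{t})$ when $n \le \sqrt t$, since productive phases account for $O(t)$ tasks at $\Theta(n')$ tasks per productive epoch and crash-handling is $O(\sqrt t)$ phases; each phase costs $O(n) = O(\sqrt{t})$ work, giving $O(t)$.

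Concretely, the cleanest route I would take is the following. Let $e$ index the epochs and $n_e$ the number of stations on \texttt{STATIONS} at the start of epoch $e$; note $n_e$ is non-increasing and $n_1 = n \leq \sqrt{t}$. Each epoch contributes $O(n_e)$ phases (sparse case, terminates after the round-robin completes) hence $O(n_e^2) \le O(n_e \sqrt t)$ work. The number of epochs that remove at least one station is at most $n$, contributing $\sum O(n_e^2) \le O(n \cdot n^2) \le O(t^{1/2}\cdot t)$ — too weak again; so instead I charge: an epoch either (a) empties \texttt{TASKS} (the last epoch only), (b) confirms $\geq n_e/2$ tasks, say, because a majority of the $n_e$ scheduled broadcasts succeeded (each confirming a fresh segment — here one must be careful that segments may overlap when $t$ is small, but the segment for the $i$-th station has length $i$, and across all stations they cover \texttt{TASKS}, so a constant fraction of successful broadcasts confirm $\Omega(n_e^2)$ tasks), or (c) at least $n_e/2$ stations are crashed during the epoch. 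Case (b) can happen at most $O(t / 1) $ times in aggregate but the tasks confirmed telescope to at most $t$, so $\sum_{(b)} n_e^2 = O(t)$...

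The main obstacle, and where I would spend the real effort, is exactly this subtlety: when $t$ is small relative to $n$ the task-segments assigned to stations overlap, so "one successful broadcast confirms a fresh segment" is not literally true, and the relationship between number of successful broadcasts in an epoch and number of newly-confirmed tasks needs the sparse/dense dichotomy (Definition~\ref{def01}) handled carefully. For $n \le \sqrt t$ we are squarely in the sparse regime where $n(n+1)/2 \le t$, so in fact segments are disjoint or nearly so and the $i$-th station genuinely owns $i$ distinct tasks; an epoch in which $k$ stations broadcast successfully confirms $\Omega(k^2/\ldots)$ — more precisely, if the surviving prefix has size $m$ then $\Theta(m^2)$ tasks — so the whole process terminates within $O(\log)$-many "productive" epochs or after the station count is driven down, and the total number of phases over the entire execution is $O(\sqrt t)$, each costing $O(n) = O(\sqrt t)$ work, for $O(t)$ total. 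I would also invoke Fact~\ref{theorem01} as a sanity check: plugging $n \le \sqrt t$ (and crudely $f \le n \le \sqrt t$) into $O(t + p\sqrt t + p\min\{f,t\})$ with $p = n$ gives $O(t + t + t) = O(t)$, which is reassuring but not a proof by itself since that bound is stated for $p$ stations globally, not for a sub-execution with a reduced station count — so the self-contained epoch-counting argument above is the one I would write out in full.
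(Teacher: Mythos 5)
The paper's proof of this fact is precisely the one-line substitution you relegate to a ``sanity check'' at the very end and then set aside: apply Fact~\ref{theorem01} with the number of stations equal to $n$ and with the number of outstanding crashes $f < n \le \sqrt{t}$ (at least one station must survive, so at most $n-1$ of the $n$ operational stations can still crash), which gives $\mathcal{O}(t + \sqrt{t}\cdot\sqrt{t} + \sqrt{t}\cdot\min\{\sqrt{t},t\}) = \mathcal{O}(t)$. Your objection --- that Fact~\ref{theorem01} is ``stated for $p$ stations globally, not for a sub-execution with a reduced station count'' --- dissolves once you look at how the fact is used: \Robal\ invokes \textsc{Two-Lists} from scratch on the current population of $n$ stations and the current list of outstanding tasks, so this is literally an execution of \textsc{Two-Lists} with $p := n$ processors, $t' \le t$ tasks and at most $n-1$ crashes, to which Fact~\ref{theorem01} applies verbatim. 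The route you dismissed is the intended, and sufficient, proof.

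The self-contained argument you offer instead does not close. Its final quantitative claim --- that the total number of phases over the whole execution is $\mathcal{O}(\sqrt{t})$, each costing $\mathcal{O}(n)=\mathcal{O}(\sqrt{t})$ work --- is false: if crashes quickly reduce the system to one operational station, that station needs $\Theta(t)$ further phases, so the phase count can be $\Theta(t)$; the work bound survives only because each such phase then costs $\mathcal{O}(1)$, which is exactly the phases-times-current-population accounting you would need to carry out and do not. Your case analysis (a)/(b)/(c) is abandoned mid-sentence, and the earlier attempts are, as you note yourself, off by a factor of $\sqrt{t}$. A correct self-contained accounting runs as follows: a sparse epoch starting with $n_e$ stations costs $\mathcal{O}(n_e^2)$ work; if $s_e$ of its broadcasts succeed and $c_e = n_e - s_e$ stations are detected as crashed, then (segments being pairwise disjoint in the sparse regime) at least $s_e(s_e+1)/2$ tasks are newly confirmed, so $n_e^2 \le 2s_e^2 + 2c_e^2 = \mathcal{O}(\mbox{tasks confirmed}) + \mathcal{O}(n\, c_e)$; summing over epochs and using $\sum_e c_e \le f < n$ yields $\mathcal{O}(t + nf) = \mathcal{O}(t + n^2) = \mathcal{O}(t)$. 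This is in effect a re-derivation of the $p\min\{f,t\}$ term of Fact~\ref{theorem01}, which is exactly why the paper simply cites that fact rather than reproving it.
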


\begin{proof}
If $ n \leq \sqrt{t} $, then the outstanding number of crashes is $ f < n $, hence $ f < \sqrt{t} $.
Algorithm \textsc{Two-Lists} has $ \mathcal{O}(t + p\sqrt{t} + p\:\min\{f, t\}) $ work complexity. 
In what follows the complexity is $ \mathcal{O}(t + \sqrt{t}\sqrt{t} + \sqrt{t}\:\min\{\sqrt{t}, t\}) $
$ = \mathcal{O}(t) $.
\end{proof}

\remove{
\begin{lemma}
 (\cite{PC}, Theorem 4.5) Let $ X_{1} + \cdots + X_{n} $ be independent Poisson trials, such that $ \mathbb{P}[X_{i}] = p_{i} $. Let $ X = \sum_{i = 1}^{n} X_{i}$, and $ \mu = \mathbb{E}X $.
 Then the following Chernoff bound holds:
 
  $$ \mathbb{P}[X \leq (1-\epsilon)\mu] \leq e^{-\frac{\epsilon^{2}}{2} \mu }, $$ for $ 0 < \epsilon < 1 .$
\label{lem11}
\end{lemma}
}

\jm{\noindent Figure \ref{fig:doall7} presents the way how we expect leaders to interlace other stations in the adversary's order. The following lemma estimates the probability that if a number of leaders
was crashed, then, overall, a significant number of stations must have been crashed as well.}

\begin{lemma}

Let us assume that we have $ n $ operational stations at the beginning of an epoch, where $ \sqrt{t} $ were chosen leaders.
 If the adversary crashes $ n/2 $ stations, then the probability that there were $ 3/4 $ of the overall number of leaders crashed in this group does not exceed
 $ e^{-\frac{1}{8} \sqrt{t}} $.
 \label{lem12}
\end{lemma}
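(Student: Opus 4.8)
The plan is to set up the problem as a hypergeometric-type sampling argument and then invoke a Chernoff-style concentration bound. We have $n$ operational stations, of which $\sqrt{t}$ are designated leaders (chosen randomly, so that conditioned on the set of leaders being of size $\sqrt t$, every $\sqrt t$-subset of the $n$ stations is equally likely). The adversary then crashes a group of $n/2$ stations — crucially, the adversary's choice of \emph{which} group of $n/2$ to crash is made without knowledge of the random leader selection (this is where the Linearly-Ordered / weakly-adaptive nature of the adversary enters: the order, and hence the crashable prefix, is fixed before the random bits are seen). So the number $X$ of leaders that fall inside the crashed group of size $n/2$ is distributed as the number of ``marked'' elements in a uniformly random $\sqrt t$-subset, when $n/2$ of the $n$ elements are marked — equivalently, $X$ is hypergeometric with $\mathbb{E}[X] = \sqrt{t}\cdot \frac{n/2}{n} = \frac{1}{2}\sqrt{t}$.

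The heart of the argument is then a deviation bound: we want $\Pr[X \ge \frac{3}{4}\sqrt{t}]$, i.e., $\Pr[X \ge (1+\epsilon)\mathbb{E}[X]]$ with $\epsilon = 1/2$. First I would note that the hypergeometric distribution is at least as concentrated as the corresponding binomial $\mathrm{Bin}(\sqrt t, 1/2)$ — this is the classical fact (Hoeffding) that sampling without replacement concentrates at least as well as sampling with replacement, so every Chernoff upper-tail bound for the binomial applies verbatim to $X$. Then I would apply the multiplicative Chernoff bound $\Pr[X \ge (1+\epsilon)\mu] \le \exp(-\mu\epsilon^2/3)$ (or the slightly sharper $\exp(-\mu\epsilon^2/(2+\epsilon))$), which with $\mu = \frac12\sqrt t$ and $\epsilon = \frac12$ gives $\exp\!\big(-\frac{1}{2}\sqrt t \cdot \frac{1/4}{3}\big) = \exp(-\sqrt t / 24)$, comfortably within the claimed $e^{-\sqrt t/8}$ once one picks the right constant in the exponent; alternatively the form $\Pr[X\ge(1+\epsilon)\mu]\le e^{-\epsilon^2\mu/(2+\epsilon)}$ gives $e^{-(1/4)(\sqrt t/2)/(5/2)} = e^{-\sqrt t/20}$, and one can also just bound directly to get the stated $e^{-\sqrt t/8}$ by being a touch more careful (e.g. using $\Pr[X\ge 3\mu/2]$ against a binomial and the bound $e^{-\mu/8}$ that follows from $(e^{1/2}/(3/2)^{3/2})^{\mu}$). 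So the constant $1/8$ in the statement is not tight but certainly attainable.

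The main obstacle — more a subtlety than a genuine difficulty — is justifying the \emph{independence} that makes $X$ hypergeometric: one must argue carefully that the adversary, when selecting the $n/2$ stations it crashes in this epoch, has no information about the (fresh) random leader selection performed in the current \textsc{Mix-And-Test} call, so that the crashed set and the leader set are independent and the uniform-subset model is legitimate. I would handle this by appealing to the model description (the order is preselected; for the Linearly-Ordered adversary the crashable stations in an epoch form a prefix determined by the fixed order, chosen before any random bits) and to the fact that leaders are re-randomized in each \textsc{Mix-And-Test}. A secondary point to be clean about is the phrase ``crashes $n/2$ stations'' — I would read it as the adversary crashing a specific set of $n/2$ of the epoch's operational stations — and ``$3/4$ of the overall number of leaders'' as $\ge \frac34\sqrt t$ leaders lying in that crashed set; with those readings the computation above closes the lemma.
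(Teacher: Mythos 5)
Your setup is exactly the paper's: you model the number $X$ of leaders falling into the crashed set as a hypergeometric random variable ($n$ elements, $n/2$ marked, a uniformly random $\sqrt{t}$-subset of leaders), with $\mathbb{E}[X]=\tfrac{1}{2}\sqrt{t}$, and your care about why the crashed set is independent of the fresh leader randomization is a welcome addition (the paper simply asserts the hypergeometric model). The gap is in the final inequality. The multiplicative Chernoff bound with $\mu=\tfrac{1}{2}\sqrt{t}$ and $\epsilon=\tfrac{1}{2}$ gives $e^{-\sqrt{t}/24}$ or $e^{-\sqrt{t}/20}$, and these are \emph{weaker} statements than the lemma's $e^{-\sqrt{t}/8}$ (they are larger upper bounds on the probability), so they do not imply the claim --- your phrase ``comfortably within the claimed $e^{-\sqrt{t}/8}$'' has the direction of the comparison backwards. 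Your proposed sharper route also fails: the inequality $\bigl(e^{1/2}/(3/2)^{3/2}\bigr)^{\mu}\le e^{-\mu/8}$ would require $e^{5/8}\le (3/2)^{3/2}$, i.e.\ $1.868\le 1.837$, which is false; and even if it held it would only yield $e^{-\mu/8}=e^{-\sqrt{t}/16}$, still short of $e^{-\sqrt{t}/8}$.

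The fix is to switch from the multiplicative to the additive (Hoeffding) form of the tail bound, which is what the paper does: for the hypergeometric with sampling fraction $p=K/N=1/2$ and $l=\sqrt{t}$ draws, Hoeffding's inequality gives $\Pr[X\ge (p+\tau)l]\le e^{-2\tau^{2}l}$, and with $\tau=1/4$ this is exactly $\Pr[X\ge \tfrac{3}{4}\sqrt{t}]\le e^{-\sqrt{t}/8}$. (Equivalently, apply the additive Hoeffding bound to the dominating binomial $\mathrm{Bin}(\sqrt{t},1/2)$ after your without-replacement domination step.) Note that the constant is not purely cosmetic here: the proof of Theorem~\ref{theorem11} compares $e^{-\sqrt{t}/8}$ against the threshold $\log_2 p \le e^{\sqrt{t}/32}$ via the factorization $e^{\sqrt{t}/8}=e^{\sqrt{t}/16}e^{\sqrt{t}/16}$, so a weaker exponent would force you to retune that threshold throughout.
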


\begin{proof}
 We have $ n $ stations, among which $ \sqrt{t} $ are leaders.
 The adversary crashes $ n/2 $ stations and our question is how many leaders where in this group?
 
 The hypergeometric distribution function with parameters $ N $ - number of elements, $ K $ - number of highlighted elements, $ l $ - number of trials, $ k $ - number of successes, is given by:
 
 $$ \mathbb{P}[X = k] = \frac{\binom{K}{k}\binom{N-K}{l-k}}{\binom{N}{l}}. $$
 
 \noindent
 The following tail bound from \cite{Hoef} tells us, that for any $ t > 0 $ and $ p = \frac{K}{N} $:
 
 $$ \mathbb{P}[X \geq (p + t)l] \leq e^{-2t^{2}l}. $$
 
 \noindent
 Identifying this with our process we have that $ K = n/2 $, $ N = n $, $ l = \sqrt{t} $ and consequently $ p = 1/2 $. Placing $ t = 1/4 $ we have that
 
 $$ \mathbb{P}\left[X \geq \frac{3}{4}\sqrt{t}\right] \leq e^{-\frac{1}{8}\sqrt{t}}. $$
\end{proof}

\jm{\noindent The following two lemmas give us the probability that \textsc{Mix-And-Test} diagnoses the number of operational stations properly, and hence, that the whole randomized part of \Robal\ works correctly
with high probability.}

\begin{lemma}
 Let us assume that the number of operational stations is in $(\frac{p}{2^{i}}, \frac{p}{2^{i-1}}] $ interval. Then procedure
 \textsc{Mix-And-Test}($i, t, p$) will return \textit{true} with probability $ 1 - e^{-c\;\sqrt{t}\log_{2}(p)} $, for some $ 0 < c < 1 $.
 \label{lem14}
\end{lemma}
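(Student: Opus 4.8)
The plan is to analyze a single call to \textsc{Mix-And-Test}$(i,t,p)$ under the hypothesis $n\in(\frac{p}{2^i},\frac{p}{2^{i-1}}]$, where $n$ is the number of operational stations at the start of the call, and to show that the number of successful (i.e.\ single-transmitter) broadcasts over the $\sqrt{t}\log p$ iterations is at least $\sqrt{t}$ except with probability $e^{-c\sqrt{t}\log_2 p}$. The core observation is that in each iteration every station that has not yet been moved to the front tosses an independent coin with success probability $\texttt{coin}^{-1}=\frac{2^i}{p}$; since $\frac{p}{2^i}\le n$, the expected number of transmitters in an iteration is at least $1$ (and, because $n\le \frac{2p}{2^i}$, at most roughly $2$, up to the correction from already-moved stations, which only decreases the count). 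A standard Poissonization / Markov-type estimate shows that a single iteration yields exactly one transmitter — hence a successful broadcast — with some constant probability $q>0$, uniformly over the iterations, as long as not too many stations have already been promoted to the front.

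First I would make the "not too many promoted yet" bookkeeping precise: once $\sqrt{t}$ broadcasts have been heard the procedure has already met its target, so it suffices to bound the failure event on the sub-execution where fewer than $\sqrt{t}$ stations have been moved to the front; throughout that regime the number of still-active stations is between $n-\sqrt{t}$ and $n$, so the per-iteration success probability $q$ of hearing exactly one transmitter stays bounded below by an absolute constant (using $\sqrt t \le n$, which holds since we are in the branch where $\frac{p}{2^i} > \sqrt t$ would have triggered \textsc{Two-Lists}; more carefully one keeps the bound as long as $n-\sqrt t \ge \frac{p}{2^{i+1}}$, say). Then I would model the sequence of $\sqrt{t}\log p$ iterations as Bernoulli-like trials that each contribute a success with probability at least $q$, even conditioned on the past (a supermartingale / domination argument, since crashes by the adversary and prior promotions only reduce the active set in a way already accounted for). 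The expected number of successes is at least $q\sqrt{t}\log p$, which for $p$ large enough dominates the target $\sqrt{t}$ by a factor of $\Omega(\log p)$.

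Next I would apply the Chernoff lower-tail bound (the commented-out Lemma from the excerpt, $\Pr[X\le(1-\epsilon)\mu]\le e^{-\epsilon^2\mu/2}$, or its domination-friendly variant for dependent trials) with $\mu \ge q\sqrt{t}\log p$ and $\epsilon$ chosen so that $(1-\epsilon)\mu \ge \sqrt{t}$; this is possible whenever $q\log p \ge 2$, i.e.\ for all sufficiently large $p$, and then $\epsilon$ is bounded away from $0$. The resulting failure probability is at most $e^{-\epsilon^2 q \sqrt{t}\log p/2} = e^{-c\sqrt t\log_2 p}$ for a suitable constant $0<c<1$, which is exactly the claimed bound; the remaining small-$p$ cases are absorbed into the constant or handled by the surrounding case analysis in \textsc{ROBAL}.

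The main obstacle I anticipate is the conditioning: the coin tosses across iterations are not genuinely independent because (a) the adversary, being linearly-ordered and adaptive, may crash stations between iterations based on observed broadcasts, and (b) stations that succeed are removed from future tosses, shifting the effective population. The clean way around this is to argue that both effects only \emph{shrink} the active set, and to show that as long as the active set is at least $\frac{p}{2^{i+1}}$ the per-iteration single-transmission probability remains $\ge q$; one then couples the true success process with a sequence of independent $\mathrm{Bernoulli}(q)$ variables that it stochastically dominates, and applies Chernoff to the dominating sequence. One must also check the degenerate case where the adversary crashes so many stations that the active count drops below $\frac{p}{2^{i+1}}$ mid-call — but in that case the number of crashes in this epoch is $\Omega(p/2^i)=\Omega(n)$, which is charged elsewhere in the work analysis, so it does not break the probabilistic claim as stated (the lemma is about the event conditioned on the population staying in its predicted interval).
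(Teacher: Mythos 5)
Your proposal follows essentially the same route as the paper: a constant lower bound (the paper's Claim gives $\tfrac{1}{2\sqrt{e}}$) on the probability of hearing exactly one transmitter in a single iteration, followed by a Chernoff lower-tail bound over the $\sqrt{t}\log_2 p$ iterations with $(1-\epsilon)\mu=\sqrt{t}$, yielding failure probability $e^{-c\sqrt{t}\log_2 p}$. The only difference is that you explicitly address the dependence between iterations (promotions and adversarial crashes) via stochastic domination, whereas the paper simply treats the iterations as independent Poisson trials; your added care is a refinement of, not a departure from, the paper's argument.
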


\begin{proof}

\begin{claim} \label{claim11}
 Let the current number of operational stations be in $ (\frac{x}{2}, x] $.
 Then the probability of an event that in a single iteration of \textsc{Mix-And-Test} exactly one station will broadcast 
 is at least $ \frac{1}{2\sqrt{e}} $ (where the $\texttt{coin}^{-1}$ parameter is $ \frac{1}{x} $).
\end{claim}

\begin{proof}
Let us consider a scenario where the number of operational stations is in $ (\frac{x}{2}, x] $ for some $ x $. 
If every station broadcasts with probability of success equal $ 1/x $ then the probability of an event that exactly one station will transmit is $ (1 - \frac{1}{x})^{x-1} \geq 1/e $. 
Estimating the worst case, when there are $ \frac{x}{2} $ living stations (and the probability of success remains $ 1/x $) we have that
$$ \frac{1}{2} \left(1 - \frac{1}{x}\right)^{x \cdot \frac{x-2}{2}} \geq \frac{1}{2\sqrt{e}}. $$
\end{proof}

According to Claim \ref{claim11} the probability of an event that in a single round of \textsc{Mix-And-Test} exactly one stations will be heard is $ \frac{1}{2\sqrt{e}} $.

We assume that $ n \in (\frac{p}{2^{i}}, \frac{p}{2^{i-1}}] $. We will show that the algorithm confirms appropriate $ i $ with probability $ 1 - e^{-c\;\sqrt{t}\log_{2}p} $. 
For this purpose we need $ \sqrt{t} $ transmissions to be heard. 

Let $ X $ be a random variable such that $ X = X_{1} + \cdots + X_{\sqrt{t}\log_{2}(p)}, $ where $ X_{1}, \cdots, X_{\sqrt{t}\log_{2}(p)} $ are Poisson trials and

$$
X_{k} = \left\{ \begin{array}{ll}
1 & \textrm{if station broadcasted,}\\
0 & \textrm{otherwise.}
\end{array} \right 
.$$
We know that $$ \mu = \mathbb{E}X = \mathbb{E}X_{1} + \cdots + \mathbb{E}X_{\sqrt{t}\log_{2}(p)} \geq \frac{\sqrt{t}\log_{2}(p)}{2\sqrt{e}}. $$
To estimate the probability that $ \sqrt{t} $ transmissions were heard we will use the Chernoff's inequality.

We want to have that $ (1-\epsilon)\mu = \sqrt{t} $. Thus $ \epsilon = \frac{\mu - \sqrt{t}}{\mu} = \frac{\log_{2}(p) - 2\sqrt{e}}{\log_{2}(p)} $ and
$ 0 < \epsilon < 1 $ for sufficiently large $ p $.
Hence

$$ \mathbb{P}[X < \sqrt{t}] \leq e^{-\frac{\left(\frac{\log_{2}(p) - 2\sqrt{e}}{\log_{2}(p)}\right)^{2}}{2}  \frac{\sqrt{t}\log_{2}(p)}{2\sqrt{e}}} = e^{-c\;\sqrt{t}\log_{2}(p)}, $$
for some bounded $ 0 < c < 1 $. We conclude that with probability $ 1 - e^{-c\;\sqrt{t}\log_{2}(p)} $ we will confirm the correct $ i $ which describes and estimates 
the current number of operational stations.
\end{proof}

\begin{lemma}
 $\textsc{Mix-And-Test}(i, t, p) $ will \textbf{not be} executed if there are more than $ \frac{p}{2^{i-1}} $ operational stations, with probability not less than  
 $ 1 - (\log_{2}(p))^{2}\:\max\{ e^{-\frac{1}{8}\sqrt{t}} ,e^{-c\;\sqrt{t}\log_{2}(p)} \}. $
 \label{lem16}
\end{lemma}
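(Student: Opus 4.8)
The plan is to prove the contrapositive-style statement by bounding the probability that \textsc{Mix-And-Test}$(i,t,p)$ \emph{is} executed at an iteration where the true number $n$ of operational stations exceeds $\frac{p}{2^{i-1}}$, and then take a union bound over the at most $\lceil\log_2 p\rceil$ outer-loop iterations. Recall that in the outer loop of \Robal\ the counter starts at $j=0$ and increments; \textsc{Mix-And-Test}$(i,\cdot,\cdot)$ is invoked only after the iterations for smaller indices $j<i$ either failed their tests or were passed through. So for \textsc{Mix-And-Test}$(i,t,p)$ to be called while $n>\frac{p}{2^{i-1}}$, there must have been some earlier index $j\le i-1$ at which the procedure either (a) returned \texttt{true} despite $n$ being above the interval $(\frac{p}{2^{j}},\frac{p}{2^{j-1}}]$, so the algorithm wrongly ``locked in'' and then moved on only after too few broadcasts, or (b) failed to detect the level it should have detected. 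The key observation is that ``too many operational stations'' for level $j$ means the coin probability $2^{j}/p$ used in that call is too large, so the expected number of solo-broadcast rounds is far below the $\sqrt{t}$ threshold — and Lemma~\ref{lem14}'s Chernoff computation (run in the opposite direction) shows such a call returns \texttt{true} only with probability at most $e^{-c\sqrt{t}\log_2 p}$.

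Concretely, I would structure the argument as follows. First, fix an index $i$ and condition on the event that $n>\frac{p}{2^{i-1}}$ at the moment \textsc{Mix-And-Test}$(i,t,p)$ would be called. Decompose the ``bad call'' event along the history of the outer loop: either some strictly earlier call \textsc{Mix-And-Test}$(j,t,p)$ with $j<i$ returned \texttt{true} when it should not have (a false positive), or all earlier calls behaved as expected but the current excess of stations survived, which can only happen if crashes between levels were too slow. For the false-positive case I would mirror the Chernoff estimate of Lemma~\ref{lem14}: when $n$ is larger than $\frac{p}{2^{j-1}}$, the per-round success probability $(1-2^{j}/p)^{n-1}\cdot n\cdot 2^{j}/p$ is bounded away from $1/(2\sqrt e)$ from below — in fact it is exponentially small in $n\cdot 2^j/p \gg 1$, so $\mu = \mathbb{E}X$ over the $\sqrt t\log_2 p$ trials is well below $\sqrt t$, and an upper-tail Chernoff bound gives $\mathbb{P}[X\ge\sqrt t]\le e^{-c'\sqrt t\log_2 p}$. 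For the ``slow crashes'' case I would invoke Lemma~\ref{lem12}: if the station count is still above $\frac{p}{2^{i-1}}$ while the algorithm has already advanced the counter, a constant fraction of the previously chosen leaders must have been crashed without a proportionally large total crash count, which happens with probability at most $e^{-\frac18\sqrt t}$ per transition. There are at most $\lceil\log_2 p\rceil$ such transitions and at most $\lceil\log_2 p\rceil$ calls, so a union bound over the $O((\log_2 p)^2)$ relevant events yields the claimed bound $1-(\log_2 p)^2\max\{e^{-\frac18\sqrt t},e^{-c\sqrt t\log_2 p}\}$.

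The main obstacle I anticipate is making the case analysis of ``which earlier event went wrong'' fully rigorous: the outer loop's behaviour couples the random leader selection (via \textsc{Mix-And-Test}) with the adversary's ordered crash pattern, so I must be careful that conditioning on the count at the time of the $i$-th call does not distort the distributions used in Lemmas~\ref{lem12} and~\ref{lem14}. The cleanest way around this is to bound, for \emph{each} level $j\le i-1$ separately and unconditionally, the probability of the specific local failure (false positive at level $j$, or insufficient crashes across the $j\to j{+}1$ transition), each of which is a statement about one call to \textsc{Mix-And-Test} or one application of Lemma~\ref{lem12} and hence depends only on the randomness of that call and the fixed adversarial order; then observe that the event ``\textsc{Mix-And-Test}$(i,t,p)$ runs with $n>\frac{p}{2^{i-1}}$'' is contained in the union of these per-level failures. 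The remaining work — plugging $\epsilon$ into the Chernoff bound in the upper-tail direction and checking $0<\epsilon<1$ for large $p$ — is routine and parallels the computation already done in Lemma~\ref{lem14}, so I would state it briefly and refer back.
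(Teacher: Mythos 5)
Your overall architecture --- a per-level decomposition of the bad event, with Lemma~\ref{lem12} controlling the crash transitions and a \textsc{Mix-And-Test} accuracy bound controlling the test outcomes, combined over $O(\log^2 p)$ events --- matches the paper's proof, which runs the same argument as a conditional chain $\mathbb{P}(A_i)\ge\mathbb{P}(A_i\mid A_{i-1})\mathbb{P}(A_{i-1})$ followed by an intersection bound. However, your concrete plan misidentifies which \textsc{Mix-And-Test} failure mode is dangerous. Given that the invariant holds at level $j$ (i.e., $n\le \frac{p}{2^{j-1}}$ when \textsc{Mix-And-Test}$(j,t,p)$ is called), a \emph{false positive} (the test returns \texttt{true} while $n$ is above its target interval) is harmless for this lemma: a \texttt{true} outcome sends \Robal\ into the work loop, whose exit condition (three quarters of the $\sqrt t$ leaders crashed) forces, via Lemma~\ref{lem12}, the halving $n\le\frac{p}{2^{j}}$ needed for the next level regardless of whether the \texttt{true} was ``correct''. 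The event that actually breaks the invariant is the \emph{false negative}: $n\in(\frac{p}{2^{j}},\frac{p}{2^{j-1}}]$ yet the test returns \texttt{false}, so the counter advances and \textsc{Mix-And-Test}$(j+1,t,p)$ is entered with $n>\frac{p}{2^{j}}$ and with no crash transition in between. This path is not contained in your union of ``false positives'' and ``slow crashes'' (the latter only arises after a \texttt{true}), so the decomposition in your second paragraph does not cover the bad event; your first paragraph's case (b) names the right event but the technical work you then propose --- a new upper-tail Chernoff bound showing that an overcrowded call rarely returns \texttt{true} --- addresses the wrong one.

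The fix is cheap: drop the upper-tail computation entirely and bound the false negative directly by Lemma~\ref{lem14}, which already states that a call at level $j$ with $n$ in its target interval returns \texttt{true} with probability $1-e^{-c\sqrt t\log_2 p}$, hence returns \texttt{false} with probability at most $e^{-c\sqrt t\log_2 p}$. This is exactly how the paper argues its case ``Ad 1''. Your remaining concern about conditioning distorting the distributions is reasonable, and handling each level's local failure separately before taking the union bound is a legitimate (arguably cleaner) way to organize what the paper does with its conditional product and the Bernoulli inequality; but the proof only closes once the per-level failure event is the false negative rather than the false positive.
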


\begin{proof}
Let $ A_{i} $ denote an event that at the beginning of and execution of the \textsc{Mix-And-Test}($i, t, p$) procedure there are no more than $ \frac{p}{2^{i-1}} $ operational stations.

The basic case then $ i = 0 $ is trivial, because initially we have $ p $ operational stations, thus $ \mathbb{P}(A_{0}) = 1 $. Let us consider an arbitrary $ i $.
We know that 

$$ \mathbb{P}(A_{i}) = \mathbb{P}(A_{i}|A_{i-1})\mathbb{P}(A_{i-1}) + \mathbb{P}(A_{i}|A^{c}_{i-1})\mathbb{P}(A^{c}_{i-1}) \geq \mathbb{P}(A_{i}|A_{i-1})\mathbb{P}(A_{i-1}). $$
Let us estimate $ \mathbb{P}(A_{i}|A_{i-1}) $. Conditioned on that event $ A_{i-1} $ holds, we know that after executing \textsc{Mix-And-Test}($i-1, t, p$) we had $ \frac{p}{2^{i-2}} $ operational stations.
In what follows if we are now considering \textsc{Mix-And-Test}($i, t, p$), then we have two options:

\begin{enumerate}
 \item \textsc{Mix-And-Test}($i-1, t, p$) returned \textit{false},
 \item \textsc{Mix-And-Test}($i-1, t, p$) returned \textit{true}.
\end{enumerate}
Let us examine what do these cases mean:
\begin{enumerate}
 \item[Ad 1.] If the procedure returned \textit{false} then we know from Lemma \ref{lem14} that with probability $ 1 - e^{-c\;\sqrt{t}\log_{2}(p)} $ there had to be no more than $ \frac{p}{2^{i-1}} $ operational
 stations. If that number would be in  $ (\frac{p}{2^{i-1}}, \frac{p}{2^{i-2}}] $ then the probability of returning \textit{false} would be less than $ e^{-c\;\sqrt{t}\log_{2}(p)} $.
 \item[Ad 2.] If the procedure returned \textit{true}, this means that when executing it with parameters $ (i-1, f, p) $ we had no more than $ \frac{p}{2^{i-1}} $ operational stations.
 Then the internal loop of \textsc{ROBAL} was broken, so according to Lemma \ref{lem12} we conclude that the overall number of operational stations had to reduce by half with probability at least
 $ 1 - e^{-\frac{1}{8}\sqrt{t}} $.
\end{enumerate}
Consequently, we deduce that $ \mathbb{P}(A_{i}|A_{i-1}) \geq (1 - \max\{ e^{-\frac{1}{8}\sqrt{t}} ,e^{-c\;\sqrt{t}\log_{2}(p)} \}) $. Hence
$ \mathbb{P}(A_{i}) \geq (1 - \max\{ e^{-\frac{1}{8}\sqrt{t}} ,e^{-c\;\sqrt{t}\log_{2}(p)} \})^{i} $. Together with the fact, that $ i \leq \log_{2}(p) $ and the Bernoulli inequality we have that

$$ \mathbb{P}(A_{i}) \geq 1 - \log_{2}(p)\:\max\{ e^{-\frac{1}{8}\sqrt{t}} ,e^{-c\;\sqrt{t}\log_{2}(p)} \}.$$
We conclude that the probability that the conjunction of events $ A_{1},\cdots,A_{\log_{2}(p)} $ will hold is at least

$$ \mathbb{P}\left(\bigcap_{i = 1}^{\log_{2}(p)}A_{i}\right) \geq 1 - (\log_{2}(p))^{2}\:\max\{ e^{-\frac{1}{8}\sqrt{t}} ,e^{-c\;\sqrt{t}\log_{2}(p)} \}.$$

\end{proof}

\jm{\noindent We can now proceed to the main result of this section.}

\begin{theorem}
 ROBAL performs $\mathcal{O}(t + p\sqrt{t}\log(p)) $ expected work against the Linearly-Ordered adversary in the channel without collision detection.
 \label{theorem11}
\end{theorem}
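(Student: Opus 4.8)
The plan is to take reliability for granted (Lemma~\ref{fact-to-lem11}) and bound the \emph{expected} work by walking through the three branches of Algorithm~\ref{algorithm12}.

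\textbf{Easy branches.} First, if $p^{2}\le t$ the algorithm runs \textsc{Two-Lists}; here $p\le\sqrt t$, so Fact~\ref{fact13} immediately gives work $\mathcal{O}(t)$. Second, if $p^{2}>t$ but $\log_{2}p>e^{\sqrt t/32}$, the branch hypothesis forces $\sqrt t<32\ln\log_{2}p=o(\log_{2}p)$; the $t$-phase epoch in which every station holds every task costs $\mathcal{O}(pt)=\mathcal{O}(p\sqrt t\cdot\sqrt t)=o(p\sqrt t\log p)$, and it completes all tasks since at least one station survives. It then remains to bound \textsc{Confirm-Work} (Algorithm~\ref{algorithm13}): it is a leader-election loop that cycles the success probability $2^{i}/p$ through all $\mathcal{O}(\log p)$ scales, so within each cycle it meets the scale matching the current number of survivors, at which point (as in Claim~\ref{claim11}) exactly one station transmits with constant probability; hence it halts in $\mathcal{O}(\log p)$ expected rounds and $\mathcal{O}(p\log p)=\mathcal{O}(p\sqrt t\log p)$ expected work.

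\textbf{The main branch ($p^{2}>t$, $\log_{2}p\le e^{\sqrt t/32}$).} Here I would organize the analysis around the levels $i=1,\dots,\lceil\log_{2}p\rceil$, where level $i$ ``owns'' the range of survivors $n\in(p/2^{i},p/2^{i-1}]$, and condition on the good event $G=\bigcap_{i}A_{i}$ of Lemma~\ref{lem16} (augmented with the correctness events of Lemmas~\ref{lem14} and~\ref{lem12}). A union bound over the $\le\log_{2}p$ levels, together with the branch hypothesis $\log_{2}p\le e^{\sqrt t/32}$ (which yields $e^{-\sqrt t/8}\le(\log_{2}p)^{-4}$ and makes $e^{-c\sqrt t\log_{2}p}$ negligible), shows $\Pr[\overline{G}]$ is polynomially small in $\log p$. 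On $G$, at level $i$ there are $n_{i}\le p/2^{i-1}$ operational stations: the single call \textsc{Mix-And-Test}$(i)$ runs $\mathcal{O}(\sqrt t\log p)$ rounds and thus costs $\mathcal{O}(n_{i}\sqrt t\log p)$; if it returns \texttt{true}, the inner loop executes $\sqrt t$-phase batches of \textsc{Two-Lists}, each of cost $\mathcal{O}(n_{i}\sqrt t)$. Because the $\ge\sqrt t$ leaders have been moved to the front of \texttt{STATIONS} and form a random subset interleaving the adversary's fixed linear order, exiting the inner loop (seeing fewer than $\tfrac14\sqrt t$ broadcasts in a batch) forces the adversary to have crashed $\ge n_{i}/2$ stations by Lemma~\ref{lem12}; hence $n$ at least halves and we advance to level $i{+}1$, while the number of batches spent at level $i$ is controlled by a counting argument on the $t$ tasks (each non-terminal batch confirms $\Omega(\sqrt t)$ tasks and only $t$ are available). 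Once $p/2^{i}\le\sqrt t$ the terminal \textsc{Two-Lists} call finishes all remaining tasks in $\mathcal{O}(t)$ work (Fact~\ref{fact13}). Summing over levels and using $\sum_{i}n_{i}=\mathcal{O}(p)$ (geometric) gives work $\mathcal{O}(t+p\sqrt t\log p)$ on $G$. On $\overline{G}$ the algorithm still terminates (reliability), performing at most $\mathcal{O}(pt)$ work --- worst-case \textsc{Two-Lists} (Fact~\ref{fact12}) plus at most $\mathcal{O}(\sqrt t)$ productive batches, each of cost $\mathcal{O}(p\sqrt t)$ --- and multiplying this by $\Pr[\overline{G}]$ keeps it within $\mathcal{O}(t+p\sqrt t\log p)$. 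Adding the two contributions yields the expected-work bound.

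\textbf{Main obstacle.} The delicate point is the quantitative step on $G$: making precise that exiting the inner loop at level $i$ costs the adversary $\Theta(n_{i})$ crashes, which hinges on the interplay between the randomly chosen leaders (uniform among the survivors, hence spread out in the adversary's preselected linear order) and the adversary's obliviousness to that choice --- together with bounding how many $\sqrt t$-phase batches a single level can absorb before either the task list is exhausted or the crash budget is spent. Getting the $\overline{G}$ term absorbed into the target bound also requires care, and is where the branch hypothesis $\log_{2}p\le e^{\sqrt t/32}$ is used in its strongest form.
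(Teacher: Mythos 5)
Your overall plan coincides with the paper's: dispose of the two easy branches, condition on the good event of Lemma~\ref{lem16} in the main branch, charge $\mathcal{O}(n_i\sqrt{t}\log p)$ per level to \textsc{Mix-And-Test} and sum geometrically, and absorb the bad-event work via the branch hypothesis $\log_2 p\le e^{\sqrt{t}/32}$. However, two of your quantitative steps are too weak to close the argument. The main one is the batch-counting claim that ``each non-terminal batch confirms $\Omega(\sqrt{t})$ tasks.'' That only caps the number of non-terminal batches at $\mathcal{O}(\sqrt{t})$, each costing up to $\mathcal{O}(p\sqrt{t})$, i.e.\ $\mathcal{O}(pt)$ productive work in total --- and since the main branch has $p>\sqrt{t}$, this is not $\mathcal{O}(t+p\sqrt{t}\log p)$. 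The bound the paper actually needs and proves is that a non-terminal batch confirms $\Omega(t)$ tasks: the inner loop repeats only when at least $\tfrac14\sqrt{t}$ broadcasts are heard, and under the triangular (sparse-epoch) task assignment the station transmitting in phase $j$ confirms a disjoint segment of $j$ tasks, so even in the worst case (the initial segment of \texttt{STATIONS} transmitting) at least $1+2+\cdots+\tfrac14\sqrt{t}=\Omega(t)$ tasks are removed. Hence only $\mathcal{O}(1)$ non-terminal batches occur in the entire execution and the productive work is $\mathcal{O}(p\sqrt{t})$.

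The second, more minor issue is your handling of the $\overline{G}$ term. Deducing only $e^{-\sqrt{t}/8}\le(\log_2 p)^{-4}$ from the branch hypothesis yields a contribution of order $pt/(\log p)^2$, which need not be $\mathcal{O}(p\sqrt{t}\log p)$ (take $t=p$: then $p^2/(\log p)^2 \gg p^{3/2}\log p$). You must keep the exponential: write $e^{-\sqrt{t}/8}=e^{-\sqrt{t}/16}\cdot e^{-\sqrt{t}/16}$ and use one factor to dominate $t$ and the other to dominate $(\log p)^2$ (the latter via $\log_2 p\le e^{\sqrt{t}/32}$), which is exactly the computation in the paper and gives a bad-event contribution of $\mathcal{O}(p\log p)$. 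With these two corrections your argument matches the paper's proof.
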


\begin{proof}
In the algorithm we are constantly controlling whether condition $ \frac{p}{2^{i}} > \sqrt{t} $ holds. If not, then we execute \textsc{Two-Lists} which
complexity is $ \mathcal{O}(t) $ for such parameters.

If this condition does not hold initially then we check another one i.e. whether $ \log_{2}(p) > e^{\frac{\sqrt{t}}{32}} $ holds. For such configuration we assign all the tasks to every station.
The work accrued during such a procedure is $ \mathcal{O}(pt) $. However when $ \log_{2}(p) > e^{\frac{\sqrt{t}}{32}} $ then together with the fact that  $ e^{x} < x $ we have that $ \log_{2}(p) > t $ 
and consequently the total complexity is $ \mathcal{O}(p\log(p)) $.

Finally, the successful stations, that performed all the task have to confirm this fact. We demand that only one station will transmit and if this happens, the algorithm terminates.
The expected value of a geometric random variable lets us assume that this confirmation will happen in expected number of $ \mathcal{O}(\log(p)) $ rounds, generating $ \mathcal{O}(p\log(p)) $ work.

When none of the conditions mentioned above hold, we proceed to the main part of the algorithm.
The testing procedure by \textsc{Mix-And-Test} for each of disjoint cases, where $ n \in (\frac{p}{2^{i}}, \frac{p}{2^{i-1}}] $ requires a certain amount of work that can be estimated by
$ \mathcal{O}(p \sqrt{t}\log(p)) $, as there are $ \sqrt{t}\log_{2}(p) $ testing phases in each case and at most $ \frac{p}{2^{i}} $ stations take part in a single testing phase for a certain case.

In the algorithm we run through disjoint cases where $ n \in (\frac{p}{2^{i}}, \frac{p}{2^{i-1}}] $. 
From Lemma \ref{lem12} we know that when some of the leaders were crashed, then a proportional number of all the stations had to be crashed.
When leaders are crashed but the number of operational stations still remains in the same interval, then the lowest number of tasks will be confirmed if only the initial segment of stations will transmit.
As a result, when half of the leaders were crashed, then the system still confirms $ \frac{t}{8} = \Omega(t) $ tasks.
This means that even if so many crashes occurred, $ \mathcal{O}(1) $ epochs still suffice to do all the tasks. Summing work over all the cases may be estimated as
$ \mathcal{O}(p\sqrt{t}) $.

By Lemma \ref{lem16} we conclude that the expected work complexity is bounded by:
$$ \left((\log(p))^{2}\:\max\{ e^{-\frac{1}{8}\sqrt{t}} ,e^{-c\;\sqrt{t}\log(p)} \}\right)\mathcal{O}(pt + p\sqrt{t}\log^{2}(p)) $$
$$ + \left(1 - (\log(p))^{2}\:\max\{ e^{-\frac{1}{8}\sqrt{t}} ,e^{-c\;\sqrt{t}\log(p)} \}\right)\mathcal{O}(p\sqrt{t}\log(p)) = \mathcal{O}(p\sqrt{t}\log(p)), $$
where the first expression  comes from the fact, that if we entered the main loop of the algorithm then we know that we are in a configuration where $ \log_{2}(p) \leq e^{\frac{\sqrt{t}}{32}} $. 
Thus we have that
 
$$ \frac{pt + p\sqrt{t}\log^{2}(p)}{e^{\frac{\sqrt{t}}{8}}} \leq \frac{pt + pt\log^{2}(p)}{e^{\frac{\sqrt{t}}{16}}e^{\frac{\sqrt{t}}{16}}} \leq \frac{p + p\log^{2}(p)}{e^{\frac{\sqrt{t}}{16}}} 
\leq p + p\log(p) = \mathcal{O}(p\log(p)),  $$
which ends the proof.
\end{proof}
\section{\Grubtech~ --- Groups Together with Echo}
\label{grubtech}

In this section we present a randomized algorithm designed to reliably perform 
\DA\ 
in the presence of a Weakly-Adaptive adversary on a shared
channel without collision detection. Its expected work complexity is $ \mathcal{O}(t + p\sqrt{t} + p\;\min\{p/(p-f), t\}\log(p)) $.

\subsection{Description of \Grubtech}
Our solution is built on algorithm \textsc{Groups-Together} \mj{(details in Section \ref{grotog})} and a newly designed \textsc{Crash-Echo} procedure that works as a kind of fault-tolerant 
replacement of collision detection mechanism (which is not present in the model).
In fact, the algorithm presented here is asymptotically only logarithmically far from matching the lower bound shown in \cite{CKL}, which, to some extent, answers the open question stated therein.

\begin{algorithm}
{
{- initialize \texttt{STATIONS} to a sorted list of all $ p $ stations\;}
{- arrange all $ p $ names of stations into list $\texttt{GROUPS}$ of groups\;}
{- initialize both $\texttt{TASKS}$ and $\texttt{OUTSTANDING}_{v} $ to sorted list of all $ t $ names of tasks\;}
{- initialize $\texttt{DONE}_{v} $ to an empty list of tasks\;}
{- initialize $ i := 0 $ \;}
{- initialize $ leader $ := $ \textsc{Elect-Leader}(i) $ and add the $ leader $ to each group\;}
{- \Repeat{halted}{$\textsc{Epoch-Groups-CE}(i)$\;}}
}
\caption{\textsc{GrubTEch}; code for station $ v $}
\label{algorithm21}
\end{algorithm}

\begin{algorithm}
{
{set pointer $ \texttt{Task\_To\_Do}_{v} $ on list $ \texttt{TASKS} $ to the initial position of the range $ v $\;}
{set pointer $ \texttt{Transmit} $ to the first item on list $ \texttt{GROUPS} $\;}
{\Repeat{pointer $ \texttt{Transmit} $ points to the first entry on list $\texttt{GROUPS} $}
{

{\tcp{Round 1:}}
{perform the first task on list $ \texttt{TASKS} $, starting from the one pointed to by $ \texttt{Task\_To\_Do}_{v} $, that is in list $\texttt{OUTSTANDING}_{v} $\;}
{move the performed task from list $\texttt{OUTSTANDING}_{v} $ to list $\texttt{DONE}_{v} $\;}
{advance pointer $ \texttt{Task\_To\_Do}_{v} $ by one position on list $ \texttt{TASKS} $\;}
{\tcp{Rounds 2 \& 3:}}
{\If{$ \texttt{Transmit} $ points to $ v $}{execute $ \textsc{Crash-Echo} $\;} }
{attempt to receive a pair of messages\;}
{\tcp{Round 4:}}
{\If{(silent, loud) was heard in the preceding round}{
{let $ w $ be the first station in the group pointed to by $\texttt{Transmit}$:\;}
{\For{each item $ x $ on list $\texttt{DONE}_{w} $}{
{\If{$ x $ is on list $ \texttt{OUTSTANDING}_{v} $}{move $ x $ from $\texttt{OUTSTANDING}_{v} $ to $\texttt{DONE}_{v} $\;}}
{\If{$ x $ is on list $ \texttt{TASKS} $}{remove $ x $ from $ \texttt{TASKS} $\;}}
}}
{\If{list $ \texttt{TASKS} $ is empty}{halt\;}}
{advance pointer $ \texttt{Transmit} $ by one position on list $ \texttt{GROUPS} $\;}
}}\Else{

{\If{(loud, loud) was heard in the preceding round} {remove the group pointed to by $ \texttt{Transmit} $ from $ \texttt{GROUPS} $\;}
\Else{
{remove $ leader $ from all the groups on list $ \texttt{GROUPS} $\;
$ leader $ := $ \textsc{Elect-Leader}(i)$ and add the leader to each group\;}
}}
}
}}
{rearrange all stations in the groups of list $ \texttt{GROUPS} $ into a new version of list $ \texttt{GROUPS} $\;}}
\caption{Procedure \textsc{Epoch-Groups-CE}; code for station $ v $}
\label{algorithm22}
\end{algorithm}

\begin{algorithm}
{
{\tcp{Round 1:}}
{broadcast one bit\;}
{\tcp{Round 2:}}
{\If{$ v = leader $}{broadcast one bit\;}}
}
\caption{Procedure Crash-Echo; code for station $ v $}
\label{algorithm23}
\end{algorithm}

\begin{algorithm}
{
\KwIn{i}

\Repeat{$ i < p $}{
{$ \texttt{coin} := p $\;}
{toss a coin with the probability $ \texttt{coin}^{-1} $ of heads to come up\;}
\If{heads came up in the previous step}{broadcast $ v $ via the channel and attempt to receive a message\;}
\If{some station $ w $ was heard}{
$ leader := w $\;
return $ leader $\;
}
\Else{
increment $ i $ by $ 1 $\;}
}
{set pointer $ \texttt{Transmit}_{STATIONS} $ to the first item on list \texttt{STATIONS}\;}
\Repeat{a transmission was heard}{
\If{$ \texttt{Transmit}_{STATIONS} $ points to $ v $}{
broadcast one bit\;
attempt to receive a message\;
\If{some station $ w $ was heard}{
$ leader := w $\;
return $ leader $\;
}}
{advance pointer $ \texttt{Transmit}_{STATIONS} $ by one position on list $ \texttt{STATIONS} $\;}
}
}
\caption{Procedure Elect-Leader; code for station $ v $}
\label{algorithm24}
\end{algorithm}

\paragraph{The Crash-Echo procedure.}
\mj{Let us recall the details of \textsc{Groups-Together} from Section \ref{grotog}. All the stations within a certain group have the same tasks assigned
and when it comes to transmitting they do it simultaneously. This strongly relies on the collision detection mechanism, as the stations do not necessarily need to know which station transmitted, but they need to know
that there is progress in tasks performance. That is why if a collision is heard and all the stations within the same group were doing the same tasks, we can deduce that those tasks were actually done.}

In our model we do not have collision detection, however we designed a mechanism that provides the same feedback without contributing too much work to the algorithm's complexity.
Strictly speaking we begin with choosing a leader. His work will be of a dual significance. On one hand 
he will belong to some group and perform tasks regularly. But on the other hand he will also perform additional transmissions in order to indicate
whether there was progress when stations transmitted.

When a group of stations is indicated to broadcast the \textsc{Crash-Echo} procedure is executed.
It consists of two rounds where the whole group transmits together with the leader in the first one and in the second only the leader transmits.
We may hear two types of signals:

\begin{itemize}
 \item \textbf{loud} - a legible, single transmission was heard. Exactly one station transmitted.
 \item \textbf{silent} - a signal indistinguishable from the background noise is heard. None or more than one station transmitted.
\end{itemize}

\noindent Let us examine what are the possible pairs (group \& leader, leader) of signals heard in such approach:

\begin{itemize}
 \item \textbf{(silent, loud)} - in the latter round the leader is operational, so he must have been operational in the former round. Because silence was heard in the former 
 round this means that there was a successful transmission of at least two stations one of which was the leader. This is a fully successful case.
 \item \textbf{(loud, loud)} - the former and the latter round were loud, so we conclude that it was the leader who transmitted in both rounds. 
 If the leader belonged to the group scheduled to transmit, then we have progress; otherwise not.
 \item \textbf{(silent, silent)} - if both rounds were silent we cannot be sure was there any progress. Additionally we need to elect a new leader.
 \item \textbf{(loud, silent)} - when the former round was loud we cannot be sure whether the tasks were performed; a new leader needs to be chosen.
\end{itemize}

Nevertheless, the Weakly-Adaptive adversary has to declare some $ f $ stations that are prone to crashes. The elected leader might belong to that subset and be crashed
at some time. When this is examined, the algorithm has to switch to the \textsc{Elect-Leader} mode, in order to select another leader.
Consequently the most significant question from the point of view of the algorithm's analysis is what is the expected number of trials to choose a non-faulty leader.

\paragraph{Two modes.}

We need to select a leader and be sure that he is operational in order to have our progress indicator working instead of the collision detection mechanism. When the leader is operational we simply
run \textsc{Groups-Together} algorithm with the difference that instead of a simultaneous transmission by all the stations within a group, we run the \textsc{Crash-Echo} procedure that allows us to distinguish
whether there was progress.

Choosing the leader is performed by procedure \textsc{Elect-Leader}, where each station tosses a coin with the probability of success equal $ 1/p $. If a station is successful then it transmits in the following round.
If exactly one station transmits then the leader is chosen. Otherwise the experiment is continued (for $ p $ rounds in total). Nevertheless if this still does not work, then the first station that transmits
in a round-robin fashion procedure, becomes the leader.

\jm{Note that we have a special variable $ i $ used as a counter that is incremented in the \textsc{Elect-Leader} procedure until it reaches value $ p $. We assume that this value is passed to \textsc{Elect-Leader}
by reference, so that its incrementation is also recognized in the main body of the \textsc{Epoch-Groups-CE} algorithm, thus $ i $ is a global counter.}

%
%
%
%
%

\subsection{Analysis of GrubTEch}

Let us begin the analysis of \textsc{GrubTEch} by recalling an important result from \cite{CKL}.

\begin{theorem}
 (\cite{CKL}, Theorem 6) The Weakly-Adaptive $ f $-Bounded adversary can force any reliable randomized algorithm solving Do-All in the channel without collision detection to perform the expected work
 $ \Omega(t + p\sqrt{t} + p\;\min\{p/(p-f), t\}) $.
 \label{theorem22}
\end{theorem}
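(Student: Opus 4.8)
The bound is additive, so I would establish its three summands separately. The summand $\Omega(t)$ is immediate: each of the $t$ tasks must be performed at least once, and every performance occupies one processor for one round, contributing a unit of work; this holds for any $f$ since at least one station stays operational. The summand $\Omega(p\sqrt{t})$ needs no new argument either --- it is exactly the absolute lower bound of \cite{CKL} (the ``$\Omega(t+p\sqrt{t})$ even with no failures'' bound, its Lemma~2), which applies a fortiori to the Weakly-Adaptive $f$-Bounded adversary since that adversary may simply do nothing. So the real content is the term $p\cdot\min\{p/(p-f),t\}$. Writing $g=p-f$ for the number of stations the adversary cannot touch, this term is asymptotically relevant only when $g$ is small, and I would prove a lower bound of $\Omega(pr)$ with $r:=\min\{\lceil p/g\rceil,t\}$.

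For the main term I would use a Yao-type averaging: let the adversary choose the crash-prone set $F$ uniformly at random among the $f$-subsets of $[p]$, lower bound the expected work over the joint randomness of $F$ and of the algorithm's coins, and then fix the best $F$. Given $F$, the adversary plays online as follows: it never crashes a station outside $F$ (so the $g$ ``safe'' stations survive the whole execution), and whenever the round-$r$ history and coins already determine that some single station $v$ would be the unique successful transmitter of round $r$ and $v\in F$ (and crash budget remains), it crashes $v$ at the \emph{start} of round $r$, turning it into a silent round. Under this strategy a transmission succeeds only from a station a posteriori known to lie outside $F$. The key structural point is that \emph{before the first successful transmission}, every operational station sees only silence, so its behaviour --- which task it performs, whether and when it attempts to transmit, whether it halts --- depends on its initial state and private coins alone, hence is independent of the choice of $F$. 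Therefore, in each round preceding the first successful transmission the station (if any) trying to be the unique transmitter is selected independently of $F$, so it lies in $F$ with probability exactly $f/p$; the adversary preempts it, spends one crash, and the all-silence regime continues. Thus the number of rounds up to the first successful transmission stochastically dominates a truncated geometric variable with success probability $g/p$ and is $\Omega(\min\{p/g,\,\text{budget}\})=\Omega(r)$ in expectation; the cap at $t$ enters because, by a singleton-survivor indistinguishability argument, a station cannot halt on an all-silence view before it has itself performed all $t$ tasks, so if the all-silence regime lasts $t$ rounds the work is already $\Omega(pt)$.

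It remains to turn ``$\Omega(r)$ rounds of all-silence'' into ``$\Omega(pr)$ work''. During those rounds the adversary has crashed at most $r$ stations (one per preempted transmission), so at least $p-r=\Omega(p)$ stations --- in the relevant regime $r\le p/2$ --- are operational and running in each such round (they cannot halt early, again by the singleton-survivor argument), and each operational station contributes a unit of work per round. Multiplying gives $\Omega(pr)$, and summing with the two easy summands yields the claim.

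The step I expect to be the main obstacle is making the indistinguishability argument fully rigorous on two fronts at once: (i) that the algorithm genuinely cannot escape the all-silence regime for $\Omega(r)$ rounds --- it cannot ``discover'' a safe station and route a confirmation through it faster, because before any successful transmission nothing in its state is correlated with $F$; and (ii) that correctness across all choices of $F$ forces every still-running station to keep accruing work until a confirmation reaches it, which is what the singleton-survivor coupling (crash everyone else at round one in a parallel execution) delivers --- but one must verify this coupling respects the Weakly-Adaptive restriction, $f$-boundedness, and the absence of collision detection. The boundary cases $g\in\{1,2\}$ and $r=t$, where the geometric truncation and the budget both bite (and where one argues the population decays from $p$ to $\Theta(1)$ over $\Theta(r)$ rounds, so that the work is still $\Omega(pr)$ by a sum rather than a product), require a separate but routine check.
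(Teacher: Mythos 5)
Note first that the paper does not actually prove this statement: it is imported verbatim from \cite{CKL} (their Theorem~6), and the only original content here is the one-sentence remark following the theorem, which corrects the third summand from $p\min\{f/(p-f),t\}$ to $p\min\{p/(p-f),t\}$ on the grounds that the first successful transmission occurs no earlier than round~$1$, so the relevant expected waiting time is $f/(p-f)+1=p/(p-f)$. Your reconstruction follows the same route as the CKL argument being cited: split off the absolute bound $\Omega(t+p\sqrt{t})$, choose the crash-prone set $F$ at random and fix the best one afterwards, let the adversary preempt every would-be unique transmitter lying in $F$, observe that the all-silence prefix makes the algorithm's behaviour independent of $F$, and conclude a (truncated) geometric waiting time with success probability about $(p-f)/p$ before the first successful transmission. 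Your accounting of that waiting time as ``at least $1$, and $\Theta(p/(p-f))$ in expectation'' automatically absorbs the very correction the paper makes to the CKL formula, which is a point in your favour.

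The genuine gap is the twice-used singleton-survivor coupling. To argue that a station with an all-silence view cannot halt before performing all $t$ tasks itself --- needed both for the cap at $t$ and, more importantly, to guarantee that $\Omega(p)$ stations keep accruing work throughout the silent prefix --- you crash every other station in a parallel execution. That coupling crashes up to $p-1$ stations and is admissible only for the $(p-1)$-bounded adversary; for general $f$ the weakly-adaptive adversary cannot touch the $p-f$ safe stations, so a station halting early does not by itself contradict reliability, since the safe stations could still finish the job. What is needed is a set-based version: if the collection $S$ of stations that halt during the silent prefix has not collectively performed all tasks, reliability is violated only when the complement of $S$ fits inside an admissible $F$, and one must combine this with the randomness of $F$ to force $\Omega(p)$ stations to stay active for $\Omega(\min\{p/(p-f),t\})$ rounds in expectation. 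You correctly flag this as the unresolved step, but as written it is a hole, not a routine check. A secondary, fixable issue: after $r$ preemptions the conditional law of $F$ is no longer uniform, so the per-round escape probability is $(p-f)/(p-r+1)$ rather than exactly $(p-f)/p$; this only affects constants while $r\le p/2$, but it should be stated, as should the requirement $r\le f$ ensuring the crash budget does not run out before the claimed number of silent rounds elapses.
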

\noindent
In fact the theorem above in \cite{CKL} stated that the lower bound was $ \Omega(t + p\sqrt{t} + p\;\min\{f/(p-f), t\}) $, however the proof relied on the expected round in which the first successful 
transmission took place and the authors did not take into consideration that the first successful transmission may occur earliest in round $ 1 $.
Hence as it must be at least round number $ 1 $ we correct it as follows: $ \frac{f}{p-f} + 1 = \frac{f}{p-f} + \frac{p-f}{p-f} = \frac{p}{p-f}.$

\begin{lemma}
 \textsc{GrubTEch} is reliable.
\end{lemma}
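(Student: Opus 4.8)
The plan is to follow the reliability arguments for \textsc{Two-Lists} and \textsc{Groups-Together} in~\cite{CKL}, isolating the only genuinely new ingredient of \Grubtech --- the \textsc{Crash-Echo}/\textsc{Elect-Leader} mechanism that emulates collision detection --- and showing that it preserves both safety (no task is ever dropped from \texttt{TASKS} before it has actually been performed) and liveness (progress is eventually made and every operational station halts). Recall that reliability has two parts: all tasks are performed if at least one station survives, and each station eventually halts unless crashed.

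First I would prove \emph{safety of task deletion}. A task is removed from \texttt{TASKS}, or moved to the \texttt{DONE} list, only in the branch of \textsc{Epoch-Groups-CE} triggered by hearing the pattern \textbf{(silent, loud)}. I would argue that this pattern can occur only when the group currently pointed to by \texttt{Transmit} has actually performed its assigned segment in Round~1 of the phase: the second round of \textsc{Crash-Echo} is broadcast by the leader alone, so \textbf{loud} there certifies that the leader is operational; since the leader is a member of that group and hence also broadcasts in the first round, \textbf{silent} in the first round forces at least two transmitters, i.e.\ at least one operational member of the scheduled group completed the group's tasks. Thus every deletion is justified, exactly as a collision signal justifies deletions in \textsc{Groups-Together}. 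I would then check the harmless cases: in \textbf{(silent, silent)} and \textbf{(loud, silent)} nothing is removed and the algorithm only re-enters leader election, while in \textbf{(loud, loud)} only the scheduled group is discarded from \texttt{GROUPS} --- and this is safe because if that group had still contained an operational station other than the leader, Round~1 would have had $\ge 2$ transmitters and the pattern would have been \textbf{(silent, loud)}.

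Next I would establish that \textsc{Elect-Leader} always terminates and returns an operational station whenever at least one station is operational: the randomized phase runs for at most $p$ rounds, after which the procedure falls back to a round-robin sweep of \texttt{STATIONS}; since operational stations are never removed from \texttt{STATIONS} and at least one exists, the sweep reaches one, it transmits, and it is adopted as the leader. Because the adversary is $f$-bounded with $f<p$, and each re-election is triggered only by a pattern (\textbf{(silent, silent)} or \textbf{(loud, silent)}) that forces the current leader to have crashed, \textsc{Elect-Leader} is invoked only finitely often; after the last invocation some operational station is leader permanently. During any maximal stretch of phases with a fixed operational leader, every phase yields \textbf{(silent, loud)} or \textbf{(loud, loud)}, so the \texttt{Transmit} pointer advances and the execution behaves exactly as an execution of \textsc{Groups-Together} restricted to the surviving stations; invoking reliability of \textsc{Groups-Together} (Fact~\ref{lemma21}), I conclude that within finitely many epochs every task is performed and confirmed, so \texttt{TASKS} becomes empty. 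Halting then follows: a station halts exactly when its local copy of \texttt{TASKS} is empty, and all local copies are updated consistently because every \textbf{(silent, loud)} confirmation is heard on the shared channel by all stations operational at that round; hence the confirmation that empties \texttt{TASKS} is received simultaneously by all then-operational stations, which all halt in the subsequent Round~4.

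The step I expect to be the main obstacle is the faithful-simulation argument above: one must show that, apart from the finite bookkeeping of leader crashes and re-elections, the externally visible behaviour of \Grubtech coincides with that of \textsc{Groups-Together}, so that its reliability can be \emph{imported} rather than re-proved from scratch --- in particular that the ambiguous \textbf{(loud, loud)} case, together with the leader's dual role as both an ordinary group member and the progress beacon, can never cause a group that still has a live member and outstanding tasks to be silently dropped.
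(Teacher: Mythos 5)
Your proposal takes essentially the same route as the paper's own proof, which simply observes that reliability is inherited from \textsc{Groups-Together} because the core of the algorithm is unchanged, that \textsc{Crash-Echo} always finishes, and that \textsc{Elect-Leader} always terminates (its randomized loop runs at most $p$ times and its round-robin fallback must hear the at least one surviving station). Your version is correct and considerably more detailed --- in particular the case analysis on the signal patterns and the finiteness of re-elections are asserted rather than argued in the paper --- but it is an elaboration of the same reduction, not a different approach.
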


\begin{proof}
 The reliability of \textsc{GrubTEch} is a consequence of the reliability of \texttt{Groups-Together}. We do not make any changes in the core of the algorithm.
 \texttt{Crash-Echo} does not affect the algorithm, as it always finishes. \texttt{Elect-Leader} procedure always finishes as well. The first loop is executed for at most $ p $ times and then it ends.
 The second loop awaits to hear a broadcast in a round-robin manner. But we know that $ 0 \leq f \leq p - 1 $, so always one processor remains operational and it will respond.
\end{proof}

\dk{Let us define a {\em sustainable leader} as a station that is operational until the end of the execution or a non-faulty station, and was elected as a leader during
some execution of procedure \textsc{Elect-Leader}.} 

\begin{lemma}
 \jm{The total number of rounds during which procedure \textsc{Elect-Leader} is run (possibly splitted into several executions) until electing a sustainable leader is $ \log(p) \frac{4p}{p-f} $ with probability at least $1 - \frac{1}{p} $.}
 \label{lemma22}
\end{lemma}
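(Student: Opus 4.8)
The plan is to analyze a single invocation of \textsc{Elect-Leader} and show that with constant probability it either produces a sustainable leader directly or makes measurable progress in reducing the pool of crash‑prone stations; then to sum over at most $\log p$ such ``progress events'' and apply a Chernoff/union bound to obtain the high‑probability guarantee. The key quantity to control is the number of rounds spent inside \textsc{Elect-Leader} before a successful single transmission occurs, and whether the station that wins that transmission is sustainable (i.e., non‑faulty, or operational to the end).

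First I would bound the round cost of the randomized first loop. When there are $n$ operational stations and each tosses a coin with success probability $1/p$, the probability that exactly one station transmits in a given round is $\binom{n}{1}\frac1p\left(1-\frac1p\right)^{n-1}\ge \frac{n}{p}\cdot\frac1e \ge \frac{p-f}{p}\cdot\frac1e$, using $n\ge p-f$ throughout (at least $p-f$ stations are always operational). Hence each round of the first loop is a Bernoulli trial with success probability $\Omega\!\left(\frac{p-f}{p}\right)$, so in expectation $O\!\left(\frac{p}{p-f}\right)$ rounds suffice to hear a single transmitter; if the first loop exhausts its $p$ iterations without success, the round‑robin fallback hears a transmitter within $p$ further rounds deterministically (since some operational station exists). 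I would package this as: one invocation of \textsc{Elect-Leader} costs $O\!\left(\frac{p}{p-f}\right)$ rounds in expectation, and $O\!\left(\frac{p}{p-f}\log p\right)$ rounds with probability $\ge 1-1/p^2$, by a standard geometric‑tail argument.

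Next I would argue that we do not need too many invocations before hitting a sustainable leader. A leader is replaced only when a \textsc{Crash-Echo} phase reveals it is no longer the sole reliable transmitter — i.e., the leader has crashed. Since the adversary is $f$‑bounded, at most $f$ leaders can ever be crashed, but a cleaner bound comes from the Weakly‑Adaptive structure: each elected leader is, conditioned on being the unique transmitter, a uniformly random operational station, so it lies outside the adversary's declared crash‑prone set with probability at least $\frac{n-f}{n}\ge\frac{p-f}{p}$ — wait, more carefully, at least $\frac{(\text{\# non-crash-prone operational stations})}{n}$. I would instead use the halving idea analogous to Lemma~\ref{lem12}: each time a leader is crashed, the adversary has spent part of its budget, and after $O(\log p)$ crashed leaders either $f$ is exhausted or a non‑faulty station gets elected with constant conditional probability; taking the union over these $O(\log p)$ stages, the number of \textsc{Elect-Leader} invocations before a sustainable leader is $O(\log p)$ with probability $\ge 1-1/(2p)$. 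Multiplying the $O(\log p)$ invocations by the $O\!\left(\frac{p}{p-f}\right)$ expected (resp.\ $O\!\left(\frac{p}{p-f}\log p\right)$ w.h.p.) round cost of each, and adjusting constants, yields the claimed bound $\log(p)\cdot\frac{4p}{p-f}$ with probability at least $1-\frac1p$.

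The main obstacle I anticipate is the dependency between successive leader elections: the set of operational stations, and the adversary's remaining budget, both evolve in a way that is correlated with the algorithm's random choices and the (adaptive) adversary's reactions. The careful step is to condition correctly — fixing the adversary's declared crash‑prone subset up front (exploiting Weak‑Adaptivity) so that ``a freshly elected unique transmitter is uniform over the current operational set'' is a clean statement, and then to bound the number of leader replacements by a quantity that does not itself depend on future randomness. I expect the intended proof to sidestep the finer halving argument and simply observe that each \textsc{Elect-Leader} call independently returns a non‑faulty leader with probability $\ge\frac{p-f}{p}$ (since there are $\ge p-f$ non‑crash‑prone stations among the $\le p$ operational ones, and the winner is uniform), so the number of calls needed is geometric with mean $\le\frac{p}{p-f}$ — but this does not match the $\log p$ factor, so the actual argument must instead charge $O(\log p)$ to the geometric tail: $\Pr[\text{more than }\log p\cdot\tfrac{p}{p-f}\text{ total rounds}]\le 1/p$ via a Chernoff bound on a sum of $\Theta(\log p)$ independent geometric round‑costs. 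I would present it in that last form, keeping the constant $4$ as slack for the Chernoff computation.
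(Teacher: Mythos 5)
Your proposal assembles the right local ingredients but commits to a decomposition that does not work. You propose to bound (i) the round cost of a single invocation of \textsc{Elect-Leader} by $O\!\left(\frac{p}{p-f}\log p\right)$ w.h.p.\ and (ii) the number of invocations before a sustainable leader appears by $O(\log p)$ w.h.p., and then multiply. Claim (ii) is not justified and is false in general: when $f=p-1$ there is a single non-faulty station among up to $p$ operational ones, so each invocation elects it with probability only about $\frac{p-f}{p}=\frac1p$, and the number of invocations is a geometric with mean $\Theta(p)=\Theta\!\left(\frac{p}{p-f}\right)$, i.e.\ $\Theta\!\left(\frac{p}{p-f}\log p\right)$ w.h.p., not $O(\log p)$. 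The ``halving idea analogous to Lemma~\ref{lem12}'' that you invoke to rescue (ii) belongs to \textsc{ROBAL} against the Linearly-Ordered adversary and has no counterpart here. Moreover, even where both factors are individually correct, multiplying a w.h.p.\ bound on the number of invocations by a w.h.p.\ per-invocation cost yields $O\!\left(\frac{p}{p-f}\log^2 p\right)$, an extra $\log p$ off the target; your closing suggestion of a Chernoff bound over ``$\Theta(\log p)$ independent geometric round-costs'' still presupposes the unproved $O(\log p)$ invocation count.

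The paper's proof avoids the invocation-level decomposition entirely. It looks at the concatenation of all coin-tossing rounds across all calls to \textsc{Elect-Leader} and observes that in each such round the event ``exactly one station transmits \emph{and} it is one of the $p-f$ non-faulty stations'' has probability at least $\frac{p-f}{p}\left(1-\frac1p\right)^{x-1}\ge\frac{p-f}{4p}$ (the summands over the $p-f$ non-faulty candidates are disjoint), and any leader elected this way is sustainable. Hence the total number of rounds is stochastically dominated by a single $\mathrm{Geom}\!\left(\frac{p-f}{4p}\right)$ variable, whose tail at $\frac{4p}{p-f}\log p+1$ is at most $e^{-\log p}=\frac1p$. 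This single-geometric domination is the missing idea: it automatically balances the two regimes (many cheap invocations when $n$ is large versus few expensive ones when $n$ is close to $p-f$) that your product bound cannot reconcile. You already compute both constituent probabilities correctly; you only need to multiply them per round rather than per invocation.
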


\begin{proof}
 \dk{Recall that procedure \textsc{Elect-Leader} could be called several times,
 until selecting a sustainable leader at the latest.
 The expected number of rounds needed to elect a sustainable leader during these calls is upper bounded by the time needed to hit the first non-faulty station by the executions of procedure \textsc{Elect-Leader}. Hence, in the remainder of the proof we estimate the total number of such rounds with probability at least $1-1/p$.}

 We have $ p $ stations from which $ f $ are prone to crashes. Hence we have $ p - f $ non-faulty stations. That is why the probability that a non-faulty one will respond
 in the election procedure is at least $ (p-f)/p $. We may observe that this probability will increase if we failed in previous executions. In fact, after $ f $ executions
 we may be sure to choose a non-faulty leader. However we will estimate the probability of our process by an event of awaiting the first success in a number of trials, as
 our process is stochastically dominated by such a geometric distribution process.
 
 We have a channel without collision detection, so exactly one station has to transmit in order to elect a leader. Let $ x $ be the actual number of operational stations. 
 The probability $ s $ of the event that a non-faulty station will be elected in the procedure
 may be estimated as follows:
 
 $$ \frac{p-f}{p} \left(1 - \frac{1}{p}\right)^{x-1} \geq \frac{p-f}{p} \left(1 - \frac{1}{p}\right)^{p-1} \geq \frac{p-f}{p}\cdot\frac{1}{4}\cdot\frac{p}{p-1} $$
 $$ = \frac{p-f}{4(p-1)} \geq \frac{p-f}{4p}. $$
 
 Let us estimate the probability of awaiting the first success in a number of trials. Let $ X \sim Geom((p-f)/4p) $.
 We know that for a geometric random variable with the probability of success equal $ s $:
 
 $$ \mathbb{P}(X \geq i) = (1 - s)^{i-1}. $$
 Applying this to our case with $ i = \frac{4p}{p-f}\log(p) + 1 $ we have that
 
 $$ \mathbb{P}\left(X \geq \frac{4p}{p-f}\log(p) + 1\right) = \left(1 - \frac{1}{\frac{4p}{p-f}} \right)^{\frac{4p}{p-f}\log(p)} \leq e^{-\log(p)} = \frac{1}{p}. $$
 Thus the probability of a complementary event is
 
 $$ \mathbb{P}\left(X < \frac{4p}{p-f}\log(p) + 1\right) > 1 - \frac{1}{p}. $$
\end{proof}

\begin{theorem} \label{theorem23}
 \textsc{GrubTEch} solves Do-All in the channel without collision detection with the expected work $ \mathcal{O}(t + p\sqrt{t} + p\;\min\{p/(p-f), t\}\log(p)) $ against the Weakly-Adaptive $ f $-Bounded adversary. 
\end{theorem}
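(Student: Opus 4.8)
The plan is to view an execution of \Grubtech\ as an execution of \GT\ on a \emph{virtual} channel \emph{with} collision detection, plus an additive overhead for leader (re‑)elections. We already know \Grubtech\ is reliable, so only the bound on expected work has to be argued.

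Fix an execution and call a phase of \textsc{Epoch-Groups-CE} \emph{clean} if the current leader is operational throughout rounds~2--3 of that phase, and \emph{spoiled} otherwise (equivalently, the round‑3 signal of \textsc{Crash-Echo} is \textbf{silent}, which triggers a call to \textsc{Elect-Leader}). The first step is to check that in a clean phase \textsc{Crash-Echo} delivers exactly the collision‑detection feedback that \GT\ reads: \textbf{(silent,\,loud)} plays the role of \textbf{Single}/\textbf{Collision} on a still‑alive group (its tasks are confirmed), \textbf{(loud,\,loud)} plays the role of \textbf{Silence} on the transmitting group (all its stations but the leader are dead), and in each case the pointer \texttt{Transmit} is advanced as in \GT; a spoiled phase, by contrast, leaves \texttt{TASKS}, \texttt{GROUPS} and \texttt{Transmit} untouched and only re‑elects the leader. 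Consequently, if from the timeline we delete every \textsc{Elect-Leader} round together with every spoiled phase, what remains is a legal execution of \GT\ against an $f$‑bounded adversary on a collision‑detection channel (deleting failure events only weakens that adversary, and the \GT\ state is carried unchanged across the deletions). By Fact~\ref{theorem21} this ``core'' part performs work $\mathcal O(t+p\sqrt t)$; the four rounds of an \textsc{Epoch-Groups-CE} phase versus the three rounds of a \GT\ phase cost only a constant factor.

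It then remains to bound the \emph{overhead}: work spent in \textsc{Elect-Leader} executions and in spoiled phases. In any such round at most the currently operational stations, hence at most $p$, incur a unit of work, so it suffices to count overhead rounds. Every spoiled phase is immediately followed by exactly one \textsc{Elect-Leader} call, and every call lasts at least one round, so the number of spoiled phases, and hence the total number of overhead rounds, is $\Theta(1)$ times the total number $R$ of \textsc{Elect-Leader} rounds. By Lemma~\ref{lemma22}, with probability at least $1-1/p$ a sustainable (non‑faulty) leader is elected within $\tfrac{4p}{p-f}\log p$ \textsc{Elect-Leader} rounds, after which \textsc{Elect-Leader} is never invoked again; on this event $R\le \tfrac{4p}{p-f}\log p$, giving overhead $\mathcal O\!\big(p\cdot\tfrac{p}{p-f}\log p\big)$. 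For the $\min$ with $t$, one uses that a task, once removed from \texttt{TASKS}, is never reinserted, so the algorithm halts after a number of progress‑making phases that, together with the spoiled phases charged to distinct crashes in between, is controlled in terms of $t$; carrying this bookkeeping through gives the alternative estimate leading to overhead $\mathcal O(pt\log p)$, and taking the smaller of the two estimates produces the factor $\min\{p/(p-f),t\}$.

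Finally one combines: on the high‑probability event of Lemma~\ref{lemma22} the work is $\mathcal O\!\big(t+p\sqrt t+p\min\{p/(p-f),t\}\log p\big)$, while on the complementary event, of probability at most $1/p$, the execution is still finite (\textsc{Elect-Leader} is called at most $f+1$ times, each call lasting $\mathcal O(p)$ rounds, and the number of \GT‑style phases is polynomial), so it contributes only a lower‑order term to the expectation; adding the two contributions yields the claimed bound. I expect the main obstacle to be making the first step watertight — verifying that the union of clean phases is \emph{literally} a run of \GT\ (in particular that the post‑spoil state coincides with the \GT\ state just before the spoiled phase) so that Fact~\ref{theorem21} applies verbatim — and, secondarily, the precise accounting behind the $\min\{p/(p-f),t\}$ factor and the handling of the $1/p$‑probability failure event of Lemma~\ref{lemma22}.
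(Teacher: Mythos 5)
Your proposal is correct and follows essentially the same route as the paper's proof: both decompose the work into a \GT-like ``productive'' core bounded by Fact~\ref{theorem21}, plus leader-election and failing (spoiled-phase) overhead bounded via Lemma~\ref{lemma22}, and then combine over the high-probability event of that lemma. Your clean/spoiled-phase simulation argument and your caveat about the $\min\{p/(p-f),t\}$ bookkeeping are merely more explicit versions of steps the paper treats informally.
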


\begin{proof}
 We may divide the work of \textsc{GrubTEch} to three components: productive, failing and the one reasoning from electing the leader.
 
 Firstly, the core of our algorithm is the same as \textsc{Groups-Together} with the difference that we have the \textsc{Crash-Echo} procedure that takes twice as many
 transmission rounds. According to Fact \ref{theorem21}, it is sufficient to estimate this kind of work as $ \mathcal{O}(t + p\sqrt{t})$.
 
 Secondly, there is some work that results from electing the leader. According to Lemma \ref{lemma22}, \dk{a sustainable}
 leader will be chosen within $ \frac{4p}{p-f}\log(p) $ \dk{rounds of executing}
 \textsc{Elect-Leader} with high probability. That is why the expected work to elect a non-faulty leader is overall $ \mathcal{O}(p\;\frac{p}{p-f}\log(p)) $
 
 Finally, there is some amount of failing work that results from rounds where the \textsc{Crash-Echo} procedure indicated that the leader was crashed. However work accrued
 during such rounds will not exceed the amount of work resulting from electing the leader, hence we state that failing work contributes $ \mathcal{O}(p\;\frac{p}{p-f}\log(p))$ as well.
 
 Consequently, we may estimate the expected work of \textsc{GrubTEch} as
 
 $$ \left(1 - \frac{1}{p}\right)\mathcal{O}(t + p\sqrt{t} + p\;\min\{p/(p-f), t\}\log(p)) + \frac{1}{p}\mathcal{O}(p^{2}) $$
 $$ = \mathcal{O}(t + p\sqrt{t} + p\;\min\{p/(p-f), t\}\log(p))$$
 what ends the proof.
\end{proof}
\section{How GrubTEch works for other partial orders}
\label{grubtechpartord}

The line of investigation originated by \textsc{ROBAL} and \textsc{GrubTEch} leads to a natural question whether considering some intermediate partial orders of the adversary
may provide different work complexities. In this section we answer this question in the positive by examining the \textsc{GrubTEch} algorithm against the k-Chain-Ordered adversary 
on a channel without collision detection.

\subsection{The lower bound}

\begin{theorem}
\label{t:lb}
For any reliable randomized algorithm solving \DA\ on the shared channel
and any integer $0<k\le f$, 
there is a $k$-chain-based partial order of $f$ elements 
such that the ordered adversary restricted by this order
can force the algorithm to perform the
expected work $\Omega(t+p\sqrt{t}+p\min\{k,f/(p-f),t\})$.
\end{theorem}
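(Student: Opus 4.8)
The plan is to separate the three summands. The lower bound $\Omega(t+p\sqrt t)$ is the absolute bound of \cite{CKL} recalled in Section~\ref{algtools}, which holds already in executions with no crashes, hence it holds for the adversary restricted by \emph{any} $k$-chain-based partial order; so it remains only to exhibit one $k$-chain-based partial order of $f$ elements, together with an admissible online crashing policy, that forces expected work $\Omega(pm)$, where $m:=\tfrac1c\min\{k,f/(p-f),t\}$ for a large absolute constant $c$ fixed at the end. We may assume $f\ge 3p/4$: otherwise $f/(p-f)<3$, so $m=O(1)$ and $\Omega(pm)=\Omega(p)$ is trivially true since every station runs at least one round.

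Fix a reliable randomized algorithm $A$. I would argue against a \emph{random} member of the family and then extract a deterministic one by averaging (Yao-type): for a fixed order the expected work over the algorithm's coins against the order-restricted adversary is the adversary's forced expected work, so it suffices to lower bound the expectation over a random order as well. The random order: pick a uniformly random $f$-element set $S$ of stations, partition it uniformly into $k$ disjoint chains of (almost) equal length $\ell=f/k$, and choose a uniformly random linear order inside each chain; no two chains have a common successor and their lengths sum to $f$, so this is $k$-chain-based. The crashing policy watches only rounds in which \emph{exactly one} operational station $v$ is scheduled to transmit --- the only rounds in which information can reach the other stations: as long as fewer than $m$ such rounds have occurred, if the scheduled transmitter $v$ lies in $S$, the adversary crashes $v$ together with all currently-operational strict predecessors of $v$ in $v$'s chain (legal: several crashes per round respecting the order). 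Call such a round a \emph{block}. Everything the policy reads is visible to an online adaptive adversary, and $S$ with its chains is fixed before the execution.

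The analysis has three parts. \emph{Budget.} Block $r$ crashes at most $\ell$ stations, so after $m$ blocks at most $m\ell=\tfrac1c\min\{k,\dots\}\cdot\tfrac fk\le f/c\le p/2$ stations have crashed (taking $c\ge 2$); hence the policy never exhausts the $f$ crashes, at least $p/2$ stations remain operational through the first $m$ blocks, and since $m\le t/c<t$ no station can have performed all $t$ tasks within $m$ rounds, so no station has halted either --- at least $p/2$ stations actively contribute work in each of the first $m$ single-transmitter rounds. \emph{No escape.} Given the history (which under this policy is just the set of crashed stations), the as-yet-unrevealed members of $S$ are a uniformly random subset of the operational stations of the appropriate size, so the station $v$ scheduled next lies in $S$ with conditional probability $\tfrac{f-c_r}{p-c_r}$, where $c_r\le m\ell\le p/2$ is the number crashed so far; since $x\mapsto\tfrac{f-x}{p-x}$ is decreasing, this is at least $1-\tfrac{2(p-f)}p$, and multiplying over the first $m$ single-transmitter rounds and using $1-x\ge e^{-2x}$ for $x\le\tfrac12$ together with $\tfrac{p-f}{p}\,m\le\tfrac1c\cdot\tfrac{f}{p-f}\cdot\tfrac{p-f}{p}=\tfrac{f}{cp}\le\tfrac1c$ shows \emph{all} of them are blocks with probability at least $e^{-4/c}$. \emph{Work.} On this event the first $m$ single-transmitter rounds (or all of them, if there are fewer than $m$) carry no successful transmission; in the latter sub-case $A$ never communicates, so for $p\ge2$ reliability forces every station to perform all $t$ tasks, i.e.\ $\Omega(pt)\ge\Omega(pm)$ work, and in the former case the execution has lasted at least $m$ rounds each with $\ge p/2$ active stations, i.e.\ $\ge mp/2$ work. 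Hence the expected work is $\ge e^{-4/c}\cdot mp/2+\Omega(t+p\sqrt t)=\Omega\bigl(t+p\sqrt t+p\min\{k,f/(p-f),t\}\bigr)$, and fixing $c=8$ completes the proof.

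The main obstacle is the ``no escape'' step: one must carefully control the conditional distribution of $S$ --- and hence of which stations are current chain heads --- given the adaptive history of revealed crashes, while the crashing policy is itself a function of that history. The clean way, mirroring the corresponding argument in the proof of Theorem~\ref{theorem22} of \cite{CKL} now stratified over the $k$ chains, is to expose the randomness of $S$ lazily, revealing a station's membership in $S$, its chain, and its position only when that station is scheduled to transmit or is crashed, so that the per-round survival probabilities honestly telescope; the secondary care needed is the budget bookkeeping above, verifying that in every relevant round the adversary can legally crash the required predecessors without ever exceeding $f$ crashes.
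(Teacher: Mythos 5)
Your proposal is correct and follows essentially the same strategy as the paper's proof: the adversary crashes the lone would-be transmitter together with all its operational chain predecessors, the budget argument charges at most one chain (at most $f/k$ crashes) per blocked round so that at most $f/2$ crashes are ever spent, and a conditional-probability argument shows that with constant probability all of the first $\Theta(\min\{k,f/(p-f),t\})$ lone transmitters land in the crash-prone set, yielding the $\Omega(p\min\{k,f/(p-f),t\})$ term on top of the absolute bound $\Omega(t+p\sqrt{t})$. The only differences are presentational: you give a single unified argument (with an explicit lazy-revelation computation of the survival probabilities and a Yao-type averaging step) where the paper splits into the cases $k\le c\cdot f/(p-f)$ and $k> c\cdot f/(p-f)$ (reducing the latter to the anti-chain/weakly-adaptive bound of Theorem~\ref{theorem22}) and defers the probability estimate to the lower-bound proof of \cite{CKL}; note that both your write-up and the paper's leave the step ``no successful transmission and fewer than $t$ rounds implies no station has halted'' at the same level of informality.
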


\begin{proof}
The part $\Omega(t+p\sqrt{t})$ follows from the absolute lower bound
on reliable algorithms on shared channel.
We prove the remaining part of the formula.
If $k > c\cdot f/(p-f)$, for some constant $0<c<1$, then that part is 
asymptotically dominated by $p\min\{f/(p-f),t\}$
and it is enough to take the order being an anti-chain of $f$
elements; clearly it is a $k$-chain-based partial order of $f$ elements,
and the adversary restricted by this order is equivalent to the weakly-adaptive
adversary, for which the lower bound $\Omega(p\min\{f/(p-f),t\})$
follows directly from Theorem \ref{theorem22}.
Therefore, in the reminder of the proof, assume $k\le c\cdot f/(p-f)$.

Consider the following strategy of the adversary in the first $\tau$ rounds, for some
value $\tau$ to be specified later.
Each station which wants to broadcast alone in a round is 
crashed in the beginning of this round, just before its intended transmission.
Let $\cF$ be the family of all subsets of stations containing $k/2$ elements.
Let $\cM$ denote the family of all partial orders consisting of $k$ independent
chains of roughly (modulo rounding) $f/k$ elements each.
Consider the first $\tau=k/2$ rounds.
The probability $\Pr(F)$, for $F\in \cF$, is defined to be equal to
the probability of an occurrence of an execution during the experiment, in
which exactly the stations with from set $F$ 
are failed by round~$\tau$.
Consider an order $M$ selected uniformly at random from $\cM$.
The probability that all elements of set $F\in \cF$ are in $M$
is a non-zero constant.
It follows from the following three observations. 
First, under our assumption, $k<f$ 
(as $k\le c\cdot f/(p-f)$ for some $0<c<1$).
Second, from the proof of the lower bound in~\cite{CKL} with respect to sets of size $O(f)$,
the probability is a non-zero constant provided in each round we have at most $c'\cdot f$
crashed processes, for some constant $0<c'<1$.
Third, since each successful station can enforce the adversary to fail at most
one chain, after each of the first $\tau=k/2$ rounds there are still at least
$k/2$ chains without any crash, hence at most $f/2$ crashes have been enforced
and the argument from the lower bound in \cite{CKL} could be applied.
To conclude the proof, non-zero probability of not hitting any element not in $M$
means that there is such $M\in \cM$ that the algorithm does not finish before round 
$\tau$ with constant probability, thus imposing expected work $\Omega(pk)$.
\end{proof}

\subsection{GrubTEch against the $k$-Chain-Ordered adversary}

The analysis of \textsc{GrubTEch} against the Weakly-Adaptive adversary relied on electing a leader.
Precisely, as we knew that there are $ p - f $ non-faulty stations in an execution, then we expected to elect a non-faulty leader in a certain number of trials.

Nevertheless we could have chosen a faulty station as a leader and the adversary could have chosen to crash that station. However the amount of such failing occurrences would not exceed
the number of trials needed to elect the non-faulty one. While considering the $k$-Chain-Ordered adversary, these estimates are different.

When a leader is elected then he may belong to the non-faulty set (and this is expected to happen within a certain number of trials) or he may be elected from the faulty set, thus will
be placed somewhere in the adversary's partial order. If the leader was elected in a random process then it will appear in a random part of this order. In what follows we may expect
that if the adversary decides to crash the leader, then it will be forced to crash several stations preceding the leader in one of the chains in his partial order. 
Consequently this is the key reason why the expected work complexity would change against the k-Chain-Ordered adversary.

\begin{theorem}
\label{thmChainOrder}
 \textsc{GrubTEch} solves Do-All in the channel without collision detection with the expected work $ \mathcal{O}(t + p\sqrt{t} + p\;\min\{p/(p-f), k, t\}\log(p)) $ against the $k$-Chain-Ordered adversary.
\end{theorem}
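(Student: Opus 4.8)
The plan is to mirror the proof of Theorem~\ref{theorem23}, decomposing the expected work into \emph{productive}, \emph{failing}, and \emph{leader-election} parts, and to replace only the bounds on the failing and leader-election work by ones that also exploit the chain structure. The productive work is $\mathcal{O}(t+p\sqrt{t})$ exactly as before, since the core of \textsc{GrubTEch} is \textsc{Groups-Together} with \textsc{Crash-Echo} substituted for the collision-detection feedback (Fact~\ref{theorem21}), and this part is insensitive to how the adversary orders its crashes; reliability is already established. Moreover a $k$-Chain-Ordered adversary is in particular a Weakly-Adaptive $f$-Bounded adversary, so Theorem~\ref{theorem23} applies verbatim and already yields the $\mathcal{O}(t+p\sqrt{t}+p\min\{p/(p-f),t\}\log p)$ bound; hence it remains only to prove the sharper bound $\mathcal{O}(t+p\sqrt{t}+pk\log p)$ in the regime $k< p/(p-f)$, and, for the $\min\{\cdot,t\}$ term, to reuse the same counting-of-phases argument as in Theorem~\ref{theorem23} (the number of leader re-elections cannot exceed the number of phases of the execution).

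The heart of the argument is a bound on $N$, the number of times procedure \textsc{Elect-Leader} is invoked before a \emph{sustainable} leader is elected. Every re-invocation is triggered exactly by a \textsc{Crash-Echo} feedback revealing that the current leader has crashed (patterns \textbf{(silent, silent)} or \textbf{(loud, silent)}), so $N$ equals the number of crash-prone leaders the adversary actually crashes. First I would observe that whenever \textsc{Elect-Leader} terminates via a successful coin toss it elects a uniformly random operational station (the round-robin fallback breaks this symmetry, but it is reached only with probability exponentially small in the current number $x$ of operational stations and can be absorbed into the low-probability failure term). Condition on the elected leader lying in the crash-prone set and in chain~$j$, which the adversary has so far crashed up to depth $d_j$: then the leader is uniform among the stations of chain~$j$ at depths $d_j+1,\dots,l_j$, and — since the adversary may crash it only after crashing all of its predecessors in chain~$j$ — forcing a re-election replaces $d_j$ by a value chosen uniformly in $\{d_j+1,\dots,l_j\}$. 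This is precisely a ``random descent'': the remaining length $l_j-d_j$ is repeatedly replaced by a uniformly random strictly smaller value, so the expected number of crash-prone leaders charged to chain~$j$ is $\mathcal{O}(\log l_j)=\mathcal{O}(\log p)$, and summing over the $k$ chains gives $\mathbb{E}[N]=\mathcal{O}(k\log p)$.

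Granting this, I would close the proof as in Theorem~\ref{theorem23}. A single call to \textsc{Elect-Leader} costs $\mathcal{O}(p)$ expected work: in the coin-tossing phase the $x$ operational stations each run for $\mathcal{O}(p/x)$ expected rounds before a lone transmitter appears (as in the estimate of Lemma~\ref{lemma22}), and the rare round-robin fallback contributes $\mathcal{O}(p)$ in expectation. Each crash event additionally wastes only the $\mathcal{O}(1)$ rounds of the \textsc{Crash-Echo} that detected it, i.e.\ $\mathcal{O}(p)$ work, so the total leader-election plus failing work is $\mathcal{O}(p)\cdot\mathbb{E}[N]=\mathcal{O}(pk\log p)$; a Markov/concentration bound turns this into the claimed bound with probability at least $1-1/p$, and on the complementary event the trivial $\mathcal{O}(p^{2})$ bound on \textsc{GrubTEch} (equivalently $\mathcal{O}(pt)$ when $t$ is small) contributes only $\mathcal{O}(p)$ to the expectation. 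I expect the main obstacle to be making the random-descent step fully rigorous: one must argue that the adversary genuinely cannot crash a chosen crash-prone leader without crashing a uniformly distributed prefix of its chain, handle the dependence between which chain is hit and the current crash depths, and show that the occasional non-uniform elections (the round-robin fallback) together with the adversary's freedom to interleave crashes across chains and across epochs do not break the per-chain $\mathcal{O}(\log l_j)$ count.
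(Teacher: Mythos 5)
Your proposal follows essentially the same route as the paper's proof: decompose the work as in Theorem~\ref{theorem23}, and for the faulty-leader case observe that a randomly elected leader lands (in expectation) at the midpoint of its chain, so the adversary must crash half the remaining chain to remove it, yielding $\mathcal{O}(\log l_j)$ leader crashes per chain and $\mathcal{O}(k\log p)$ re-elections overall, each costing $\mathcal{O}(p)$ work. Your ``random descent'' phrasing and your explicit list of the points needing rigor (uniformity of the elected leader, the round-robin fallback, interleaving across chains) are in fact more careful than the paper's ``by simple induction'' step, but the underlying argument is the same.
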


\begin{proof}
Because of the same arguments as in Theorem \ref{theorem23}, it is expected that a non-faulty leader will be chosen in the expected number of $ \mathcal{O}(\frac{p}{p-f}\log(p)) $ trials, 
generating $ \mathcal{O}(p\;\frac{p}{p-f}\log(p)) $ work.

On the opposite, let us consider what will be the work accrued in phases when the leader is chosen from the faulty set and hence may be crashed by the adversary. 
According to the adversary's partial order we have initially $ k $ chains, where chain $ j $ has length $ l_{j} $. If the leader was chosen from that order then it belongs to one of the chains. 
We will show that it is expected that the chosen leader will be placed somewhere in the middle of that chain.

Let $ X $ be a random variable such that $ X_{j} = i $ where $ i $ represents the position of the leader in chain $ j $. We have that
$ \mathbb{E}X_{j} = \sum_{i=1}^{l_{j}} \frac{i}{l_{j}} = \frac{1}{l_{j}} \frac{(1+l_{j})}{2}l_{j} = \frac{(1+l_{j})}{2}. $

We can see that if the leader was crashed, this implies that half of the stations forming the chain were also crashed.
If at some other points of time, the faulty leaders will also be chosen from the same chain, then by simple induction we may conclude that this chain is expected to be all crashed after $ \mathcal{O}(\log(p)) $ 
iterations, as a single chain has length $ \mathcal{O}(p) $ at most. In what follows if there are $ k $ chains, then after $ \mathcal{O}(k\log(p)) $ steps this process will end and we may be sure to choose 
a leader from the non-faulty subset, because the adversary will spend all his failure possibilities.

Finally, if we have a well serving non-faulty leader then the work accrued is asymptotically the same as in \textsc{Groups-Together} algorithm with the difference that each step 
is now simulated by the \textsc{Crash-Echo} procedure. This work is equal $ \mathcal{O}(t + p\sqrt{t}) $.

Altogether, taking Lemma \ref{lemma22} into consideration, the expected work performance of \textsc{GrubTEch} against the k-Chain-Ordered adversary is  
$$ \left(1 - \frac{1}{p}\right)\mathcal{O}(t + p\sqrt{t} + p\;\min\{p/(p-f), k, t\}\log(p)) + \frac{1}{p}\mathcal{O}(p^{2}) $$
$$ = \mathcal{O}(t + p\sqrt{t} + p\;\min\{p/(p-f), k, t\}\log(p))$$
what ends the proof.

\end{proof}

\subsection{GrubTEch against the  adversary limited by arbitrary order}

Finally, let us consider the adversary that is limited by \textbf{arbitrary} partial order $P=(P,\succ)$. We say that  two partially ordered elements  
are \textit{incomparable} if none of relations $x \succ y $ and  $y\succ x$ hold. Translating into the considered model, this means that the adversary may 
crash incomparable elements in any  sequence during the execution of the algorithm (clearly, only if $x$ and $y$ are among  $f$  stations chosen to be crash-prone). 
%

\begin{theorem}
\textsc{GrubTEch} 
solves Do-All in the channel without collision detection with the expected work $ \mathcal{O}(t + p\sqrt{t} + p\;\min\{p/(p-f), k, t\}\log(p)) $ 
against the \dk{$k$-Thick-Ordered adversary}. 
\end{theorem}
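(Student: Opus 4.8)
The plan is to reduce the $k$-Thick-Ordered case to the already-settled $k$-Chain-Ordered case of Theorem~\ref{thmChainOrder}, using Dilworth's chain-decomposition theorem together with a simple monotonicity observation: sparsifying the adversary's order can only make it stronger.

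First I would fix an arbitrary partial order $P$ of thickness at most $k$ that the adversary may impose on its set of (at most) $f$ crash-prone stations. Since the largest anti-chain of $P$ has size at most $k$, Dilworth's theorem lets us partition the ground set of $P$ into at most $k$ chains $C_1,\dots,C_k$ (pad with empty chains if fewer are produced). Let $P'$ be the partial order on the same ground set obtained from $P$ by discarding every cross-chain relation and keeping only the relations internal to each $C_j$; thus $P'$ is a disjoint union of $k$ chains whose lengths still sum to $f$, i.e., exactly a partial order available to a $k$-Chain-Ordered adversary. Next, observe the monotonicity principle: an ordered adversary restricted by $P$ may crash a station $x$ in a round only once all $P$-predecessors of $x$ have crashed, whereas an adversary restricted by $P'$ faces the weaker enabling condition that only the (fewer) $P'$-predecessors of $x$ have crashed. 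Because this enabling condition is monotone in the predecessor set, every online crash schedule realizable by the $P$-restricted adversary is also realizable by the $P'$-restricted adversary.

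Since \textsc{GrubTEch} does not use the adversary's order in any way, the set of executions of \textsc{GrubTEch} under the $P$-restricted adversary is contained in the set of its executions under the $P'$-restricted adversary, so the expected work against the former is bounded by the expected work against the latter. By Theorem~\ref{thmChainOrder}, the latter is $\mathcal{O}(t+p\sqrt{t}+p\,\min\{p/(p-f),k,t\}\log p)$. As $P$ was an arbitrary thickness-$k$ order, the same bound holds against every $k$-Thick-Ordered adversary; reliability is inherited from the (order-independent) reliability of \textsc{GrubTEch} established earlier. In particular, nothing beyond Theorem~\ref{thmChainOrder} needs to be re-run: the productive work $\mathcal{O}(t+p\sqrt{t})$, the leader-election cost $\mathcal{O}(p\cdot\frac{p}{p-f}\log p)$ from Lemma~\ref{lemma22}, the chain-exhaustion argument yielding the $\mathcal{O}(k\log p)$ bound on faulty-leader elections, the trivial $\mathcal{O}(t)$ bound, and the $\frac{1}{p}\mathcal{O}(p^2)$ low-probability term all carry over verbatim.

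The only genuinely delicate point — and the one I would state most carefully — is the claim that the coarsening $P\mapsto P'$ never helps the adversary, i.e., that deleting relations genuinely enlarges the family of feasible crash schedules rather than merely changing it. This is the crux of why the \emph{width} parameter $k$, and not the full order structure, governs the complexity, and it is what lets the $k$-chain analysis subsume all thickness-$k$ orders. Everything else is routine given Theorem~\ref{thmChainOrder}.
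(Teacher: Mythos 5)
Your proposal is correct and follows essentially the same route as the paper: apply Dilworth's theorem to cover the $f$ crash-prone stations by at most $k$ disjoint chains, observe that dropping the cross-chain relations only weakens the constraints on the adversary (so every crash schedule feasible under $P$ is feasible under the chain decomposition $P'$), and invoke Theorem~\ref{thmChainOrder}. Your write-up is in fact more explicit than the paper's about the monotonicity step — that coarsening the order enlarges the adversary's feasible schedules and that \textsc{GrubTEch} is oblivious to the order — which the paper compresses into a single sentence.
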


\begin{proof}
We assume that the crashes forced by the adversary are constrained  by some partial order $P$. 
Let us first recall the following lemma. 

\begin{lemma}\label{DWlemma}(Dilworth's theorem \cite{DILWORTH})
In a finite partial order, the size of a maximum anti-chain is equal to the minimum
number of chains needed to cover all elements of the partial order.
\end{lemma}

\dk{Recall that the $k$-Thick-Ordered adversary is
constrained   by any order  of thickness~$k$.} 
Clearly, the adversary choosing some $f$ stations to be crashed cannot increase the size of the maximal anti-chain. 
Thus using Lemma~\ref{DWlemma} we consider the coverage  of the crash-prone stations by at most $k$ disjoint chains, \dk{and any dependencies between chains' elements
create additional constraints to the adversary comparing to the $k$-Chain-Ordered
one. Hence} we fall into the case 
concluded in Theorem~\ref{thmChainOrder} that completes the proof. 
\end{proof}

\section{GILET --- Groups with Internal Leader Election Together} \label{gilet}

In this section we introduce an algorithm for the channel without collision detection that is designed to work efficiently against the $1$-RD adversary.
Its expected work complexity is $ \mathcal{O}(t + p\sqrt{t}\log^{2}(p)) $. The algorithm makes use of previously designed solutions from \cite{CKL}, i.e., \textsc{Groups-Together} algorithm \jmii{(cf. Section \ref{grotog})},
however we implement a major change in how the stations confirm their work
(due to the lack of collision detection in the model).

\begin{algorithm}
{
{- arrange all $ p $ names of stations into list $\texttt{GROUPS}$ of groups\;}
{- initialize variable k := $ p/\min\{\lceil\sqrt{t}\rceil, p\}$\;}
{- initialize both $\texttt{TASKS}$ and $\texttt{OUTSTANDING}_{v} $ to sorted list of all $ t $ names of tasks\;}
{- initialize $\texttt{DONE}_{v} $ to an empty list of tasks\;}
{- initialize $\texttt{REMOVED}$ to an empty list of stations\;}
{- \Repeat{halted}{\textsc{Epoch-Groups-CW($k$)}\;}}
}
\caption{\textsc{GILET}; code for station $ v $;}
\label{algorithm41}
\end{algorithm}

\begin{algorithm}
\KwIn{$k$}
{
{set pointer $ \texttt{Task\_To\_Do}_{v} $ on list $ \texttt{TASKS} $ to the initial position of the range $ v $\;}
{set pointer $ \texttt{Transmit} $ to the first item on list $ \texttt{GROUPS} $\;}
{\Repeat{pointer $ \texttt{Transmit} $ points to the first entry on list $\texttt{GROUPS} $}
{

{perform the first task on list $ \texttt{TASKS} $, starting from the one pointed to by $ \texttt{Task\_To\_Do}_{v} $, that is in list $\texttt{OUTSTANDING}_{v} $\;}
{move the performed task from list $\texttt{OUTSTANDING}_{v} $ to list $\texttt{DONE}_{v} $\;}
{advance pointer $ \texttt{Task\_To\_Do}_{v} $ by one position on list $ \texttt{TASKS} $\;}

{
\If{$ \texttt{Transmit} $ points to $ v $}{
{initialize $ i := 0 $ \;}
\Repeat{$ i < 4\log(p) $}{
execute $ \textsc{Mod-Confirm-Work} $($k$) \;
\If{a broadcast was heard}{break\;}
\Else{increment $ i $ by $ 1 $\;}}}
}

{\If{a broadcast was heard in the preceding round}{
{let $ w $ be the first station in the group pointed to by $\texttt{Transmit}$\;}
{\For{each item $ x $ on list $\texttt{DONE}_{w} $}{
{\If{$ x $ is on list $ \texttt{OUTSTANDING}_{v} $}{move $ x $ from $\texttt{OUTSTANDING}_{v} $ to $\texttt{DONE}_{v} $\;}}
{\If{$ x $ is on list $ \texttt{TASKS} $}{remove $ x $ from $ \texttt{TASKS} $\;}}
}}
{\If{list $ \texttt{TASKS} $ is empty}{halt\;}}
{advance pointer $ \texttt{Transmit} $ by one position on list $ \texttt{GROUPS} $\;}
}}
{\Else{
add all the stations from group pointed to by $\texttt{Transmit}$ to list $ \texttt{REMOVED} $\;
remove the group pointed to by $ \texttt{Transmit} $ from list $ \texttt{GROUPS} $\;
execute $\textsc{Check-Outstanding}$\;
halt\;}
}}
{rearrange all stations in the groups of list $ \texttt{GROUPS} $ into a new version of list $ \texttt{GROUPS} $\;}}}
\caption{Procedure \textsc{Epoch-Groups-CW}; code for station $ v $;}
\label{algorithm42}
\end{algorithm}

\begin{algorithm}
\KwIn{$k$}
{
{$ j := 0 $\;}
\Repeat{$ j < \log(k) $}{
{$ \texttt{coin} := \frac{k}{2^{j}} $\;}
{toss a coin with the probability $ \texttt{coin}^{-1} $ of heads to come up\;}
\If{heads came up in the previous step}{broadcast $ v $ via the channel and attempt to receive a message\;}
\If{some station $ w $ was heard}{
break\;
}
increment $ j $ by $ 1 $\;}
}
\caption{Procedure \textsc{Mod-Confirm-Work}; code for station $ v $}
\label{algorithm43}
\end{algorithm}

\begin{algorithm}
{
{- basing on list $\texttt{REMOVED}$ and list $\texttt{TASKS}$ assign every task to all the processors\;}
{- $i := 0$\;}
{\Repeat{$ i < |\texttt{TASKS}|$}{perform $ i $-th task from list \texttt{TASKS}\;
$i := i+1$\;}}
{- clear list $\texttt{TASKS}$\;}
}

\caption{Procedure \textsc{Check-Outstanding}; code for station $ v $}
\label{algorithm44}
\end{algorithm}

In our model, there is a channel without collision detection. That is why whenever some group $ g $ is scheduled to broadcast, a leader election procedure \textsc{Mod-Confirm-Work}
is executed in order to hear a successful transmission of exactly one station. Because all the stations within $ g $ had the same tasks assigned, then if the leader is chosen, 
we know that the group performed appropriate tasks.

The inherent cost of such an approach of confirming work is that we may not be sure whether removed groups did really crash.
The effect is that if all the tasks were not performed and all the stations were found crashed, then we have to execute an additional procedure that will finish performing them reliably.

This is realized by a new list \texttt{REMOVED} containing removed stations, and procedure ~\textsc{Check-Outstanding} which assigns every outstanding task to all the stations.
Then if with small probability we have mistakenly removed some operational stations, the algorithm still remains reliable and efficient.

\subsection{Analysis of GILET}

\begin{lemma}
\label{lemma41}
\textsc{GILET} is reliable.
\end{lemma}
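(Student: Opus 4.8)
The plan is to verify the two defining properties of a reliable algorithm for every execution: (i) all $t$ tasks are performed provided at least one station stays non-faulty, and (ii) every station halts unless it crashes. Structurally, \textsc{GILET} is \textsc{Groups-Together} with the single broadcast round of each phase replaced by up to $4\log p$ repetitions of \textsc{Mod-Confirm-Work}, and with an extra safeguard \textsc{Check-Outstanding} on the branch where no transmission is heard; so the argument will mirror the reliability proof of \textsc{Groups-Together} (Fact~\ref{lemma21}) and then handle these two new ingredients.

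First I would argue termination of all sub-procedures and loops. \textsc{Mod-Confirm-Work}$(k)$ runs a loop of at most $\log k$ iterations; \textsc{Check-Outstanding} runs a loop of at most $|\texttt{TASKS}|\le t$ iterations; and the leader-election loop inside \textsc{Epoch-Groups-CW} is executed at most $4\log p$ times per phase. Each phase of \textsc{Epoch-Groups-CW} ends either by advancing the pointer $\texttt{Transmit}$ (when a broadcast was heard) or by halting (when none was heard), so each epoch consists of at most $|\texttt{GROUPS}|$ phases. Then I would observe that, as long as every scheduled group that still contains an operational station elects a leader within its $4\log p$ attempts, \textsc{GILET} performs exactly the same task bookkeeping as \textsc{Groups-Together} --- a single heard transmission plays the role of the ``noisy'' feedback signalling progress --- so the \textsc{Groups-Together} progress argument (each completed epoch either empties $\texttt{TASKS}$ or removes at least one crashed group) applies verbatim, the outer \texttt{Repeat}-loop terminates after finitely many epochs, and at termination all operational stations hear each confirming transmission, update $\texttt{TASKS}$ identically, and halt together once $\texttt{TASKS}=\emptyset$; this yields both (i) and (ii).

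It remains to treat the only other possibility: at some phase the group pointed to by $\texttt{Transmit}$ produces no heard transmission (either the whole group has crashed, or it is operational but no singleton transmission occurred in $4\log p$ trials). Here all currently operational stations execute the \texttt{Else} branch in lockstep: they append the group to $\texttt{REMOVED}$, drop it from $\texttt{GROUPS}$, run \textsc{Check-Outstanding}, and halt. The key point, inherited from the common-knowledge maintenance of $\texttt{TASKS}$ in \textsc{Groups-Together}, is that a task leaves $\texttt{TASKS}$ only after some group confirmed --- via a successful transmission by a live member that had already performed it --- that it was done; hence at this moment $\texttt{TASKS}$, which is identical across all operational stations, contains exactly the genuinely outstanding tasks. \textsc{Check-Outstanding} assigns every task in $\texttt{TASKS}$ to all processors and has each of them perform all these tasks; since $f\le p-1$, at least one station is operational and executes this, so all outstanding tasks are performed, establishing (i); and every operational station halts immediately afterwards, establishing (ii). (Even operational stations just moved to $\texttt{REMOVED}$ are still running the code, so they too run \textsc{Check-Outstanding} before halting, and no task is orphaned.)

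The main obstacle is conceptual rather than computational: making precise the invariant that the locally maintained $\texttt{TASKS}$ lists of all operational stations coincide and over-approximate the true set of outstanding tasks at every round. This holds because removals from $\texttt{TASKS}$ are triggered only by transmissions heard simultaneously by all operational stations and only for tasks actually executed by a live group, exactly as in \textsc{Groups-Together}; once this invariant is in place, the reliability of \textsc{GILET} follows from the two cases above.
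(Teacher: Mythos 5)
Your proposal is correct and follows essentially the same route as the paper's own (much terser) proof: reliability is inherited from \textsc{Groups-Together} because \textsc{Mod-Confirm-Work} and the other loops always terminate, and the \textsc{Check-Outstanding} fallback guarantees that any tasks left outstanding after a (possibly mistaken) group removal are completed by the at least one surviving station before halting. Your version simply spells out the termination bookkeeping and the invariant that the common \texttt{TASKS} list over-approximates the truly outstanding tasks, which the paper leaves implicit.
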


\begin{proof}
 As well as in case of \textsc{GrubTEch}, the solution does depend on reliability of algorithm \textsc{Groups-Together}, because procedure
 \textsc{Mod-Confirm-Work} always terminates. If we fall into a mistake that some operational station has been removed from list \texttt{GROUPS}, then we
 execute procedure \textsc{Check-Outstanding} that will finish all the outstanding tasks.
\end{proof}

\begin{lemma}
\label{lem42}
Assume that the number of operational stations within a group is in $(\frac{k}{2^{i+1}}, \frac{k}{2^{i}}] $ interval and the $ \texttt{coin} $ parameter is set to $ \frac{k}{2^{i}} $.
 Then during $ \textsc{Mod-Confirm-Work} $ a confirming-work broadcast will be performed with probability at least $ 1 - \frac{1}{p} $.
\end{lemma}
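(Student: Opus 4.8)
The plan is to read the statement as a concentration bound for the entire \emph{confirm-work phase} that procedure \textsc{Epoch-Groups-CW} runs on behalf of the scheduled group, i.e.\ the (at most) $4\log p$ successive invocations of \textsc{Mod-Confirm-Work}, and to reduce it to a single-round success estimate amplified over those invocations. Inside one invocation, as $j$ runs from $0$ upward, the parameter $\texttt{coin}$ takes the value $x:=k/2^i$ exactly once, namely at $j=i$; under the hypothesis the number $m$ of operational stations of the group lies in $(k/2^{i+1},k/2^i]=(x/2,x]$. The first building block is therefore the single-round estimate: when each station of the group independently broadcasts with probability $1/x$ and $m\in(x/2,x]$, the probability that exactly one station transmits equals $m\cdot\frac1x\bigl(1-\frac1x\bigr)^{m-1}\ge \frac12\bigl(1-\frac1x\bigr)^{x-1}\ge \frac{1}{2e}=:c$, a positive absolute constant; this is the same computation as in Claim~\ref{claim11}. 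A successful lone transmission is exactly a confirming-work broadcast, so each invocation of \textsc{Mod-Confirm-Work} yields one with probability at least $c$ (and if one already occurs at some earlier $j<i$, so much the better --- we only need \emph{some} confirming broadcast).

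Next I would use the $1$-RD restriction to argue the adversary cannot sabotage the decisive round. Because a crash decision takes effect only after one round, in the very round in which the stations of the group flip their coins the adversary cannot remove the (at most one) station whose coin came up heads: that round's crashes were committed on the basis of strictly earlier information, independently of the current fresh coin flips. Crashes only ever decrease $m$, so, conditioned on any history consistent with the hypothesis $m\in(x/2,x]$, the round at $j=i$ still produces a confirming broadcast with probability at least $c$, whatever the adversary does in that round.

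Then I would amplify over the $4\log p$ invocations. Exposing the randomness invocation by invocation and using the previous paragraph, the sequence of these $4\log p$ confirm-work attempts stochastically dominates $4\log p$ independent $\mathrm{Bernoulli}(c)$ trials. Hence the probability that \emph{no} confirming-work broadcast occurs across all of them is at most $(1-c)^{4\log p}\le e^{-4c\log p}=p^{-4c/\ln 2}\le p^{-1}$, using $c=\frac{1}{2e}\ge \frac{\ln 2}{4}$ and $\log=\log_2$. This yields the claimed lower bound $1-\tfrac1p$.

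The main obstacle is making the domination in the previous paragraph fully rigorous against an \emph{adaptive} (even if $1$-round-delayed) adversary: one must argue that no adaptive crash schedule beats the worst static configuration within the interval $(x/2,x]$. The clean way is the round-by-round exposure: conditioned on the whole history up to (but excluding) the coin flips of a given attempt, the adversary's admissible crashes in that round are already fixed, by hypothesis they leave $m$ in $(x/2,x]$, and the still-unrevealed coin flips then succeed with probability $\ge c$ regardless; a standard coupling lemma turns this into the product bound. A minor bookkeeping point is to carry the event ``the operational count stays in the stated interval during the attempts considered'' as the conditioning throughout, consistently with how the lemma is later applied (together with the surrounding interval-tracking lemmas) in the analysis of \textsc{GILET}.
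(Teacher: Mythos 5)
Your proof is correct and follows essentially the same route as the paper's: a constant lower bound on the probability that exactly one station transmits in the round where $\texttt{coin}=k/2^{i}$ (you use $\tfrac{1}{2e}$, the paper uses $\tfrac{1}{2\sqrt{e}}$ via Claim~\ref{claim11}), amplified geometrically over the $\Theta(\log p)$ invocations to push the failure probability below $1/p$. Your extra care about the $1$-RD adversary and the round-by-round stochastic-domination step tightens a point the paper leaves implicit, but it does not change the argument.
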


\begin{proof}
 We assume that the number of operational stations is in $ (\frac{k}{2^{i+1}}, \frac{k}{2^{i}}] $. The probability that exactly one station will broadcast, estimated from the worst case point of
 view where only $ \frac{k}{2^{i+1}} $ stations are operational is $ \frac{1}{2\sqrt{e}} $, because of the same reason as in Claim \ref{claim11} of Lemma \ref{lem14}.
 
 That is why we would like to investigate the first success occurrence in a number of trials with the probability of success equal $ \frac{1}{2\sqrt{e}} $.
 
 Let $ X \sim Geom\left(\frac{1}{2\sqrt{e}}\right) $. We know that for a geometric random variable with the probability of success equal $ s $:
 
 $$ \mathbb{P}(X \geq i) = (1 - s)^{i-1}. $$
 Hence we will apply it for $ i = 2\sqrt{e}\log(p) + 1 $. We have that
 
 $$ \mathbb{P}(X \geq 2\sqrt{e}\log(p) + 1) = \left(1 - \frac{1}{2\sqrt{e}}\right)^{2\sqrt{e}\log(p)} \leq e^{-\log(p)} = \frac{1}{p}. $$
 Thus
 $$ \mathbb{P}(X > 2\sqrt{e}\log(p) + 1) > 1 - \frac{1}{p}. $$
\end{proof}

\begin{theorem}
\label{theorem41}
\textsc{GILET} performs $ \mathcal{O}(t + p\sqrt{t}\log^{2}(p)) $ expected work on channel without collision detection against the $1$-RD adversary.
\end{theorem}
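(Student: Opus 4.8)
The plan is to charge the work of \textsc{GILET} against the work of the underlying \textsc{Groups-Together} skeleton, paying only a $\mathrm{polylog}(p)$ factor for replacing collision detection by the randomized confirmation routine \textsc{Mod-Confirm-Work}, and to use the $1$-RD restriction to show that this routine almost never fails. Reliability is already Lemma~\ref{lemma41}, so only the expected work needs bounding. I would decompose the rounds of an execution into (i) the task-performing and bookkeeping rounds of the \textsc{Groups-Together} skeleton, (ii) the rounds spent inside \textsc{Mod-Confirm-Work}, and (iii) the rounds spent inside \textsc{Check-Outstanding}, charging each round its at most $p$ operational stations.

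For (i) I would observe that \textsc{GILET} keeps the group partition, the round-robin order over groups and the dense/sparse-epoch mechanics of \textsc{Groups-Together} intact, and that whenever \textsc{Mod-Confirm-Work} ends with a legible broadcast it delivers exactly the feedback collision detection would provide; hence, conditioned on \textsc{Check-Outstanding} never being run on a still-populated group, the type-(i) work is $\mathcal{O}(t+p\sqrt t)$ by Fact~\ref{theorem21}. For (ii), a transmission phase triggers at most $4\log p$ calls of \textsc{Mod-Confirm-Work}, each lasting at most $\log k\le\log p$ rounds (with $k=p/\min\{\lceil\sqrt t\rceil,p\}$), so the confirmation rounds inflate the skeleton work by a factor $\mathcal{O}(\log^2 p)$, contributing $\mathcal{O}((t+p\sqrt t)\log^2 p)$. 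Since $(t+p\sqrt t)\log^2 p=\mathcal{O}(t+p\sqrt t\log^2 p)$ whenever $t\le p^2$, while for $t>p^2$ one has $k=1$, so that \textsc{Mod-Confirm-Work} flips with probability $1$ and takes a single round and (i)+(ii) reduce to $\mathcal{O}(t+p\sqrt t)=\mathcal{O}(t)$, this part lies within the claimed bound in both regimes.

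The heart of the argument is discharging the conditioning in (i) and bounding (iii), and this is where the $1$-RD delay is essential. Fix a phase probing a group $g$ of current operational size $n_g\ge 1$. A single call of \textsc{Mod-Confirm-Work} sweeps the flipping probabilities $2^{0}/k,2^{1}/k,\dots,1$, so it has a ``matched'' iteration $j^\ast$ at which $2^{j^\ast}/k$ is within a factor $2$ of $1/n_g$; because the adversary is $1$-RD, the crashes taking effect in that round were fixed one round earlier and are therefore independent of the coins tossed in it, whence by the estimate in Claim~\ref{claim11} exactly one station transmits there with probability at least $1/(2\sqrt e)$ and $g$'s segment is confirmed. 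This per-call lower bound holds however the adversary shrinks $g$ between calls, and distinct calls use independent coins, so (as in Lemma~\ref{lem42}) the probability that all $4\log p$ calls fail while $g$ retains a live station is at most $(1-1/(2\sqrt e))^{4\log p}\le 1/p$. Hence \textsc{Check-Outstanding} is entered only on a fully crashed $g$ or, with probability $\le 1/p$ per probed group, by mistake; it then performs at most $|\texttt{TASKS}|\le t$ tasks at each of $\le p$ stations and the algorithm halts, contributing $\mathcal{O}(pt)$ and running at most once. A union bound over the at most $\mathrm{poly}(p,t)$ probed phases (tightening the repetition count $4\log p$ if needed), together with the observation that a genuine full removal costs the adversary an entire group of $p/\min\{\lceil\sqrt t\rceil,p\}$ crashes and so can occur only $O(1)$ times before the forced halt, would keep the expected type-(iii) contribution at $\mathcal{O}(t)$, giving overall $\mathcal{O}((t+p\sqrt t)\log^2 p)=\mathcal{O}(t+p\sqrt t\log^2 p)$.

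The hard part, and the step I expect to need the most care, is precisely the claim just used that \textsc{Check-Outstanding} is entered only with probability $\mathcal{O}(1/p)$ \emph{also for genuine removals}: an adversary may attempt to crash an entire group before that group has ever broadcast successfully, which triggers \textsc{Check-Outstanding} with a possibly large outstanding set and costs $\Theta(pt)$, and this cannot be handled by the clean per-call independence used above, since the adversary can adaptively crash $g$ between iterations of a single call. The plan for this step is to use the $1$-RD delay a second time — a group is probed at its full planned membership, so the very first iterations of its first \textsc{Mod-Confirm-Work} call, whose coins the adversary cannot react to, already give $g$ a constant chance of confirming before any crash targeted at $g$ takes effect — and to combine this with the adversary's global budget $f<p$, which caps the number of full group removals and hence the number of epochs before the algorithm is forced to halt; only a careful accounting of this race, of the kind carried out in the paper's proof, yields the $\mathcal{O}(1/p)$ bound. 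Everything else — the skeleton bound from Fact~\ref{theorem21} and the $\mathcal{O}(\log^2 p)$ confirmation overhead — is routine bookkeeping.
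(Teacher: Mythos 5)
Your decomposition into skeleton work, confirmation overhead, and the \textsc{Check-Outstanding} fallback is exactly the route the paper takes: it bounds noisy sparse epochs by $\mathcal{O}(t)$, silent sparse and dense epochs by $\mathcal{O}(p\sqrt{t})$ inflated by the $\mathcal{O}(\log^2 p)$ simulation factor (the $4\log p$ calls of \textsc{Mod-Confirm-Work}, each of at most $\log k\le\log p$ iterations), invokes Lemma~\ref{lem42} for the $1-1/p$ success probability, and closes with the weighted sum $\left(1-\frac{1}{p}\right)\mathcal{O}(t+p\sqrt{t}\log^2 p)+\frac{1}{p}\mathcal{O}(pt)$. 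Up to that point your write-up matches the paper essentially line for line, and your observation that the $1$-RD delay is what makes the per-iteration coin tosses independent of the crashes taking effect in that round is the (implicit) reason the Claim~\ref{claim11}/Lemma~\ref{lem42} estimates apply at all.

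The gap you flag at the end, however, is genuine and is \emph{not} closed by the argument you are deferring to: there is no ``careful accounting of the race'' in the paper's proof. The paper charges the entire \textsc{Check-Outstanding} branch with probability $\frac{1}{p}$, which accounts only for \emph{mistaken} removals (a live group failing all $4\log p$ confirmation attempts); it says nothing about the case where the adversary deliberately crashes an entire group before its turn, which deterministically triggers the silent branch, runs \textsc{Check-Outstanding} on a possibly still-large \texttt{TASKS} list, and halts, at cost up to $\Theta(pt)$. Your two proposed escape routes do not obviously work: the $1$-RD delay does not prevent the adversary from crashing all of $g$ one round \emph{before} $g$ is probed (the delay only forbids reacting to the current round's coins, not acting in advance), and the budget argument only says a full removal costs $p/\min\{\lceil\sqrt{t}\rceil,p\}$ crashes --- which is as little as a single crash when $t\ge p^2$ --- while a single such removal already incurs the $\Theta(pt)$ halt. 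So if you want a complete proof you must either restrict attention to executions where no group is ever entirely crashed before confirming, or redesign the accounting of the silent branch; as written, both your proposal and the paper's proof leave this step unjustified, and you should not present the deferral to the paper as if it resolved it.
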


\begin{proof}
 The proof of \textsc{Groups-Together} work performance from \cite{CKL} stated that noisy sparse epochs contribute $ \mathcal{O}(t) $ to work and silent sparse epochs contribute $ \mathcal{O}(p\sqrt{t}) $.
 Dense epochs do also contribute $ \mathcal{O}(p\sqrt{t}) $ work. Let us compare this with our solution.
 
 Noisy sparse epochs contribute $ \mathcal{O}(t) $ because these are phases with successful broadcasts. And there are clearly $ t $ tasks to perform, so at most $ t $ transmissions 
 will be necessary for this purpose.
 
 Silent sparse epochs, as well as dense epochs consist of mixed work: effective and failing. In our case, each attempt of transmitting is now simulated by
 $ \mathcal{O}(log^{2}(p)) $ rounds. That is why the amount of work is asymptotically multiplied by this factor. Hence we have work accrued during silent sparse and dense epochs contributing
 $ \mathcal{O}(p\sqrt{t}\log^{2}(p)) $.
 
 However according to Lemma \ref{lem42} with some small probability we could have mistakenly removed a group of stations from list \texttt{GROUPS} because \textsc{Mod-Confirm-Work} was silent. 
 Eventually the list of groups may be empty, and there are still some outstanding tasks.
 For such case we execute \textsc{Check-Outstanding}, where all the stations have the same outstanding tasks assigned, and do them for $ |\texttt{TASKS}| $ phases (which actually means until they are all done).
 It is clear that always at least one station remains operational and all the tasks will be performed. Work contributed in such case is at most $ \mathcal{O}(pt) $.
 
 Let us now estimate the expected work:
 
 $$ \left(1 - \frac{1}{p}\right)\mathcal{O}(t + p\sqrt{t}\log^{2}(p)) + \frac{1}{p}\mathcal{O}(pt) = \mathcal{O}(t + p\sqrt{t}\log^{2}(p)), $$
 what completes the proof.
\end{proof}

\section{Transition to the beeping model}
\label{beeping}

To this point we considered a communication model based on a shared channel, with distinction that collision detection is not available. In this section we consider
the beeping model.

In the beeping model we distinguish two types of signals. One is silence, where no station transmits. The other is a beep, which, when heard, indicates that at least one station transmitted.
It differs from the channel with collision detection by providing slightly different feedback, but as we show it has the same complexity with respect to
reliable \DA.
More precisely, we show that the feedback provided by the beeping channel
allows to execute algorithm \textsc{Groups-Together} \jmii{(cf. Section \ref{grotog})} and that 
it is work optimal as well.

\subsection{Lower bound}

We state the lower bound for \DA\ in the beeping model in the following lemma. 

\begin{lemma}
A reliable algorithm, possibly randomized, with the beeping communication model performs work $ \Omega(t + p\sqrt{t}) $ in an execution in which no failures occur.
 \label{lem31}
\end{lemma}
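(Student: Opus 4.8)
The plan is to derive Lemma~\ref{lem31} from the absolute lower bound of Chlebus et al.~\cite{CKL} (their Lemma~2) for the channel \emph{with} collision detection, by observing that the beeping channel is informationally no stronger than the collision‑detection channel, so that any reliable beeping algorithm can be simulated, with no change in behaviour and no change in work, on the collision‑detection channel.

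First I would isolate the easy $\Omega(t)$ part and treat it separately: since the $t$ tasks are similar and each requires one round to be performed, completing all of them needs at least $t$ distinct task‑performing processor steps over the execution, every one of which is counted towards work; hence the work is $\Omega(t)$ irrespective of the feedback model and of the failure pattern, in particular when no failure occurs.

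The core of the argument combines the $\Omega(p\sqrt t)$ bound with a reduction. Given a reliable (possibly randomized) algorithm $A$ for the beeping channel, define an algorithm $A'$ for the collision‑detection channel which in every round performs exactly the same local computation and the same transmit/listen action as $A$, and which, whenever $A$ would read the channel, translates the collision‑detection feedback into beeping feedback by mapping both \textbf{Single} and \textbf{Collision} to a beep and \textbf{Silence} to silence. Because a station always knows its own transmit/listen action, this translation discards information rather than requiring more, so it is always available; and for every fixed string of random bits and every crash pattern — in particular the empty one — the execution of $A'$ on the collision‑detection channel is \emph{identical}, round by round, to the execution of $A$ on the beeping channel: the set of performed tasks, the halting round of every station, and hence the accrued work all coincide. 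Thus $A'$ is reliable and has the same (expected) work function as $A$. Applying Lemma~2 of~\cite{CKL} to $A'$ shows that in a no‑failure execution $A'$, and therefore $A$, performs (expected) work $\Omega(t+p\sqrt t)$, which proves the claim.

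The only point requiring care, rather than a genuine obstacle, is the faithfulness of the emulation: one has to check that the beeping and collision‑detection models agree on what a transmitter perceives — in both, a transmitter merely observes the channel state (a beep, resp.\ \textbf{Single}/\textbf{Collision}), and neither model grants the dedicated successful‑transmission acknowledgement specific to the channel \emph{without} collision detection — so that the map $\{\textbf{Single},\textbf{Collision}\}\mapsto\text{beep}$, $\textbf{Silence}\mapsto\text{silence}$ really does reconstruct the beeping feedback seen by every station. Once this is granted the reduction is immediate. (Alternatively, one could reprove the bound directly along the lines of~\cite{CKL}: with no failures the adversarial component of that proof is vacuous, leaving the information‑theoretic statement that within the first $\Theta(\sqrt t)$ rounds the $\Theta(\sqrt t)$ bits of channel history the stations have heard are too few to certify that all $t$ tasks are done; but the reduction above avoids redoing this calculation.)
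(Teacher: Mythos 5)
Your proof is correct, but it takes a genuinely different route from the paper's. The paper reproves the bound from scratch by adapting the argument of Lemma~1 of~\cite{CKL} to the beeping channel: it defines what it means for a task to be \emph{confirmed} at round $i$ (so that at most $i$ tasks can be confirmed by round $i$), and then shows via a crash-indistinguishability argument --- constructing a second execution $\mathcal{E}_2$ in which every station that performed an unconfirmed task $\beta$ is crashed just before doing so --- that any station halting at round $j$ must itself have performed all tasks not confirmed by round $j$, whence each station contributes $\Omega(\sqrt{t'}+(t-t'))=\Omega(\sqrt{t})$ work. You instead reduce to the already-established $\Omega(t+p\sqrt{t})$ lower bound for the channel \emph{with} collision detection by simulating any beeping algorithm on that channel, collapsing \textbf{Single} and \textbf{Collision} to a beep. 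The reduction is sound: beeping feedback is a deterministic function of the collision-detection feedback together with a station's own action, so the round-by-round coupling of executions (hence reliability and work) is exact, and lower bounds transfer from the stronger-feedback model to the weaker one. Your approach is shorter and avoids redoing the confirmation-counting calculation, at the cost of having to pin down precisely what a transmitter perceives in each model (which you correctly flag); the paper's direct proof is self-contained, independent of how one formalizes the relationship between the two channels, and makes the combinatorial mechanism behind the $p\sqrt{t}$ term explicit. Both are valid proofs of the lemma.
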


\begin{proof}
The proof is an adaptation of the proof of Lemma 1 from \cite{CKL} to the beeping model.
Let $ \mathcal{A} $ be a reliable algorithm. The part $ \Omega(t) $ of the bound follows from the fact that every task has to be performed at least once in any execution of $ \mathcal{A} $.
 
 Task $ \alpha $ is \textit{confirmed} at round $ i $ of an execution of algorithm $ \mathcal{A} $, if either a station performs a beep successfully and it has performed $ \alpha $ by round $ i $, or
 at least two stations performed a beep simultaneously and all of them have performed task $ \alpha $ by round $ i $ of the execution.
 All of the stations broadcasting at round $ i $ and confirming $ \alpha $ have performed it by then, so at most $ i $ tasks can be confirmed at round $ i $. Let $ \mathcal{E}_{1} $
 be an execution of the algorithm when no failures occur. Let station $ v $ come to a halt at some round $ j $ in $ \mathcal{E}_{1} $.
 
\noindent
\textbf{Claim:} 
The tasks not confirmed by round~$ j $ were performed by $ v $ itself in $ \mathcal{E}_{1} $.

\begin{proof}
Suppose, to the contrary, that this is not the case, and let $ \beta $ be such a task.
Consider an execution, say $ \mathcal{E}_{2} $, obtained by running the algorithm and
crashing  any station that performed task $ \beta $ in $ \mathcal{E}_{1} $ just before it was to perform $ \beta $ in $ \mathcal{E}_{1} $, and all the remaining stations, except for $ v $,
crashed at step $ j $.
The broadcasts on the channel are the same during the first $ j $ rounds in $ \mathcal{E}_{1} $ and $ \mathcal{E}_{2} $.
Hence all the stations perform the same tasks in $ \mathcal{E}_{1} $ and $ \mathcal{E}_{2} $ till round $ j $.
The definition of $\mathcal{E}_{2} $ is consistent with the power of the \textit{Unbounded} adversary.
The algorithm is not reliable because task $ \beta $ is not performed in 
$ \mathcal{E}_{2} $ and station $ v $ is operational.
This justifies the claim.
\end{proof}

We estimate the contribution of the station $ v $ to work.
The total number of tasks confirmed in $ \mathcal{E}_{1} $ is at most 
\[
1+2+\ldots+j=\mathcal{O}(j^2)\ .
\]
Suppose some $ t' $ tasks have been confirmed by round $ j $.
The remaining $ t-t' $ tasks have been performed by $ v $.
The work of $ v $ is at least 
\[
\Omega(\sqrt{t'}+(t-t'))=\Omega(\sqrt{t})\ ,
\]
which completes the proof.
\end{proof}

\subsection{How algorithm \textsc{Groups-Together} works in the beeping model}

Collision detection was a significant part of algorithm \textsc{Groups-Together} as it provided the possibility of taking advantage of simultaneous transmissions. Because of maintaining common knowledge about
the tasks assigned to groups of stations we were not interested in the content of the transmission but the fact that at least one station from the group remained operational, what
guaranteed progress.

In the beeping model we cannot distinguish between \textit{Single} and \textit{Collision}, however in the sense of detecting progress the feedback is consistent.
It means that if a group $ g $ is scheduled to broadcast at some phase $ i $, then we have two possibilities. If \textit{Silence} was heard this means that all the stations
in group $ g $ were crashed, and their tasks remain outstanding. Otherwise if a beep is heard this means that at least one station in the group remained operational. As the transmission
was scheduled in phase $ i $ this means that certain $ i $ tasks were performed by group $ g $.

Lemma \ref{lem31} together with the work performance of \textsc{Groups-Together} allows us to conclude that the solution is also optimal in the beeping model.

\begin{corollary}
\jmii{\textsc{Groups-Together}} is work optimal in the beeping channel against the $ f $-Bounded adversary.
\end{corollary}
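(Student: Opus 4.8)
The plan is to combine the lower bound just established in Lemma~\ref{lem31} with an upper bound obtained by showing that \textsc{Groups-Together} runs essentially unchanged on the beeping channel. For the lower bound side, Lemma~\ref{lem31} already gives $\Omega(t+p\sqrt{t})$ for any reliable (possibly randomized) algorithm in the beeping model, even against an adversary that causes no crashes at all; \emph{a fortiori} this holds against the $f$-Bounded adversary. So the only thing left is to argue the matching upper bound $\mathcal{O}(t+p\sqrt{t})$ for \textsc{Groups-Together} on the beeping channel.

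First I would recall precisely which feature of collision detection \textsc{Groups-Together} actually uses: when a group is scheduled to broadcast in a phase, the algorithm only needs to distinguish \emph{Silence} (all stations in that group have crashed, so its assigned tasks are still outstanding) from \emph{not Silence} (at least one station survived, so the tasks the group was responsible for in this phase are done and may be removed). It never uses the distinction between \emph{Single} and \emph{Collision}. Hence I would observe that the beeping feedback --- \emph{silence} versus \emph{beep} --- carries exactly the bit that \textsc{Groups-Together} consumes: a beep plays the role of ``Single or Collision'', silence plays the role of ``Silence''. Therefore the execution of \textsc{Groups-Together} on a beeping channel is identical, round for round, to its execution on a collision-detection channel under the same crash pattern, and in particular it is reliable (Fact~\ref{lemma21}) and performs work $\mathcal{O}(t+p\sqrt{t})$ against any $f$-Bounded adversary (Fact~\ref{theorem21}).

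Combining the two bounds: the upper bound $\mathcal{O}(t+p\sqrt{t})$ comes from running \textsc{Groups-Together} in the beeping model (via Facts~\ref{lemma21} and~\ref{theorem21} and the feedback-equivalence observation), and the lower bound $\Omega(t+p\sqrt{t})$ for \emph{every} reliable algorithm in the beeping model comes from Lemma~\ref{lem31}; since the $f$-Bounded adversary subsumes the no-failure adversary, the lower bound applies in particular to it. Thus \textsc{Groups-Together} is asymptotically work-optimal in the beeping channel against the $f$-Bounded adversary, which is what we wanted. The only mildly delicate point is being careful that the feedback map is \emph{exactly} right --- i.e. that \textsc{Groups-Together} never relies on recognizing a genuine collision as distinct from a single transmission --- but inspecting the phase structure (Section~\ref{grotog}) confirms this, so there is no real obstacle; the argument is essentially a reduction plus an already-proved lower bound.
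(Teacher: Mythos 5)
Your proposal is correct and follows essentially the same route as the paper: the lower bound is taken directly from Lemma~\ref{lem31}, and the upper bound comes from observing that \textsc{Groups-Together} only ever needs to distinguish silence from non-silence (never \emph{Single} from \emph{Collision}), so the beeping feedback suffices and Facts~\ref{lemma21} and~\ref{theorem21} transfer unchanged. The paper's argument is exactly this reduction plus the lower bound, so there is nothing to add.
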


\section{Conclusions}
\label{conclusions}

\mj{
This paper addressed the challenge of performing work on a shared channel
with crash-prone stations against ordered \dk{and delayed} adversaries, introduced in this work.
The considered model is very basic, therefore our solutions could
be implemented and efficient in other related communication models
with contention and failures.

We found that some orders of crash events are more costly than the others for 
\dk{a given
algorithm and the whole problem,}
in particular, more shallow orders or even slight delays \jm{in the effects of adversary's decisions},
constraining the adversary, allow solutions to stay closer to the absolute lower bound
for this problem.

\remove{
In this paper our aim was to design reliable and most efficient algorithms
using randomization techniques against different adversarial scenarios.
}

All our algorithms work on a shared channel with acknowledgments only,
without collision detection,
what makes the setup challenging. While it was already shown that there
is not much we can do against a \safba\ adversary,
our goal was to 
investigate whether there are some other adversaries that 
an algorithm can play against
efficiently.

Taking a closer look at our algorithms, each of them works differently against
different adversaries.
\Robal\ 
does not simulate a collision detection
mechanism, opposed to the other two solutions, but tries to exploit good 
properties of an existing (but a priori unknown to the algorithm) linear order of crashes. On the other hand, its execution against a \walba\ adversary could be
inefficient --- the adversary could enforce a significant increase in the overall work performance \jm{by crashing a small number of stations multiple times. Then \textsc{Mix-And-Test} would be
executed many times with the same parameters, generating excessive work}.
\Grubtech, on the other hand,
cannot work efficiently against the  
1-RD 
adversary, as there is
a global leader chosen to coordinate the \textsc{Crash-Echo} procedure that simulates confirmations in a way similar to a collision detection mechanism (recall that
we do not assume collision detection given as channel feedback). 
Hence such an adversary could decide to always crash the leader, making the algorithm inefficient, as electing a leader is quite costly \jm{--- the leader is chosen in a number of trials, what generates excessive work}.
%
Yet, from a different angle, \Gilet\ 
confirms every piece of progress by electing a leader in a specific way, 
which is efficient against the $1$-RD adversary, but
executing it against the 
Weakly-Adaptive
adversary would result in an increase in the overall work complexity.


\remove{
Consequently, this are the reasons why we do not have a single solution for every setup because
the adversarial scenarios that we examined are very specific and differ from each other
significantly (what actually reinforces the fact that the hierarchy of adversaries we 
introduced is meaningful). Consequently we needed different treatment for each of them
to achieve efficient solutions.
}

\remove{
The primary aim of this paper was 
to create an effective Do-All solution for the model with an \textit{Ordered} adversary and channel without collision detection. The expected work complexity of our solution
is $\mathcal{O}(t + p\sqrt{t}\log(p)) $. Comparing it to the result from \cite{CKL} we conclude that our solution is asymptotically at most logarithmically far from the minimal work in
the assumed model.

It is worth emphasizing that, to the best of our knowledge, the Ordered adversary was not introduced earlier in literature and reflects interesting and somehow more realistic events, i.e., where
for instance some rusty factor is pumped into the area or natural phenomena make the processors crash. The devices are then crashed in a more random way, what has been modeled as
the adversarial order.

Secondly, we also proposed a randomized algorithm for the channel without collision detection, which is asymptotically at most logarithmically far from the lower bound shown
in \cite{CKL}. This, to some extent, answers the open question stated in that paper.

Thirdly, we introduced another adversary called the \textit{$k$-RD} adversary and invented an effective way of adjusting an existing solution to our needs. All these considerations led to
a hierarchy of adversaries, according to several factors and in particularly solves the most demanding cases.

Finally we have shown that the \textsc{Groups-Together} algorithm also works in the beeping model, what makes the problem solved and efficient in a different configuration.
Additionally we gave some arguments justifying the time and energy performances of our algorithms.
}

\dk{
\paragraph{Remarks on time complexity.}
\label{timeenergy}

\remove{
In the presented problem and model, work complexity describes algorithms' performance quite precisely, however we would like to state \jm{general and less formal ideas} about time complexity and energy
consumption of our algorithms, \dk{the latter} understood as the total number of transmissions by stations. 
}

First of all, we emphasize that time complexity,
defined as the number of rounds until all non-crashed stations terminate, is not the best choice to describe how efficient the algorithms are, because this strongly depends on how the adversary interferes with the system.
In what follows we present some general bounds that might, however, overestimate the time complexity for a vast range of executions. 
\remove{
Clementi et al. \cite{CMS} proved that
%
any $ F $-reliable Do-All protocol, in the worst case, requires at least $ \Omega\left(\frac{t}{p-F} + \min\left\{\frac{tF}{p}, F + \sqrt{t}\right\}\right)$ completion time.
  It holds even when the faults only happen at the very beginning of the protocol execution.
The also showed an $ F $-reliable Do-All protocol 
having optimal completion time.
%
}
\jm{In all our considerations, at some point of an execution (even at the very beginning) it may happen that only the non-faulty stations remain operational, 
because the adversary will realize all of its possible crashes. Then at most $ t $ tasks must be performed by the remaining $ p - f $ stations. Hence, even if the tasks are equally
distributed among the non-faulty stations, doing them all lasts at least $ t/(p-f) $ rounds.
On the other hand, initially $ t $ tasks are distributed among $ p $ stations. Thus, on average, a station will be working on $ t/p $ tasks. If now the adversary decides to crash a station
just before it was to confirm its tasks, then this prolongs the overall execution by $ t/p $ rounds. Because there are $ f $ crashes, then at most $ tf/p $ rounds are additionally needed to finish.
However, it is also true that stations are capable of performing $ t $ tasks in $ \sqrt{t} $ rounds. This corresponds to the triangular \textsc{Two-Lists}-fashion of assigning tasks to stations.
In this view each crash enforces an additional step of the execution, what gives us the upper bound of around $ f + \sqrt{t} $ rounds.

All our algorithms undergo the same time bounds for actually performing tasks or suffering crashes as mentioned above. Additionally, $ \mathcal{O}(\sqrt{t}\log(p)) $ rounds are needed for \textsc{ROBAL} to select sets of leaders
throughout all the executions of the \textsc{Mix-And-Test} procedure. Consequently,  the expected running time of \textsc{ROBAL} is 
$\mathcal{O}\left(\frac{t}{p-f} + \min\left\{\frac{tf}{p}, f + \sqrt{t}\right\} + \sqrt{t}\log(p) \right)$.
Following the same reasoning, \textsc{GrubTEch} algorithm, apart from doing productive work in the presence of the adversary, will require additional time for the leader election mode, which is 
$\mathcal{O}\left(\frac{p}{p-f}\log(p)\right)$ in expectation.
The total expected running time of \textsc{GrubTEch} is therefore $\mathcal{O}\left(\frac{t}{p-f} + \min\left\{\frac{tf}{p}, f + \sqrt{t}\right\} + \frac{p}{p-f}\log(p) \right)$.
In \textsc{GILET} each transmission is confirmed by electing a leader, hence its expected running time is $\mathcal{O}\left(\left(\frac{t}{p-f} + \min\left\{\frac{tf}{p}, f + \sqrt{t}\right\}\right)\log^{2}(p)\right)$.

%
%

\paragraph{Remarks on energy complexity.}

Since our algorithms are randomized, it is also quite difficult to state tight bounds for the transmission energy used in executions.
Here by transmission energy we understand the total number of transmissions undertaken by stations during the execution.
Nevertheless, assuming that $ n $ denotes the number of operational stations
and there is a certain amount of work $ S $ accrued by some time of an execution of any of our algorithms, then $ S/\sqrt{n} $ is roughly (the upper bound on) the number of transmissions 
done by that time. 
This is because substantial parts of our algorithms are based on 
procedure \textsc{Groups-Together}, in which
roughly $\sqrt{n'}$ stations in a group transmit
in a round, out of at least $n'\ge n$ operational ones that contribute
to the total work $S$. 
However, our algorithms also strongly rely on different leader election type of procedures, therefore the total transmission energy cost in an execution 
may vary significantly.


}
}

\paragraph{Open problems.}
Further study of distributed problems and systems 
against ordered adversaries seems to be a natural future direction.
Another interesting area is to study various extensions of the \DA\ problem in the shared-channel setting, such as considering a dynamic model, where additional tasks may appear while algorithm execution,
partially ordered sets of tasks, or tasks with different lengths and deadlines. 
In other words, to develop scheduling theory on a shared channel prone to failures.
In all the above mentioned directions, including the one considered in this work,
one of the most fundamental questions arises: Is there a universally efficient
solution against the whole range of adversarial scenarios?
\dk{Different natures of adversaries and properties of algorithms discussed above
suggest that it may be difficult to design such a universally efficient algorithm.}

\remove{
Nevertheless, it is still interesting to tackle logarithmic factors in the upper bounds shown here.
}

}

\section*{Acknowledgements}

We would like to thank anonymous reviewers, whose detailed and accurate comments allowed to improve significantly the presentation of results.


\end{document}